\newcommand{\Diam}{\Delta}
\newcommand{\Rr}{\mathcal{R}}
\newcommand{\muR}{\mu_{\Rr}}
\newcommand{\Vorsepfam}{\mathcal{N}}
\newcommand{\Fopt}{\Ff_{\mathrm{OPT}}}
\newcommand{\Fheavy}{\Ff_{\mathrm{hv}}}
\newcommand{\epsloc}{\hat{\eps}}
\newcommand{\dmax}{d_{\max}}
\newcommand{\spn}{\mathrm{span}}
\newcommand{\td}[1]{\widetilde{#1}}
\newcommand{\tdFopt}{\widetilde{\Ff}_{\textrm{OPT}}}
\newcommand{\tdFheavy}{\widetilde{\Ff}_{\textrm{hv}}}
\newcommand{\tn}[1]{\tiny{#1}}
\newcommand{\sn}[1]{\scriptsize{#1}}
\newcommand{\enc}{\mathbf{enc}}
\newcommand{\exc}{\mathbf{exc}}
\newcommand{\Oh}{\mathcal{O}}
\newcommand{\eps}{\epsilon}
\newcommand{\poly}{\mathrm{poly}}
\newcommand{\dist}{\textrm{dist}}
\newcommand{\weight}{\mathbf{w}}
\newcommand{\Ss}{\mathcal{S}}
\newcommand{\Xfam}{\mathbb{X}}
\newcommand{\Xx}{\mathcal{X}}
\newcommand{\Ff}{\mathcal{F}}
\newcommand{\Cc}{\mathcal{C}}
\newcommand{\Dd}{\mathcal{D}}
\newcommand{\Pp}{\mathcal{P}}
\newcommand{\Uu}{\mathcal{U}}
\newcommand{\IntGraph}{\mathrm{IntGraph}}
\newcommand{\Ban}{\mathrm{Ban}}
\newcommand{\Exp}{\mathbb{E}}
\newcommand{\Prob}{\mathbb{P}}
\def\cqedsymbol{\ifmmode$\lrcorner$\else{\unskip\nobreak\hfil
\penalty50\hskip1em\null\nobreak\hfil$\lrcorner$
\parfillskip=0pt\finalhyphendemerits=0\endgraf}\fi} 
\newcommand{\cqed}{\renewcommand{\qed}{\cqedsymbol}}
\newtheorem{theorem}{Theorem}
\newtheorem{lemma}[theorem]{Lemma}
\newtheorem{corollary}[theorem]{Corollary}
\newtheorem{claim}[theorem]{Claim}
\title{Quasi-polynomial time approximation schemes for packing and covering problems in planar graphs}
\author{Micha\l{} Pilipczuk\thanks{Institute of Informatics, University of Warsaw, Poland, michal.pilipczuk@mimuw.edu.pl. The research of Mi.\ Pilipczuk was partially supported by Polish National Science Centre grant UMO-2013/11/D/ST6/03073,
and was a part of projects that have received funding from the European Research Council (ERC) under the European Union's Horizon 2020 research and innovation programme (grant agreement No.~677651).}
\and Erik Jan van Leeuwen\thanks{Department of Information and Computing Sciences, Utrecht University, The Netherlands, e.j.vanleeuwen@uu.nl}
\and 
Andreas Wiese\thanks{Department of Industrial Engineering and Center for Mathematical Modeling, Universidad de Chile, Chile, awiese@dii.uchile.cl. The research of A. Wiese is supported by the Millennium Nucleus Information and Coordination in Networks ICM/FIC RC130003 and by the grant Fondecyt Regular 1170223.}}
\begin{document}

\maketitle

\begin{abstract}
We consider two optimization problems in planar graphs.
In {\sc{Maximum Weight Independent Set of Objects}} we are given a graph $G$ and a family $\mathcal{D}$ of {\em{objects}}, each being a connected subgraph of $G$ with a prescribed weight, 
and the task is to find a maximum-weight subfamily of $\mathcal{D}$ consisting of pairwise disjoint objects.
In {\sc{Minimum Weight Distance Set Cover}} we are given an edge-weighted graph $G$, two sets $\mathcal{D},\mathcal{C}$ of vertices of $G$, where vertices of $\mathcal{D}$ have prescribed weights, and a nonnegative radius $r$.
The task is to find a minimum-weight subset of $\mathcal{D}$ such that every vertex of $\mathcal{C}$ is at distance at most $r$ from some selected vertex.
Via simple reductions, these two problems generalize a number of geometric optimization tasks, notably {\sc{Maximum Weight Independent Set}} for polygons in the plane and 
{\sc{Weighted Geometric Set Cover}} for unit disks and unit squares.
We present {\em{quasi-polynomial time approximation schemes}} (QPTASs) for both of the above problems in planar graphs:
 given an accuracy parameter $\epsilon>0$ we can compute a solution whose weight is within multiplicative factor of $(1+\epsilon)$ 
from the optimum in time $2^{\mathrm{poly}(1/\epsilon,\log |\mathcal{D}|)}\cdot n^{\mathcal{O}(1)}$, where $n$
is the number of vertices of the input graph. 
Our main technical contribution is to transfer the techniques used for recursive approximation schemes for geometric problems 
due to Adamaszek, Har-Peled, and Wiese~\cite{AW2013,adamaszek2014qptas,Har-Peled2014} to the setting of planar graphs. In particular, this yields a purely
combinatorial viewpoint on these methods.
\end{abstract}
\reversemarginpar
\marginnote{\includegraphics[width=0.12\textwidth]{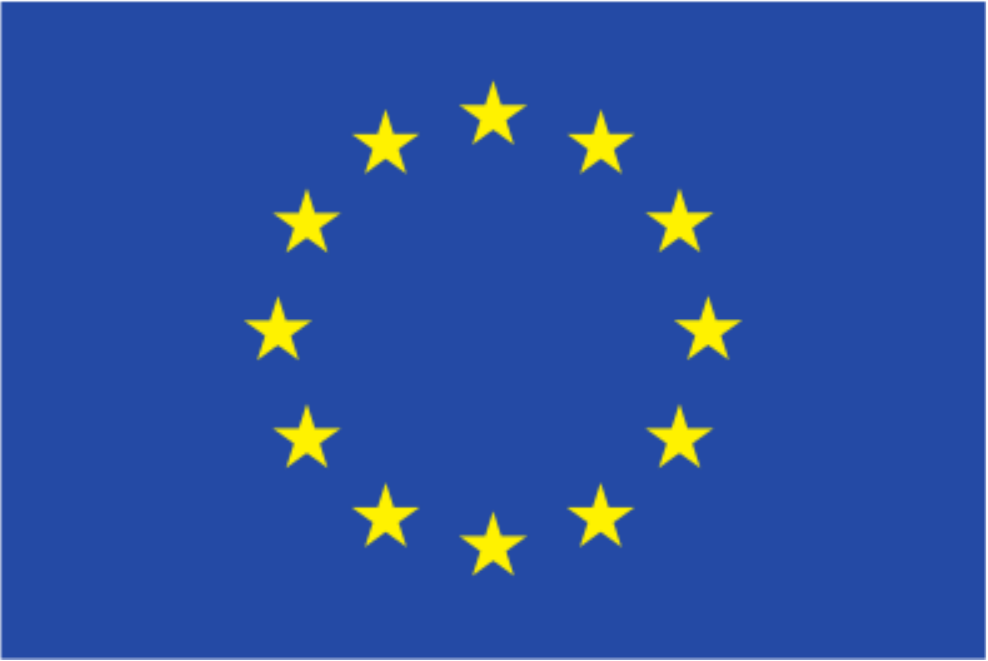}}[4.3cm]
\marginnote{\includegraphics[width=0.12\textwidth]{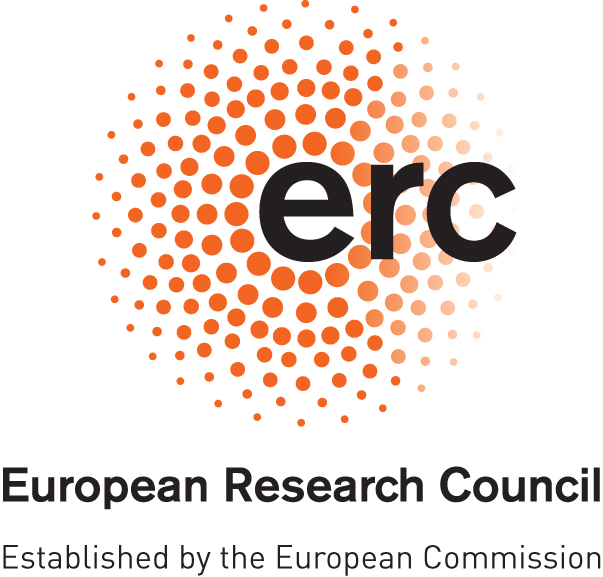}}[5.9cm]
\section{Introduction}\label{sec:intro}
\textsc{Independent Set} and \textsc{Dominating Set} are fundamental
optimization problems on graphs. Given a graph $G$ where each
vertex $v$ has a weight $\mathbf{w}(v)$, in \textsc{Independent Set} one seeks to find
a vertex subset $I\subseteq V(G)$ of maximum possible weight such that no two vertices
in $I$ are adjacent, whereas in \textsc{Dominating Set} one searches for a vertex subset $D$ of minimum possible weight such that each vertex $v\in V(G)$ is contained
in $D$ or adjacent to a vertex in~$D$. Even in the unit-weight setting, both problems are notoriously hard
to approximate and they are also $\mathsf{W}[1]$-hard, i.e., we do not expect that they admit {\em{fixed-parameter tractable}} ({\em{fpt}}) algorithms running in time $f(k)\cdot n^{\mathcal{O}(1)}$,
where $k$ is the expected solution size.

Therefore, special cases of the problems were investigated, for instance
the case where the input graph is planar. On planar graphs, classic layering techniques can be applied to show that both problems
admit EPTASs, i.e., $(1+\epsilon)$-approximation algorithms with
a running time of $f(1/\epsilon)n^{\mathcal{O}(1)}$ for some function $f$,
and fpt algorithms for the parameterization by the solution size, i.e., for a parameter $k$, algorithms
running in time $f(k)n^{\mathcal{O}(1)}$ that find a best solution among those of size at most $k$. 
Given these results, it is
natural to consider generalizations of the above problems on planar
graphs.

In this paper we study the {\sc{Distance Independent Set}} and
the {\sc{Distance Dominating Set}} problems. Given additionally a value
$r\in\mathbb{R}$ and weights on the edges of $G$, in the {\sc{Distance Independent Set}} problem we require that any two selected vertices in $I$ are
at distance larger than $r$ from each other, and in the {\sc{Distance Dominating Set}} problem we require that each vertex $v\in V(G)$ is at distance
at most $r$ from some vertex of $D$. Let us stress that we assume that $r$ is part of the input and in particular not assumed to be a constant;
in fact, for constant $r$ and unit edge weights, it is well-known that the same layering techniques easily yield EPTASs and fpt algorithms on planar graphs.
In the parameterized setting, both problems are $\mathsf{W}[1]$-hard even for unit weights; however, the trivial
$n^{\mathcal{O}(k)}$-time algorithms can be improved to $n^{\mathcal{O}(\sqrt{k})}$-time algorithms~\cite{MarxP15}.
These parameterized algorithms extend a technique originally developed to design quasi-polynomial time approximation schemes (QPTASs) for \textsc{Independent Set} and \textsc{Dominating Set} in the geometric (Euclidean) setting~\cite{AW2013,adamaszek2014qptas,Har-Peled2014}. The idea is to guess a sparse separator that has only small intersection with the optimal solution and that splits the problem into two roughly equal-sized subproblems, and then to solve the subproblems recursively. The natural question arises whether one can transfer the insights obtained in the parameterized setting back to approximation algorithms, and obtain approximation schemes for {\sc{Distance Independent Set}} and {\sc{Distance Dominating Set}} in planar graphs.

\subparagraph*{Our contribution.} In this paper
we show that this is indeed possible and we present the first quasi-polynomial time approximation schemes for {\sc{Distance Independent Set}} and {\sc{Distance Dominating Set}} on planar graphs when $r$ is part of the input.
In fact, we give QPTASs for two even more general problems, which we name \textsc{{Maximum Weight Independent Set of Objects}} ({\sc{MWISO}}) and \textsc{{Minimum
Weight Distance Set Cover}} ({\sc{MWDSC}}). In {\sc{MWISO}}
we are given a graph $G$ and a family $\mathcal{D}$ of {\em{objects}},
each being a connected subgraph of $G$ with a prescribed weight,
and the goal is to find a maximum-weight subfamily of $\mathcal{D}$ consisting
of pairwise disjoint objects. 
In {\sc{MWDSC}} we are given an edge-weighted graph $G$, subsets
of vertices $\mathcal{D}$ and $\mathcal{C}$ where vertices of $\mathcal{D}$ are weighted,
and radius $r\in\mathbb{R}$. The goal is to find a minimum-weight
subset $\mathcal{F}\subseteq\mathcal{D}$ that {\em{$r$-covers}} $\mathcal{C}$ in the
sense that each vertex of $\mathcal{C}$ is at distance at most $r$
from some vertex of $\mathcal{F}$. 
{\sc{MWISO}} generalizes {\sc{Distance Independent Set}} by taking $\mathcal{D}$ to be the family $\{\{v\colon \textrm{dist}(u,v)\leq r/2\}\colon u\in V(G)\}$ of all balls of radius $r/2$ in the graph,
while {\sc{MWDSC}} generalizes {\sc{Distance Dominating Set}} by taking $\mathcal{C}=V(G)$.
The following statements summarize our results.

\begin{theorem}\label{thm:main-mwis} The \textsc{{Maximum Weight
Independent Set of Objects}} problem in planar graphs admits a QPTAS
with running time $2^{\mathrm{poly}(1/\epsilon,\log N)}\cdot n^{\mathcal{O}(1)}$, where
$n$ is the vertex count of the input graph and $N=|\mathcal{D}|$ is the
number of objects in the input. 
\end{theorem}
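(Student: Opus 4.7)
The plan is to transfer the recursive separator framework of Adamaszek, Har-Peled, and Wiese~\cite{AW2013,adamaszek2014qptas,Har-Peled2014} from the Euclidean setting to planar graphs. The algorithm is a recursion: given a subinstance $(G', \Dd')$, we enumerate all candidate ``thin'' separators $S \subseteq V(G')$ of bounded complexity; for each candidate we discard the objects intersecting $S$, recursively solve on the connected pieces of $G' \setminus S$, and return the best combined solution. The recursion bottoms out when only $\Oh(\poly(1/\eps))$ objects remain (solved by brute force), and the depth is kept at $\Oh(\log N)$ by insisting that separators be \emph{balanced}.

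The technical heart is a structural lemma of the following shape: for every independent subfamily $\Fopt \subseteq \Dd'$ there exists a separator $S$ with three properties --- (i) $S$ is a union of at most $k = \poly(1/\eps, \log N)$ shortest paths in $G'$; (ii) the objects of $\Fopt$ intersecting $S$ have total weight at most $\eps \cdot \weight(\Fopt)$; and (iii) $V(G') \setminus S$ splits into two vertex sets with no edges between them, each carrying objects of $\Fopt$ of total weight at most $(1-\Omega(1)) \cdot \weight(\Fopt)$. To prove the lemma I would sample a small random subset $\Rr$ of objects (or vertices), form the \emph{graph Voronoi partition} $\vorprt(\Rr)$ in which each vertex is assigned to its closest sample under the graph metric, and let $S$ be the union of the inter-cell boundaries. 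Planarity implies that these boundaries decompose into $\Oh(|\Rr|)$ shortest paths between ``portals'' (vertices where three or more cells meet); a probabilistic charging argument in the spirit of $\eps$-net and cutting-lemma arguments from the geometric setting then shows that a sample of size $|\Rr|=\poly(1/\eps,\log N)$ yields, with positive probability, a separator that is simultaneously sparse, light against $\Fopt$, and balanced.

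Granted the structural lemma, enumeration is straightforward: a candidate separator is specified by at most $k$ pairs of endpoints in $V(G')$, so there are $n^{\Oh(k)}=2^{\poly(1/\eps,\log N)\cdot\log n}$ candidates, and by the lemma one of them witnesses the desired guarantee. Combined with recursion depth $\Oh(\log N)$, this yields total running time $2^{\poly(1/\eps,\log N)}\cdot n^{\Oh(1)}$. The approximation ratio $(1+\eps)$ follows by induction on recursion depth, absorbing a $(1+\eps/\Theta(\log N))$ loss at each of the $\Oh(\log N)$ levels and handling the base case exactly; at every level, applying the structural lemma to the restriction of $\OPT$ to the current subinstance certifies that the algorithm's enumeration covers a near-optimal separator.

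The main obstacle is the structural lemma itself. In the Euclidean case, Voronoi cells have well-understood geometric boundaries to which the Clarkson-Shor and $\eps$-net machinery applies directly; in the planar-graph setting one must instead define a suitable graph Voronoi diagram whose cells may be arbitrary (possibly non-simply-connected) subgraphs, prove that inter-cell boundaries decompose into $\Oh(|\Rr|)$ shortest paths (exactly where the planarity of $G'$ is essential), and port the probabilistic expected-complexity arguments into this combinatorial setting. This is precisely where the paper delivers its promised ``purely combinatorial viewpoint'' on the AHW techniques.
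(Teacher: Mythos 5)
Your high-level plan — recurse on a balanced separator that is sparse, light against $\Fopt$, and obtained from the graph Voronoi partition of a random sample — is indeed the architecture the paper uses, and the probabilistic charging argument you gesture at is in the right spirit of the Sampling Lemma. However, there is a concrete and fatal gap in the running-time accounting, and it points to the single most important piece of machinery you are missing.

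You specify a candidate separator by $k$ pairs of endpoints in $V(G')$, so your enumeration has $n^{\Oh(k)}$ candidates per recursion node. Since $k = \poly(1/\eps,\log N)$, this is $n^{\poly(1/\eps,\log N)}$, which is already strictly larger than the claimed $2^{\poly(1/\eps,\log N)}\cdot n^{\Oh(1)}$ — the exponent of $n$ grows with $1/\eps$ and $\log N$ instead of being an absolute constant. Multiplying further by the recursion tree size only makes this worse: with branching factor $n^{\Oh(k)}$ and depth $\Oh(\log N)$ you get $n^{\Oh(k\log N)}$ nodes, which does not collapse to $2^{\poly(1/\eps,\log N)}\cdot n^{\Oh(1)}$. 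The line "Combined with recursion depth $\Oh(\log N)$, this yields total running time $2^{\poly(1/\eps,\log N)}\cdot n^{\Oh(1)}$" is an arithmetic error. What your approach actually delivers is an $n^{\poly(1/\eps,\log N)}$-time algorithm, a genuinely weaker statement.

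The paper avoids this by \emph{not} enumerating shortest-path endpoints over all of $V(G)$. It enumerates \emph{Voronoi separators}: cyclic sequences $\langle p_1,u_1,f_1,v_1,\ldots,p_r,u_r,f_r,v_r\rangle$ where the $p_i$ range over objects of $\Dd$ (at most $N$ choices each) and the $f_i$ range over a precomputed set of $\Dd$-\emph{important} faces of size at most $N^4$ (Theorem~\ref{thm:important-enumeration}, taken from Marx and Pilipczuk). The fact that branching points of the Voronoi diagram of \emph{any} independent subfamily of $\Dd$ land in a single polynomial-in-$N$ set of faces — proved via the classification into singular faces of types 1, 2, and 3 — is exactly what turns the per-node enumeration into $N^{\Oh(s)}$ rather than $n^{\Oh(s)}$, and hence yields the stated running time. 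Your phrase "inter-cell boundaries decompose into $\Oh(|\Rr|)$ shortest paths between portals" is true but not enough: you would also need that these portals come from a polynomially-bounded, sample-independent candidate set, and that claim is the nontrivial content you would have to reprove.

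Two further, more minor omissions. First, the Separator Lemma requires every object to have weight at most $s^{-2}W$, otherwise the sampling probabilities $\weight(p)\cdot\ell/W$ can exceed $1$; the algorithm therefore first guesses the set of at most $s^2$ heavy objects by brute force, an $N^{s^2}$-factor branching you do not mention. Second, the separator in the paper is a set of \emph{objects} $\Xx\subseteq\Dd$ (namely $\Ban(S)$), acting on the intersection graph $\IntGraph(\Dd)$ rather than on $V(G)$; this is what makes the balance condition "every component of $\IntGraph(\Dd)\setminus\Xx$ carries at most $\tfrac{9}{10}W$ of $\Ff$" cleanly compose with the recursion. Your vertex-set formulation is morally equivalent but would require re-deriving the component structure of the object family from the vertex separator at every step.
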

\begin{theorem}\label{thm:main-mwds} The \textsc{{Minimum Weight
Distance Set Cover}} problem in planar graphs admits a QPTAS
with running time $2^{\mathrm{poly}(1/\epsilon,\log N)}\cdot n^{\mathcal{O}(1)}$, where
$n$ is the vertex count of the input graph and $N=|\mathcal{D}|$ is the
number of vertices allowed to be selected to the solution. 
\end{theorem}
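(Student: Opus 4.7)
The plan is to adapt the recursive Voronoi-separator framework---used by Marx and Pilipczuk in the parameterized setting and by Adamaszek, Har-Peled, and Wiese in the geometric setting---into an approximation scheme for covering in planar graphs. The covering structure of MWDSC naturally induces a Voronoi partition of the client set $\Cc$: in any feasible solution, each client may be assigned to its closest selected center in $\Dd$. On a planar graph, this partition has structural properties analogous to those of geometric Voronoi diagrams, which is what will allow us to find balanced and sparse separators.

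First, I would preprocess by rounding the weights so that they take only $\mathrm{poly}(N)$ distinct values, at a $(1 + \Oh(\eps))$ multiplicative cost. Next, I would distinguish the heavy part $\Fheavy$ of the optimum solution---centers of weight comparable to the heaviest chosen center---from the light part $\Fopt \setminus \Fheavy$, whose cumulative weight is a controlled fraction of $\OPT$. Since $|\Fheavy|$ is bounded by $\mathrm{poly}(1/\eps, \log N)$, the algorithm can afford to enumerate all possible choices of $\Fheavy$ up front.

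The core is a recursive decomposition. Each subproblem corresponds to a region of the planar embedding together with a set of clients still to be covered and boundary information recording which centers have been committed on the interface. To split a subproblem, I would find a balanced Voronoi separator with respect to the (unknown) optimum's partition: a simple closed curve in the embedding that crosses only $\mathrm{poly}(1/\eps, \log N)$ Voronoi cells and that balances the remaining client mass across the two sides. The algorithm enumerates all candidate separators together with the identities of the crossing centers, contributing a factor of $N^{\mathrm{poly}(1/\eps, \log N)}$ per recursion level. The recursion descends to depth $\Oh(\log N)$, and multiplying the factors together gives the claimed running time.

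The main obstacle is proving existence of a balanced separator with few crossings, and ensuring that once the interface centers are guessed the two subproblems become independent. Unlike MWISO---where objects are connected subgraphs naturally cut by a separator---in MWDSC a single center can reach clients at distance up to $r$ on both sides, so a naive decomposition loses independence. The structural lemma to prove is that for any fixed optimum there exists a balanced planar separator crossing only polylogarithmically many Voronoi cells; this should follow from a weighted planar separator argument applied to an auxiliary graph encoding the adjacency of the cells, combined with an averaging argument over candidate separators. The complementary bookkeeping challenge is to encode in the boundary information of each subproblem which clients inside the region are already covered by interface centers and which still require internal coverage; doing so correctly, while absorbing the light-part loss into the $(1+\eps)$ slack, yields the QPTAS.
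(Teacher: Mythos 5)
Your high-level plan matches the paper's architecture: exploit the Voronoi structure induced by the (unknown) optimum's centers, guess heavy centers, find a balanced Voronoi-style separator crossing few cells, and recurse to depth $\Oh(\log N)$. You also correctly identify the central difficulty, namely that in MWDSC a center can cover clients on both sides of a separator, so naively splitting destroys independence. However, your proposal does not actually resolve this difficulty; it only names it. The plan you sketch---``encode in the boundary information of each subproblem which clients inside the region are already covered by interface centers and which still require internal coverage''---is exactly the bookkeeping you would like to avoid, because it requires guessing the status of the interface centers and their covered clients, and it is not clear how to bound that information or how to reconcile it with the $(1+\eps)$ budget.

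The paper sidesteps this bookkeeping entirely with a clean structural observation. The Voronoi separator $S$ is chosen so that the centers it traverses are drawn from the solution $\Ff$ itself, and the split of the domain keeps $\Ban(S)$ on both sides, i.e.\ $\Dd_1\cap\Dd_2=\Ban(S)$. The key claim is then: if a client $c$ on side $1$ is $r$-covered in $\Ff$ only by some $p$ on side $2$ that is not banned, the shortest $c$--$p$ path must cross the separator at some vertex $w$, and for the traversed center $q$ that is closest to $w$ one has $\dist(w,q)<\dist(w,p)$ (precisely because $p$ is not banned), whence $\dist(c,q)<\dist(c,p)\le r$. Thus $c$ is already $r$-covered by $q\in\Ban(S)\subseteq\Dd_1$, and $\Ff\cap\Dd_1$ covers all of $\Cc_1$ without any explicit boundary tracking. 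The cost overhead is just $\weight(\Ff\cap\Dd_1\cap\Dd_2)\le\eps W$, accounted for as double-buying. Without some version of this argument---that the Voronoi property of the separator automatically re-routes cross-coverage through a banned center---your decomposition does not split into genuinely independent subproblems, and the proposal has a real gap at its crux.
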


To obtain our QPTASs for {\sc{MWISO}}, we extend the machinery
developed in \cite{AW2013,adamaszek2014qptas,Har-Peled2014} for optimization
problems in geometric settings to problems in planar graphs. The heart
of our technical contribution is to show that for any instance of
the above problems there is a set of candidate separators of polynomial size
such that one of them splits the given problem in a balanced way and
intersects only a tiny fraction of the given solution. The latter is important
since the intersected objects will be lost (in the case of {\sc{MWISO}}) or might be
paid twice (in the case of {\sc{MWDSC}}) and hence we need to bound their total weight
by $\epsilon \mathrm{OPT}$. We state here an informal version of our separator
lemma for the case of {\sc{MWISO}}.
\begin{lemma}[Informal]
\label{lem:breaking-informal} 
In polynomial time we can compute
a set $\mathbb{X}\subseteq2^{\mathcal{D}}$ of separators such that
for every solution $\mathcal{F}\subseteq \mathcal{D}$, say of weight $W$,
there exists $\mathcal{X}\in\mathbb{X}$ such that $\mathbf{w}(\mathcal{F}\cap\mathcal{X})\leq\epsilon W$
and in the intersection graph of $\mathcal{D}-\mathcal{X}$ each connected component
$\mathcal{C}$ satisfies $\mathbf{w}(\mathcal{C}\cap \mathcal{F})\le\frac{9}{10}W$. 
\end{lemma}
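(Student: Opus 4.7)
I would adapt the Adamaszek--Har-Peled--Wiese recursive-separator paradigm to the planar-graph setting, replacing their geometric cutting lines with \emph{combinatorial nooses}---closed curves in the planar embedding of $G$ formed by a bounded number of shortest paths. Fix a planar embedding of $G$ and, for each pair of vertices $u,v \in V(G)$, a canonical (e.g.\ lexicographically tie-broken) shortest path $\pi_{u,v}$. Populate the candidate family $\Xfam$ as follows: for every choice of three vertex-pairs $(u_1,v_1),(u_2,v_2),(u_3,v_3) \in V(G) \times V(G)$, add to $\Xfam$ the set $\Xx = \{O \in \Dd : V(O) \cap (V(\pi_{u_1,v_1}) \cup V(\pi_{u_2,v_2}) \cup V(\pi_{u_3,v_3})) \neq \emptyset\}$. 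Clearly $|\Xfam| = n^{\Oh(1)}$ and $\Xfam$ is constructible in polynomial time.

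For a fixed solution $\Ff$ of weight $W$, I would argue the existence of a good $\Xx \in \Xfam$ by contracting each object $O \in \Ff$ in $G$ to a single super-vertex of weight $\weight(O)$. Since the objects in $\Ff$ are pairwise disjoint connected subgraphs of $G$, planarity is preserved and the resulting planar weighted graph $G_{\Ff}$ has total super-vertex weight $W$. By the planar three-shortest-path separator theorem (in the tradition of Thorup / Klein / Fakcharoenphol--Rao), $G_{\Ff}$ admits a balanced vertex separator that is a union of three shortest paths, whose removal breaks $G_{\Ff}$ into components of super-vertex weight at most $\tfrac{2}{3}W$. Un-contracting yields shortest paths $\pi_1,\pi_2,\pi_3$ in $G$ and an associated $\Xx \in \Xfam$; its key property is that no object of $\Dd \setminus \Xx$ can straddle the noose (any such object would share a vertex with some $\pi_i$ and hence lie in $\Xx$), so every connected component of the intersection graph of $\Dd \setminus \Xx$ is confined to a single region cut out by $\pi_1 \cup \pi_2 \cup \pi_3$ and therefore meets $\Ff$ in total weight at most $\tfrac{9}{10}W$.

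The main technical obstacle is to additionally enforce $\weight(\Ff \cap \Xx) \leq \eps W$: the three-shortest-path theorem gives balance but not lightness, and the chosen paths may cross many heavy objects of $\Ff$. I would address this with a shifting/averaging argument in the spirit of \cite{AW2013,adamaszek2014qptas,Har-Peled2014}: (i) first guess and peel off the $\Oh(1/\eps)$ heaviest objects of $\Ff$---the resulting $N^{\Oh(1/\eps)}$ branching factor is absorbed by the QPTAS's $2^{\poly(1/\eps,\log N)}$ running time---so that each remaining object has weight at most $\eps W$; (ii) enlarge $\Xfam$ with separators derived from polynomially many \emph{perturbed} shortest paths, e.g.\ by shifting endpoints or perturbing edge weights in narrow windows; and (iii) apply a Markov-type bound over these perturbations to deduce that some choice crosses only $\eps W$ of $\Ff$-weight. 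Reconciling balance and lightness simultaneously---so that within the polynomial family $\Xfam$ some $\Xx$ is always both balanced and light with respect to any given $\Ff$---is the combinatorial crux of the argument and is precisely the planar analogue of the geometric line-shifting carried out in \cite{AW2013,adamaszek2014qptas,Har-Peled2014}.
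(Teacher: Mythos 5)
Your proposal takes a genuinely different route from the paper's, but it contains gaps that the proposed fixes do not actually close.

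\textbf{What the paper does.} The paper replaces geometric polygonal separators not with Klein/Thorup fundamental-cycle separators but with \emph{Voronoi separators}: closed curves alternating between objects of $\Dd$ and faces of $G$, joined by shortest paths (spokes). The key enabling facts are (i) the Marx--Pilipczuk theorem (Theorem 4.7 of \cite{MarxP15}) that the branching points of the Voronoi diagram of \emph{any} independent subfamily of $\Dd$ lie among a fixed polynomial-size set of $\Dd$-important faces, so candidate separators can be enumerated without knowing $\Ff$; (ii) a probabilistic Sampling Lemma showing that a random weighted sub-sample $\Ss\subseteq\Ff$ of size $O(s^2)$ has, with positive probability, a Voronoi diagram in which every spoke and every diamond has weight $\le 10\ln\ell\cdot W/\ell$ with respect to $\Ff$; and (iii) a sphere-cut decomposition of that diagram of width $O(\sqrt{\ell}) = O(s)$, which yields a balanced noose of length $O(s)$. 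Lightness and balance are thus obtained \emph{simultaneously} from the sampling, not by a post-hoc perturbation.

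\textbf{Where your proposal breaks down.} First, shortest paths in the contracted graph $G_{\Ff}$ are not shortest paths in $G$: a shortest path in $G_{\Ff}$ may traverse contracted super-vertices for free and so, after un-contraction, is a walk in $G$ that is not any canonical $\pi_{u,v}$. Hence the separator you obtain from the fundamental-cycle / three-shortest-path theorem on $G_{\Ff}$ is not, in general, contained in your enumerated family $\Xfam$. Since $\Ff$ is unknown at construction time, you cannot simply add the $G_{\Ff}$-shortest paths either. Second, even if one restricts attention to the path segments between consecutive objects of $\Ff$ hit by the separator, the separator decomposes into a number of segments that is not $O(1)$ but potentially linear in $|\Ff|$, so a polynomial enumeration over triples of vertex pairs does not capture it; the paper solves exactly this by bounding the separator length by $O(s) = \poly(1/\eps)$ and enumerating only over $\Dd$-important faces, yielding an $N^{O(s)}$-size family (polynomial for fixed $\eps$, but not $n^{O(1)}$). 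Third, and most substantively, you correctly identify that obtaining $\weight(\Ff\cap\Xx)\le\eps W$ is the crux, but the proposed remedy---perturbing endpoints or edge weights plus a Markov bound---is not developed into an actual mechanism. There is no reason a small family of perturbed shortest paths should include one crossing only an $\eps$-fraction of $\weight(\Ff)$: a single shortest path in $G$ may have to cross $\Omega(W)$ worth of objects for every choice of endpoints in a window. The paper's sampling argument is precisely the missing piece: by choosing the separator inside the Voronoi diagram of a carefully sampled $\Ss\subseteq\Ff$, each spoke of that diagram is confined to a single Voronoi cell of $\Ss$, and the randomness of $\Ss$ (not of the paths) is what controls conflict weight via a first-moment bound over the $\Dd$-important singular faces. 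That idea, together with the enumeration of $\Dd$-important faces, is not reconstructible from your sketch.
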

Using Lemma~\ref{lem:breaking-informal} as abstraction for finding
 separators, we can apply the same recursive scheme as~\cite{AW2013,adamaszek2014qptas,Har-Peled2014}:
we guess the correct separator $\mathcal{X}\in\mathbb{X}$, construct a subproblem for each connected component of the intersection graph of $\mathcal{D}-\mathcal{X}$, and recurse in
each of them up to recursion depth $\mathcal{O}(\log |\mathcal{D}|)$.
Thus, the only part of the reasoning that uses planarity is Lemma~\ref{lem:breaking-informal}.

The proof of Lemma~\ref{lem:breaking-informal} follows the reasoning
of Har-Peled~\cite{Har-Peled2014}. The idea is to prove the following
auxiliary result: for the optimal solution $\mathcal{F}$ (and in fact for any feasible solution) there exists a
separator of length roughly $s=\mathcal{O}(\frac{1}{\epsilon}\ln\frac{1}{\epsilon})$ that cuts through at most an $\epsilon$-fraction
of the weight of $\mathcal{F}$ and splits the weight of $\mathcal{F}$ in a balanced
way. Lemma~\ref{lem:breaking-informal} then follows by enumerating
all candidates for such separators. In~\cite{Har-Peled2014}, the
separator was simply a polygon with roughly $s$ vertices. We lift
this concept to planar graphs using {\em{Voronoi separators}}
as in the work of Marx and Pilipczuk~\cite{MarxP15}. Intuitively,
a Voronoi separator of length $r$ is an alternating cyclic sequence
of $r$ objects from $\mathcal{D}$ and $r$ faces of the graph, connected
by shortest paths in order to form a closed curve; this curve splits
the instance into two subinstances. Thus, shortest paths in the graph
are the analogues of segments in the plane.

The auxiliary result is proved in~\cite{Har-Peled2014} by showing
that if $\mathcal{S}$ is a sample of size roughly $s^{2}$ from $\mathcal{F}$, where
each object is sampled independently with probability proportional
to its weight, then a balanced separator of length $s$ in the Voronoi
diagram of $\mathcal{S}$ satisfies all the required properties with high
probability. We follow the same reasoning, however again we need to
properly understand how geometric concept used in~\cite{Har-Peled2014}
--- {\em{spokes}} and {\em{corridors}} --- should be interpreted
in planar graphs. Here, the technical toolbox for Voronoi diagrams
and Voronoi separators developed in~\cite{MarxP15} becomes very
useful. In particular, it turns out that a fine understanding of what
faces are candidates for branching points of a Voronoi diagram, provided
in~\cite{MarxP15}, is essential to make the probabilistic argument
work. Let us remark that we also somewhat simplify the original argument
of Har-Peled by replacing the Exponential Decay Lemma with a direct
probabilistic calculation. 

To give the QPTAS for {\sc{MWDSC}} we provide a variant of Lemma~\ref{lem:breaking-informal}
suitable for this problem and then follow a similar
recursive scheme as for Theorem~\ref{thm:main-mwis}. It is nice
that we can reuse the above-mentioned auxiliary result introduced for Lemma~\ref{lem:breaking-informal}
as a black-box, so the proof of the variant is relatively short. As
in~\cite{MustafaRR15}, the difference is that in the recursion instead
of removing the guessed separator we preserve it in all the recursive
subcalls, thus allowing double-buying objects from it. 

\subparagraph*{Geometric problems.} The above recursive machinery based on balanced separators was first
introduced for obtaining a QPTAS for \textsc{{Maximum Weight Independent
Set of Rectangles}} in the two-dimensional plane \cite{AW2013} and
then extended for getting QPTASs for \textsc{{Maximum Weight Independent
Set of Polygons}} \cite{adamaszek2014qptas,Har-Peled2014} and \textsc{{Weighted Geometric Set Cover}} ({\sc{WGSR}}) for pseudo-disks~\cite{MustafaRR15}.
We prove that Theorems~\ref{thm:main-mwis} and~\ref{thm:main-mwds} generalize these results, with the exception that for {\sc{WGSR}} we can treat only the cases of unit disks and axis-parallel unit squares, instead
of general families of pseudo-disks.
In Appendix~\ref{app:geometric} we explain how to derive the mentioned results from our theorems. 

We would like to comment that it is possible to reduce \textsc{MWISO} to \textsc{{Maximum Weight Independent
Set of Polygons}} and obtain a QPTAS for \textsc{MWISO} in this way: take the input graph and compute a 
straight line embedding for it. For each $p\in \mathcal{D}$ compute a spanning tree $T_p$ and define a polygon $P_p$ that consists of the edges
of $T_p$. Now two polygons $P_p, P_{p'}$ overlap if and only if their corresponding objects $p,p'$ overlap. Applying the QPTAS for 
\textsc{{Maximum Weight Independent Set of Polygons}} \cite{Har-Peled2014} to the resulting instance thus yields a QPTAS for \textsc{MWISO}.
However, we believe that our QPTAS for \textsc{MWISO} is simpler in the sense that it works with the planar graph directly.
Also note that such an approach does not work for {\sc{MWDSC}}.

\section{Proof of the Separator Lemma for {\sc{MWISO}}}\label{sec:voronoi}
In this section we prove the Separator Lemma for {\sc{MWISO}}, which was informally stated as Lemma~\ref{lem:breaking-informal} and is formally stated below.
For a family $\mathcal{D}$ of objects, $\mathrm{IntGraph}(\mathcal{D})$ denotes the {\em{intersection graph}} of $\mathcal{D}$: graph with vertex set $\mathcal{D}$ where two objects are adjacent iff they intersect.
The reader may think of $\mathcal{F}$ being the optimal solution and of $W$ being its weight.
\begin{lemma}[Separator Lemma for {\sc{MWISO}}]\label{lem:breaking}
Let $G$ be an $n$-vertex planar graph and $\mathcal{D}$ be a weighted family of $N$ objects in $G$.
Let $0<\epsilon<\frac{1}{10}$ and denote $s=10^3\cdot\frac{1}{\epsilon} \ln \frac{1}{\epsilon}$.
Then there exists a family $\mathbb{X}$ consisting of subsets of $\mathcal{D}$ with the following properties:
\begin{enumerate}[label=(A\arabic*),ref=(A\arabic*)]
\item\label{p:osl-size} $|\mathbb{X}|\leq 6^{3s}N^{15s}$ and $\mathbb{X}$ can be computed in time $N^{\mathcal{O}(s)}\cdot n^{\mathcal{O}(1)}$; and
\item\label{p:osl-sep}  for every real $W\geq 0$ and subfamily $\mathcal{F}\subseteq \mathcal{D}$ of pairwise disjoint objects such that $\mathbf{w}(\mathcal{F})\leq W$ and $\mathbf{w}(p)\leq s^{-2}W$ for each $p\in \mathcal{F}$,
there exist $\mathcal{X}\in \mathbb{X}$ such that $\mathbf{w}(\mathcal{F}\cap \mathcal{X})\leq \epsilon W$ and for every connected component $\mathcal{C}$ of $\mathrm{IntGraph}(\mathcal{D})\setminus\mathcal{X}$ we have $\mathbf{w}(\mathcal{C}\cap \mathcal{F})\leq \frac{9}{10}W$.
\end{enumerate}
\end{lemma}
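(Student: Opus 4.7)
Following Har-Peled's recursive approach \cite{Har-Peled2014} lifted to planar graphs via the Voronoi separator toolbox of Marx and Pilipczuk \cite{MarxP15}, the proof splits into a deterministic enumeration that defines $\mathbb{X}$ and a probabilistic existence argument that shows some element of $\mathbb{X}$ works for any given $\mathcal{F}$.

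For the enumeration, I would let $\mathbb{X}$ contain one subset of $\mathcal{D}$ per candidate Voronoi separator of length $s$. A candidate is a cyclic alternating sequence of $s$ sites from $\mathcal{D}$ together with $s$ branching faces, where each branching face is encoded by the constant-size combinatorial data (nearby sites plus shortest-path edges) from which a Voronoi branch point arises in the planar-graph Voronoi diagrams of \cite{MarxP15}. From such a sequence one reconstructs, in $n^{\mathcal{O}(1)}$ time, shortest paths between consecutive sites that pass through the prescribed branch faces, glues them into a closed curve $\gamma$, and sets $\mathcal{X}$ to be the set of objects in $\mathcal{D}$ that $\gamma$ meets. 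Counting sequences and the $\mathcal{O}(1)$-factor blowup per branch face yields the bound $|\mathbb{X}|\leq 6^{3s}N^{15s}$, establishing property (A1).

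For property (A2), given $\mathcal{F}$, I would sample a random multiset $\mathcal{S}\subseteq \mathcal{F}$ of size $\Theta(s^2)$ by choosing each object independently with probability proportional to its weight; the per-object bound $\mathbf{w}(p)\leq s^{-2}W$ ensures that $\mathcal{S}$ is not concentrated on a single object. I would then form the Voronoi diagram of $\mathcal{S}$ in $G$ as in \cite{MarxP15} and apply a planar separator argument to its dual to extract a cyclic alternating sequence of $s$ cells and $s$ branch faces whose associated closed curve $\gamma$ partitions the embedding into two sides, each containing at most $\tfrac{9}{10}W$ of $\mathbf{w}(\mathcal{F})$. Letting $\mathcal{X}$ be the objects of $\mathcal{D}$ intersected by $\gamma$, the corresponding candidate is one of those enumerated above, so $\mathcal{X}\in\mathbb{X}$ provided we can show that $\mathbf{w}(\mathcal{F}\cap\mathcal{X})\leq \epsilon W$.

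The hard part is precisely this last bound, and this is where the planar-graph Voronoi machinery is essential. I would adapt Har-Peled's spokes-and-corridors estimate to the combinatorial setting: every object $p\in\mathcal{F}$ touched by $\gamma$ can be charged to a corridor bounded by shortest paths from the Voronoi sites flanking a branch face, and such a corridor can exist only if $\mathcal{S}$ fails to hit a surrounding region whose total $\mathcal{F}$-weight is an appropriate multiple of $\mathbf{w}(p)$. The characterization from \cite{MarxP15} that each branch face is determined by a bounded number of nearby sites is what allows this charging to go through without any reference to Euclidean geometry. A direct tail estimate, replacing the Exponential Decay Lemma of \cite{Har-Peled2014} by a straightforward summation, then bounds the expected cut weight by $\mathcal{O}(W\log(1/\epsilon)/s)\leq \tfrac{\epsilon}{2}W$ for $s=10^3\epsilon^{-1}\ln(1/\epsilon)$; combined with Markov's inequality on the cut weight and a union bound with the balance event, this gives positive probability that all parts of (A2) hold simultaneously, so the required $\mathcal{X}\in\mathbb{X}$ exists.
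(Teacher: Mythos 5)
Your plan is essentially the one the paper follows: enumerate candidate Voronoi separators using the branch-face characterization from Marx--Pilipczuk, sample a $\Theta(s^2)$-size $\mathcal S$ from $\mathcal F$ with weight-proportional probabilities, build its Voronoi diagram, extract a balanced noose via a planar-separator argument (the paper uses a sphere-cut decomposition, but explicitly notes that Miller's separator on the radial graph — your phrasing — is an equivalent route), and then charge each object cut by the noose to a spoke or corridor of the diagram whose ``conflict set'' would have had to be entirely missed by $\mathcal S$. The direct probability calculation replacing the Exponential Decay Lemma is also exactly the paper's move.

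One piece of your write-up is imprecise enough to warrant a flag: you say you would bound the \emph{expected} cut weight by $\mathcal O(W\log(1/\epsilon)/s)$ and then finish with ``Markov's inequality on the cut weight and a union bound with the balance event.'' The noose whose cut weight you need to control is chosen \emph{after} seeing $\mathcal S$ (it is the balanced edge of the sphere-cut decomposition of $H_{\mathcal S}$), so ``the cut weight'' is not a well-defined random variable to which you can apply Markov. What the paper actually does — and what your own corridor-charging description is implicitly set up to do — is prove a \emph{uniform} statement: with positive probability, \emph{every} spoke and \emph{every} diamond of $H_{\mathcal S}$ has $\mathcal F$-weight at most $10\ln\ell\cdot W/\ell$. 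This is a union bound over the at most $|\mathcal F|^{\mathcal O(1)}$ candidate spokes and diamonds (the per-candidate probability is made small enough for this by the $\ell^{-7}$-type factor coming from the $\exp(-10\ln\ell)$ term), after which the $\epsilon W$ bound on the banned weight of \emph{any} noose of length $\mathcal O(\sqrt\ell)$ follows deterministically. Balance is then also deterministic from the sphere-cut decomposition (modulo the leaf-diamond condition, which is another consequence of the uniform bound), so there is no separate ``balance event'' to union-bound against. If you replace your Markov step with this uniform union bound over spokes and diamonds — which your charging argument is already pointing at — your proof coincides with the paper's.

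Two further minor remarks. First, your noose enumeration should allow length up to $3s$, not $s$: the sphere-cut width bound $\sqrt{4.5|V(H_{\mathcal S})|}\le 3\sqrt\ell=3s$ is what forces the constant $3$ in the bound $6^{3s}N^{15s}$. Second, you set $\mathcal X$ to be the objects the curve $\gamma$ meets, whereas the paper uses the (larger) set $\mathrm{Ban}(S)$ of all objects in conflict with the shortest paths making up $S$; both choices work — any object not touching $\gamma$ lies wholly on one side, so components still split — but $\mathrm{Ban}(S)$ is the set whose weight the spoke bound directly controls, so the paper's choice makes the bookkeeping cleaner.
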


The plan is as follows.
We first recall the toolbox of {\em{Voronoi separators}}, introduced by Marx and Pilipczuk~\cite{MarxP15}.
This allows us to state a stronger lemma, phrased as the existence of a short Voronoi separator appropriately breaking $\mathcal{F}$.
We then show how Lemma~\ref{lem:breaking} follows from this stronger result and subsequently prove the stronger result.

Before we proceed, let us set up the notation and basic assumptions about the input.
Let $G$ be the input graph.
We assume the edges of $G$ are assigned positive weights\footnote{Obviously, edge weights are immaterial for the {\sc{MWISO}} problem. However, it is convenient to think of $G$ as
edge-weighted so that we can define Voronoi diagrams. Furthermore, many results of this section will be reused for the {\sc{MWDSC}} problem, where edge weights play a role in the problem.}
so that we have the shortest-path metric in $G$: $\textrm{dist}(u,v)$ denotes the shortest length of a path connecting $u$ and $v$ in $G$.
We may assume that $G$ is given with an embedding in a sphere $\Sigma$ and that it is triangulated; that is, every face of $G$ is a triangle.
Indeed, adding edges of infinite weight to triangulate the graph neither distorts the metric nor spoils the connectivity of the objects.
Also, for every face $f$ of $G$ we fix any its internal point $c$ to be its {\em{center}}, and for each vertex $u$ of $f$ we fix some curve within $f$ with endpoints $u$ and $c$ 
to be the {\em{segment}} connecting $u$ and $c$
so that segments connecting vertices of $f$ with $c$ pairwise do not cross.

We assume that shortest paths are unique and the distances between pairs of vertices are pairwise different:
for every pair of vertices $u,v$ there is a unique shortest path connecting $u$ and $v$ and for $\{u,v\}\neq \{u',v'\}$ we have $\textrm{dist}(u,v)\neq \textrm{dist}(u',v')$. 
This can be ensured by using lexicographic tie-breaking rules and it increases the running time only by polynomial factors.

We are given a family $\mathcal{D}$ of {\em{objects}} in $G$, where each object $p\in \mathcal{D}$ is a nonempty, connected subgraph of $G$.
For any vertex $u$ of $G$ and any object $p\in \mathcal{D}$, let $\textrm{dist}(u,p)$ be the length of the shortest path connecting $u$ with any vertex of $p$.
For each object $p\in \mathcal{D}$, we fix any spanning tree $T(p)$ of $p$.
We also assume that the family $\mathcal{D}$ is weighted: every object $p\in \mathcal{D}$ is assigned a nonnegative real weight $\mathbf{w}(p)$. 
For $\mathcal{F}\subseteq \mathcal{D}$ we denote $\mathbf{w}(\mathcal{F})=\sum_{p\in \mathcal{F}} \mathbf{w}(p)$.

\subsection{Basic toolbox}

\subparagraph*{Voronoi partitions and diagrams.}
A subfamily $\mathcal{F}\subseteq \mathcal{D}$ is {\em{independent}}
if objects in $\mathcal{F}$ are pairwise vertex-disjoint.
Such an independent subfamily $\mathcal{F}$ induces the {\em{Voronoi partition}} $\mathbb{M}_\mathcal{F}$, which is a partition of the vertex set of $G$ into $|\mathcal{F}|$ parts according to the closest object from $\mathcal{F}$.
Precisely, for $p\in \mathcal{F}$, we say that a vertex $u$ belongs to the {\em{Voronoi cell}} $\mathbb{M}_\mathcal{F}(p)$ if $\textrm{dist}(u,p)<\textrm{dist}(u,p')$ for any $p'\in \mathcal{F}$, $p'\neq p$. Observe that ties do not happen due to distinctness
of distances in $G$.
We note that Marx and Pilipczuk consider in~\cite{MarxP15} a more general notion of a {\em{normal subfamily}}, but we do not need this generality here.

\begin{figure}[h!]
        \centering
        \begin{subfigure}[b]{0.46\columnwidth}
                \centering
                \def\svgwidth{0.98\columnwidth}
                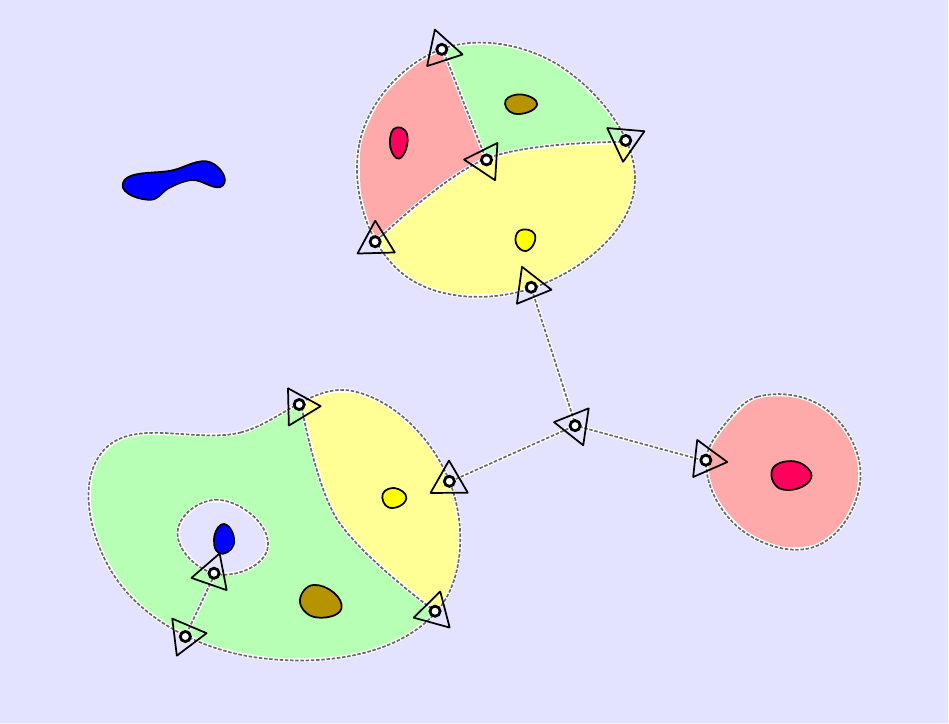
                \caption{Voronoi partition and diagram}
        \end{subfigure}
        \quad
        \begin{subfigure}[b]{0.46\columnwidth}
                \centering
                \def\svgwidth{0.98\columnwidth}
                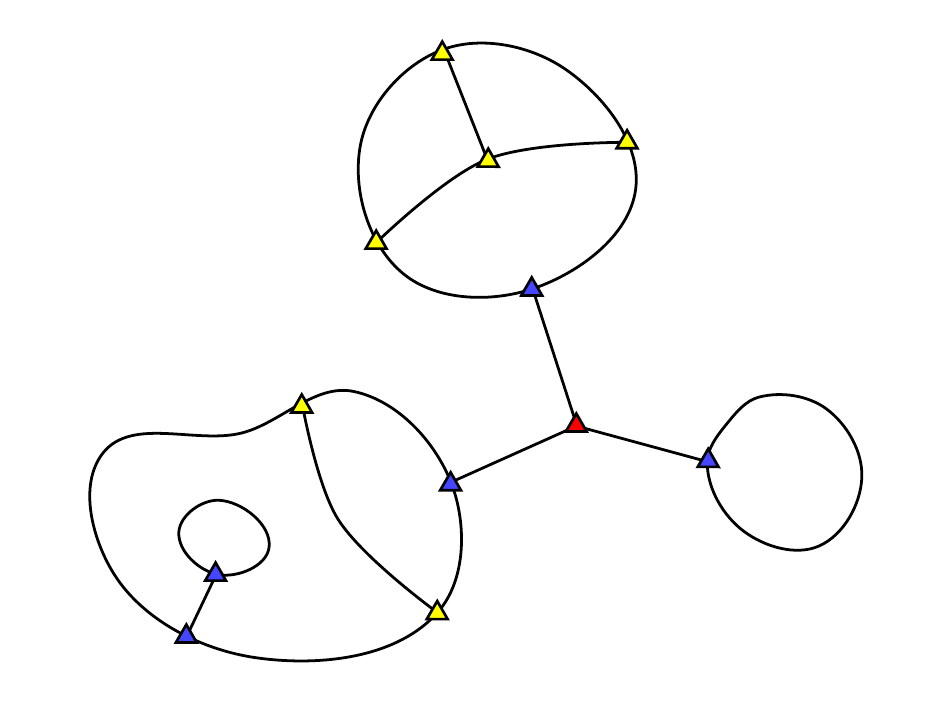
                \caption{Voronoi diagram, abstractly}
        \end{subfigure}
\caption{Construction of the Voronoi diagram. The left panel depicts the Voronoi partition, branching points of the diagram are depicted as solid (primal) triangular faces, edges of the diagram are dashed.
The right panel depicts the Voronoi diagram alone; note that it is a planar 3-regular multigraph. Branching points of types 1, 2, and 3 are depicted in yellow, blue, and red, respectively.
The figures are taken from~\cite{MarxP15} with consent of the authors; the left one is slightly modified.}\label{fig:diagram}
\end{figure}

Assuming $|\mathcal{F}|\geq 4$, we define the {\em{Voronoi diagram}} induced by $\mathbb{M}_\mathcal{F}$ as follows; see Figure~\ref{fig:diagram}
for a visualization.
First, observe that every Voronoi cell $\mathbb{M}_\mathcal{F}(p)$, for $p\in \mathcal{F}$, induces a connected subgraph of $G$ that contains $p$ entirely (see Lemmas 4.1 and 4.2 in~\cite{MarxP15}).
Extend $T(p)$ to a spanning tree $\widehat{T}(p)$ of $G[\mathbb{M}_\mathcal{F}(p)]$ by adding, for each vertex $u$ of $\mathbb{M}_\mathcal{F}(p)$ that is not in $p$, the shortest path from $u$ to $p$.
Take the dual of $G$ and remove all the edges dual to the edges of $\widehat{T}(p)$, for all $p\in \mathcal{F}$.
Then exhaustively remove vertices of degree one, and finally replace each maximal path with internal vertices of degree~$2$ (so-called {\em{$2$-path}})
by a single edge; the embedding of this edge is defined as the union of embeddings of edges of $G$ comprising the original $2$-path.
Thus, we obtain a connected, $3$-regular plane multigraph, called the {\em{Voronoi diagram}} of $\mathcal{F}$, whose faces bijectively correspond to the cells $\mathbb{M}_\mathcal{F}(p)$ for $p\in \mathcal{F}$.
More precisely, every face of the Voronoi diagram of $\mathcal{F}$ is associated with a different object $p\in \mathcal{F}$ so that all the vertices of $\mathbb{M}_\mathcal{F}(p)$ are contained in this face.
The $3$-regularity of the diagram follows from the assumption that $G$ is triangulated. From Euler's formula it follows that if $|\mathcal{F}|=k$, then $H$ has $k$ faces, $2k-4$ vertices, and $3k-6$ edges.
See Lemmas 4.4 and 4.5 of~\cite{MarxP15} for a formal verification of these assertions, and Section 4.4 of~\cite{MarxP15} for a detailed description of the~construction.

\subparagraph*{Branching points.} If $H$ is the Voronoi diagram of an independent subfamily $\mathcal{F}\subseteq \mathcal{D}$, then $H$ is constructed from a subgraph of the dual of $G$ by contracting maximal $2$-paths.
Hence, vertices of $H$ correspond to faces of $G$.
These primal faces, equivalently dual vertices, are called the {\em{branching points}} of the diagram $H$; intuitively, these are faces where the boundaries of Voronoi cells meet nontrivially.
A priori, every face of $G$ could be a branching point of the Voronoi diagram of some independent subfamily $\mathcal{F}\subseteq \mathcal{D}$. However, in~\cite{MarxP15} it is proved that the number of
candidates for branching points can be bounded polynomially in~$|\mathcal{D}|$.

\begin{theorem}[Theorem 4.7 of~\cite{MarxP15}]\label{thm:important-enumeration}
There exists a family $I$ of faces of $G$ with $|I|\leq |\mathcal{D}|^4$ such that the following holds: 
for every independent subfamily of objects $\mathcal{F}\subseteq \mathcal{D}$, all branching points of the Voronoi diagram of $\mathcal{F}$ are contained in $I$.
Moreover, $I$ can be computed in time polynomial in $|\mathcal{D}|$ and $n$.
\end{theorem}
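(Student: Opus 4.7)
The plan is to identify each branching point with a bounded-size combinatorial certificate whose entries are objects from $\Dd$, and then take $I$ to be the set of all faces singled out by such certificates. Enumerating over ordered $4$-tuples of objects then gives the bound $|I|\le|\Dd|^4$.

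First I would record a local characterization of branching points. Since $G$ is triangulated, a face $f=(u_1,u_2,u_3)$ is a branching point of the Voronoi diagram of an independent $\Ff\subseteq\Dd$ if and only if its three vertices lie in three pairwise distinct Voronoi cells $\vorprt_{\Ff}(p_1),\vorprt_{\Ff}(p_2),\vorprt_{\Ff}(p_3)$ for distinct objects $p_1,p_2,p_3\in\Ff$. This is immediate from the construction of the Voronoi diagram: after suppression of $2$-paths, exactly the primal faces that see at least three different cells around them survive as dual vertices of degree $\ge 3$, and by distinctness of distances the degree is exactly three. Thus every branching point is witnessed by an ordered triple $(p_1,p_2,p_3)\in\Ff^3$ of closest objects at $(u_1,u_2,u_3)$.

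Next I would exploit the rigidity of Voronoi cells: the cell $\vorprt_{\Ff}(p)$ contains the entire shortest path from any of its vertices to $p$ (as used in Lemmas~4.1--4.2 of~\cite{MarxP15}). Consequently the three shortest paths $\pi_i$ from $u_i$ to $p_i$ in $G$ are contained in their respective cells and meet only at $f$. Since shortest paths are unique in $G$, the paths $\pi_i$ depend only on $(u_i,p_i)$ and not on $\Ff$. This crucial observation shows that the face $f$ is already an intrinsic geometric feature of $G$ together with the triple $(p_1,p_2,p_3)$, modulo the combinatorial choice of which meeting point of the three shortest paths is relevant. The enumeration step is then as follows: for each ordered quadruple $(p_1,p_2,p_3,p_4)\in\Dd^4$ one defines a candidate face $f(p_1,p_2,p_3,p_4)$ by a canonical rule based on shortest-path computations in $G$, where $p_1,p_2,p_3$ play the role of the guessed closest objects and $p_4$ acts as a tiebreaker pinning down which of potentially many triple-cell meeting points is relevant. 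Showing that every branching point arises in this way, and that each $f(p_1,p_2,p_3,p_4)$ is computable in polynomial time from shortest-path data in $G$, yields $|I|\le|\Dd|^4$ together with the promised construction time.

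The main obstacle is making the tiebreaker role of $p_4$ precise. A priori, for a fixed triple $(p_1,p_2,p_3)$ the three pairwise Voronoi boundaries (considered with respect to $\{p_1,p_2,p_3\}$ alone) may meet at several faces of $G$, and one must identify a canonical object $p_4\in\Dd$ that distinguishes the correct meeting point corresponding to a given $\Ff$. This requires a careful case analysis on how the three cells $\vorprt_{\Ff}(p_i)$ fit together in a neighbourhood of $f$, and it is essentially the only place where one genuinely uses that $\Ff$ is an \emph{independent} subfamily (so that cells are vertex-disjoint), together with planarity, triangulation, and uniqueness of shortest paths.
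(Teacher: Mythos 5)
Your opening ``local characterization'' is false, and its failure is precisely the subtlety this theorem is designed to handle. You claim that a face $f=(u_1,u_2,u_3)$ is a branching point of the Voronoi diagram of $\Ff$ if and only if its three vertices lie in three pairwise distinct Voronoi cells, and you build the rest of the argument on the resulting triple $(p_1,p_2,p_3)$ of ``closest objects at $(u_1,u_2,u_3)$.'' But in the planar-graph setting, unlike in the Euclidean one, a branching point of the Voronoi diagram need not be incident to three distinct Voronoi regions: the Voronoi diagram of an independent family can have bridges, and at a branching point two or even all three of the surrounding regions may coincide. The paper states this explicitly when motivating singular faces, and the three types in Lemma~\ref{lem:diag-sing} correspond exactly to the three possibilities: type 1 is the case you describe (three vertices in three cells), type 2 is the case where the three vertices of $f$ belong to only two cells, and type 3 is the case where all three vertices of $f$ belong to a single cell $\vorprt_{\Ff_0}(p_0)$ yet $f$ is still a degree-$3$ vertex of the diagram. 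Your candidate set built from ``the triple of cells containing the vertices of $f$'' is simply not defined for type-2 and type-3 branching points, and even for type 1 the content is not that the three shortest paths meet at $f$, but that $f$ is one of at most two type-1 singular faces for the triple (Lemma~\ref{lem:sing-bound}).

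The actual route (combining Lemmas~\ref{lem:sing-bound} and~\ref{lem:diag-sing}) is to define, for each ordered triple or quadruple of objects from $\Dd$, the constant number of singular faces of each type it induces, prove separately that this list has bounded size per tuple and that it covers all branching points for every independent $\Ff$, and then sum over tuples. Your proposal also leaves the role of the fourth object $p_4$ as an unresolved ``tiebreaker,'' which is the heart of the matter: in the real proof the fourth object is not a tiebreaker among several triple-meeting points, but the extra object $p_0$ in the type-3 characterization, where $f$ lies deep inside a single cell and $p_1,p_2,p_3$ are separated from each other by the boundary of $f$ together with a subtree of $\widehat{T}(p_0)$. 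This is a topological condition, not a shortest-path-intersection condition, so the ``canonical rule based on shortest-path computations'' you allude to would not produce it. Without the case analysis into singular-face types, the argument does not go through.
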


We fix the family $I$ provided by Theorem~\ref{thm:important-enumeration} and call its members {\em{$\mathcal{D}$-important}} faces of $G$.

\subparagraph*{Voronoi separators.} We now recall the concept of {\em{Voronoi separators}}; see Figure~\ref{fig:vorsep} for a visualization.
A {\em{Voronoi separator}} is a sequence of the form $$S=\langle p_1,u_1,f_1,v_1,p_2,u_2,f_2,v_2,\ldots,p_r,u_r,f_r,v_r\rangle,$$
where $p_i$ are pairwise disjoint objects from $\mathcal{D}$, $f_i$ are faces of $G$, and $u_i,v_i$ are distinct vertices lying on the face $f_i$.
For each $i\in \{1,\ldots,r\}$, define $P_i$ to be the shortest path from $u_i$ to $p_i$ and $Q_i$ to be the shortest path from $v_i$ to $p_{i+1}$, where indices behave cyclically.
For a Voronoi separator $S$ as above, its {\em{length}} is $r$ and its {\em{set of traversed objects}} is $\mathcal{D}(S)=\{p_1,\ldots,p_r\}$.

In the notation above, an object $q\in \mathcal{D}$ is {\em{banned}} by the separator $S$ if either $q$ intersects some object $p\in \mathcal{D}(S)$, or there is a vertex $w$ on some path $P_i$ such that
$\textrm{dist}(w,q)<\textrm{dist}(w,p_i)$, or there is a vertex $w$ on some path $Q_i$ such that $\textrm{dist}(w,q)<\textrm{dist}(w,p_{i+1})$. In particular, $\mathcal{D}(S)$ is banned by $S$.
Let $\mathrm{Ban}(S)$ denote the set of objects in $\mathcal{D}$ banned by $S$. 
Intuitively, the banned objects are those that are intersected by the separator and are lost when we recurse (in MWISO) or that might be selected and paid twice (in MWDSC). 
Therefore, we will later ensure that their total weight is small.

The following result is the aforementioned key step toward the proof of Lemma~\ref{lem:breaking}. It may be regarded as a lift of Theorem 4.22 from~\cite{MarxP15} or of Lemma 4.1 from~\cite{Har-Peled2014}
to our setting.

\begin{lemma}\label{lem:balanced-sep}
Let $W$ be a positive real, $0<\epsilon<\frac{1}{10}$, and $s=10^3\cdot\frac{1}{\epsilon} \ln \frac{1}{\epsilon}$.
Suppose $\mathcal{F}\subseteq \mathcal{D}$ is an independent subfamily of objects such that $|\mathcal{F}|\geq 4$, $\mathbf{w}(\mathcal{F})\leq W$, and $\mathbf{w}(p)\leq s^{-2}W$ for all $p\in \mathcal{F}$.
Then there exist a Voronoi separator $S$ satisfying the following:
\begin{enumerate}[label=(B\arabic*),ref=(B\arabic*)]
\item\label{p:important} $\mathcal{D}(S)\subseteq \mathcal{F}$ and all faces traversed by $S$ are $\mathcal{D}$-important;
\item\label{p:length} the length of $S$ is at most $3s$;
\item\label{p:sacrifice} the total weight of objects of $\mathcal{F}$ banned by $S$ is at most $\epsilon W$;
\item\label{p:balanced} for every connected component $\mathcal{C}$ of $\mathrm{IntGraph}(\mathcal{D})-\mathrm{Ban}(S)$, we have $\mathbf{w}(\mathcal{C})\leq \frac{9}{10}W$.
\end{enumerate}
\end{lemma}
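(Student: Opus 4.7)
The plan is to follow the framework of Har-Peled~\cite{Har-Peled2014}, but with geometric primitives replaced by their planar-graph analogues from~\cite{MarxP15}. I would exhibit the required Voronoi separator $S$ probabilistically, via random sampling from $\mathcal{F}$.

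First, I sample each $p \in \mathcal{F}$ independently into a set $\mathcal{S}$ with probability $q_p := s^2 \mathbf{w}(p)/W$; the hypothesis $\mathbf{w}(p) \leq s^{-2}W$ ensures $q_p \in [0,1]$, and $\Exp[|\mathcal{S}|] \leq s^2$, so by Markov $|\mathcal{S}| \leq 2s^2$ with probability at least $\tfrac{1}{2}$. Being independent, $\mathcal{S}$ induces a Voronoi diagram $H$, which by the recalled toolbox is a planar $3$-regular multigraph on $O(s^2)$ vertices whose branching points all lie in the polynomial-sized set $I$ of $\mathcal{D}$-important faces guaranteed by Theorem~\ref{thm:important-enumeration}. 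I assign each face of $H$ a weight equal to the total $\mathbf{w}$-weight of $\mathcal{F}$-objects associated with the corresponding Voronoi cell (using any canonical rule, e.g.\ the cell containing a distinguished vertex of each $\mathcal{F}$-object).

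Next, I apply a planar cycle-separator theorem (Miller's, after triangulating $H$) to the face-weighted graph $H$, obtaining a simple closed walk $C$ in $H$ of length $r \leq 3s$ such that each of the two sides of $C$ carries at most $\tfrac{2}{3}W$ in face-weight. Traversing $C$ cyclically yields an alternating sequence of faces of $H$ (objects $p_1,\ldots,p_r \in \mathcal{S}\subseteq \mathcal{F}$) and vertices of $H$ (branching points $f_1,\ldots,f_r$, all $\mathcal{D}$-important). Each edge of $C$ came from contracting a $2$-path in the relevant subgraph of the dual of $G$; un-contracting recovers two distinct vertices $u_i, v_i$ on $f_i$, and the shortest paths $P_i$ from $u_i$ to $p_i$ and $Q_i$ from $v_i$ to $p_{i+1}$ lie inside the corresponding Voronoi cells by standard properties of $\mathbb{M}_\mathcal{S}$. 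This assembles a Voronoi separator $S = \langle p_1,u_1,f_1,v_1,\ldots,p_r,u_r,f_r,v_r\rangle$ satisfying (B1) and (B2); property (B4) then follows from the fact that the Jordan curve formed by $C$ together with the paths $P_i, Q_i$ bisects the sphere, so any path in $\mathrm{IntGraph}(\mathcal{D})-\mathrm{Ban}(S)$ between objects on opposite sides would have to cross this curve and hence meet a banned object.

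The main obstacle is property (B3): $\Exp[\mathbf{w}(\mathrm{Ban}(S) \cap \mathcal{F})] \leq \tfrac{\epsilon}{2}W$, after which the desired $S$ exists by the probabilistic method combined with Markov's inequality and the constant-probability bounds above. For this I plan to use the direct probabilistic calculation promised in the introduction in place of an Exponential Decay Lemma. For a fixed $q \in \mathcal{F}$, the event $q \in \mathrm{Ban}(S)$ requires that some vertex $w$ on a path $P_i$ or $Q_i$ satisfy $\dist(w,q) < \dist(w,p_i)$ or $\dist(w,q) < \dist(w,p_{i+1})$; since $P_i, Q_i$ stay inside Voronoi cells of $\mathcal{S}$, at such a $w$ the object $q$ would have beaten the anchor in Voronoi-distance, yet $q \notin \mathcal{S}$. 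Ranking the $\mathcal{F}$-candidates by their distance to $w$, this forces every candidate closer to $w$ than the anchor to be un-sampled. The probability of this event decays geometrically in the number of such candidates (weighted by their $q_p$'s) and combines multiplicatively with $q$'s own non-sampling factor $(1-q_q)$. Summing $\mathbf{w}(q)\cdot\Prob[q \in \mathrm{Ban}(S)]$ over $q \in \mathcal{F}$ and over the $O(s)$ paths of $S$ yields the target bound, provided the constant $10^3$ hidden in $s$ is large enough. Making this calculation work cleanly relies crucially on the facts that branching points are restricted to the polynomial set $I$ and that shortest paths in $G$ are unique (both arranged in the setup of this section), so that the candidate sets and the events "$q$ beats the anchor at $w$" are well-defined and admit the required independence decomposition.
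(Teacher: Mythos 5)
The high-level architecture of your proposal (random sample $\mathcal{S}\subseteq\mathcal{F}$ with $q_p\propto\mathbf{w}(p)$, then extract a short balanced separator from the Voronoi diagram $H_{\mathcal{S}}$ and translate it into a Voronoi separator) does match the paper's. But the way you plan to establish (B3) has a logical gap, and you also omit a second ingredient that the paper needs for (B4).

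For (B3), you propose to bound $\Exp[\mathbf{w}(\mathrm{Ban}(S)\cap\mathcal{F})]$ directly, where $S$ is the separator extracted from $H_{\mathcal{S}}$. This expectation is hard to control, because $S$ is not fixed in advance: you first draw $\mathcal{S}$, then run a separator theorem on $H_{\mathcal{S}}$, and only then obtain $S$. The choice of $S$ thus depends adversarially on the entire realization of $\mathcal{S}$, so there is no clean way to decompose $\Prob[q\in\mathrm{Ban}(S)]$ along the lines of ``every candidate closer to $w$ than the anchor must be unsampled'', since $w$ itself (as a vertex of a spoke of the chosen separator) is itself a function of the sample. The paper sidesteps this by proving a stronger statement (Lemma~\ref{lem:sampling}): with positive probability \emph{every} spoke and \emph{every} diamond of $H_{\mathcal{S}}$ has weight (w.r.t.\ $\mathcal{F}$) at most $10\ln\ell\cdot W/\ell$. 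This is a union bound over \emph{all possible} spokes/diamonds, and it is here --- not merely in bounding the running time --- that the singular-face structure theorem (Lemmas~\ref{lem:sing-bound} and~\ref{lem:diag-sing}) is used: it identifies each branching point with a small tuple of objects, so that the union bound can be taken over $\Oh(|\mathcal{F}|^4)$ well-defined events with cleanly computable probabilities. Once this holds, the banned-weight bound becomes a \emph{deterministic} corollary: any separator traversing $\Oh(\sqrt{\ell})$ branching points bans only what conflicts with $\Oh(\sqrt{\ell})$ spokes, each of weight $\Oh(\ln\ell\cdot W/\ell)$. Your outline mentions that the polynomial bound on important faces is ``crucial'' but does not actually use the singular-face decomposition to set up the union bound; as written the calculation would not go through.

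For (B4), you assign each face of $H_{\mathcal{S}}$ the weight of $\mathcal{F}$-objects ``associated'' with its cell and then invoke Miller-style balance. This glosses over objects that are not entirely on one side of the separator, i.e.\ objects crossing a spoke or sitting inside a diamond (the planar-graph analogue of Har-Peled's corridors). Such objects can be on either side and could unbalance or reconnect the components, so the paper needs the diamond-weight bound from the Sampling Lemma (Claim~\ref{cl:leaves}) and the $\mathcal{R}$-light machinery (Claims~\ref{lem:light-almost-all},~\ref{lem:light-sum}) to make the balance argument honest. You do not mention diamonds at all. Finally, two smaller omissions: you never ensure $|\mathcal{S}|\geq 4$ (needed for the Voronoi diagram to be defined; the paper's Claim~\ref{cl:cardinality} handles both bounds at once), and you do not address the case where $H_{\mathcal{S}}$ has bridges --- applying a cycle-separator theorem ``after triangulating $H$'' would produce cycles using edges that do not correspond to Voronoi-diagram edges and hence cannot be turned into Voronoi separators; the paper gives an explicit treatment of the bridge case.
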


It is not hard to see that Lemma~\ref{lem:breaking} follows from Lemma~\ref{lem:balanced-sep}: we simply enumerate all candidates for a Voronoi separator $S$ satisfying \ref{p:important} and \ref{p:length},
a straightforward estimate using Theorem~\ref{thm:important-enumeration} shows that there are at most $6^{3s}N^{15s}$ of them, and for each candidate $S$ we add $\mathrm{Ban}(S)$ to the constructed family $\mathbb{X}$.

\begin{proof}[Proof of Lemma~\ref{lem:breaking} assuming Lemma~\ref{lem:balanced-sep}]
Let us inspect in how many ways a Voronoi separator $S$ satisfying properties~\ref{p:important} and \ref{p:length} can be selected. 
Since $S$ has length at most $3s$, there are at most $N^{3s}$ ways to select the sequence of consecutive objects visited by $S$.
Since every face traversed by $S$ is $\Dd$-important and there are at most $N^4$ $\Dd$-important faces in total, there are at most $N^{12s}$ ways to select the sequence of consecutive faces visited by $S$.
On each of these faces we need to pick a pair of different vertices as the entry and leaving point, giving $6$ ways per faces, so at most $6^{3s}$ in total. 
Thus, the separator $S$ may be selected in at most $6^{3s}N^{15s}$ different ways. 
Furthermore, since the set of $\Dd$-important faces can be computed in polynomial time by Theorem~\ref{thm:important-enumeration},
a family $\Vorsepfam$ of at most $6^{3s}N^{15s}$ Voronoi separators satisfying~\ref{p:important} and \ref{p:length} can be enumerated in time $N^{\Oh(s)}\cdot n^{\Oh(1)}$.

Now define the output family $\Xfam$ as follows:
$$\Xfam = \{\Ban(S)\colon S\in \Vorsepfam\}.$$
Clearly $\Xfam$ satisfies property~\ref{p:osl-size}, whereas that it also satisfies property~\ref{p:osl-sep} follows directly from Lemma~\ref{lem:balanced-sep}, 
properties~\ref{p:sacrifice} and~\ref{p:balanced}.
\end{proof}

Thus, we are left with proving Lemma~\ref{lem:balanced-sep}. The idea, borrowed from Har-Peled~\cite{Har-Peled2014}, is that we construct a sufficiently large random sample from $\mathcal{F}$, where the probability of
picking each object is proportional to its weight. Then we inspect the Voronoi diagram induced by the sample and we argue that with non-zero probability it has a short separator giving rise to the sought
Voronoi separator $S$. To implement this plan we need two ingredients: an appropriate lift of the sampling idea from~\cite{Har-Peled2014} and the analysis of how separators in the Voronoi diagram give rise to
Voronoi separators in the graph, which is essentially taken from~\cite{MarxP15} with some technical details added. These two ingredients are explained in the next two subsections.

\subsection{Sampling}

\subparagraph*{Spokes and diamonds.} We first adjust technical notions used by Har-Peled~\cite{Har-Peled2014} in the geometric context to our setting; 
see Figure~\ref{fig:diamonds} for their visualization.
Suppose we have an independent family of objects $\mathcal{F}$ and its Voronoi diagram $H=H_\mathcal{F}$.
Let $f$ be any branching point of $H$ and let $u$ be any vertex of $f$. Let $p\in \mathcal{F}$ be the object of $\mathcal{F}$ such that $u\in \mathbb{M}_\mathcal{F}(p)$.
The {\em{spoke}} of $u$ in $H$ is the shortest path from $u$ to $p$ in $G$; note all the vertices of this shortest path belong to the cell $\mathbb{M}_\mathcal{F}(p)$.

Consider any subfamily $\mathcal{S}\subseteq \mathcal{F}$ and let $H_\mathcal{S}$ be the Voronoi diagram induced by $\mathcal{S}$; in the following, we consider spokes in the diagram $H_\mathcal{S}$.
For any spoke $P$ in $H_\mathcal{S}$, say connecting a vertex $u$ with the object $p\in \mathcal{S}$ satisfying $u\in \mathbb{M}_\mathcal{S}(p)$, 
we say that $P$ is {\em{in conflict}} with an object $p'\in \mathcal{F}$ if there is a vertex $v$ on $P$ such that $\textrm{dist}(v,p')<\textrm{dist}(v,p)$.
Note that this implies $p'\notin \mathcal{S}$, because the spoke $P$ has to be entirely contained in $\mathbb{M}_\mathcal{S}(p)$.
Define the {\em{weight}} of $P$ with respect to $\mathcal{F}$
as the total weight of objects from $\mathcal{F}$ that are in conflict with $P$.

Further, suppose $e$ is an edge of $H_{\mathcal{S}}$, with endpoints $f_1,f_2$ (not necessarily different).
Let $p_1,p_2$ be the objects of $\mathcal{S}$ corresponding to the faces of $H_{\mathcal{S}}$ incident to $e$ (possibly $p_1=p_2$).
Let $u_{1,1},u_{1,2}$ be the vertices of $f_1$ such that $u_{1,1}\in \mathbb{M}_{\mathcal{S}}(p_1)$, $u_{1,2}\in \mathbb{M}_{\mathcal{S}}(p_2)$, and the edge $u_{1,1}u_{1,2}$ of $G$ crosses the edge $e$ of $H_{\mathcal{S}}$.
Similarly pick vertices $u_{2,1},u_{2,2}$ of $f_2$.
The {\em{diamond}} induced by $e$ is the closed curve $\Delta_{\mathcal{S}}(e)$ on $\Sigma$ formed by the union of: segments connecting the center of $f_1$ with $u_{1,1}$ and $u_{1,2}$, 
the unique path between $u_{1,2}$ and $u_{2,2}$ in $\widehat{T}(p_2)$, segments connecting the center of $f_2$ with $u_{2,2}$ and $u_{2,1}$, 
and the unique path between $u_{2,1}$ and $u_{1,1}$ in $\widehat{T}(p_1)$.
The {\em{interior}} of $\Delta_{\mathcal{S}}(e)$ is the unique region of $\Sigma\setminus \Delta_{\mathcal{S}}(e)$ that contains $e$.
The {\em{weight}} of $\Delta_{\mathcal{S}}(e)$ with respect to $\mathcal{F}$ is the total weight of objects of $\mathcal{F}$ that are entirely contained in the interior of $\Delta_\mathcal{S}(e)$; note that these objects do not belong to~$\mathcal{S}$.

We remark that while spokes were used in~\cite{Har-Peled2014} in the form roughly as above, diamonds correspond to the notion of a {\em{corridor}} from~\cite{Har-Peled2014}.

\subparagraph*{Sampling lemma: statement.} 
We now state the crucial technical result: there is a bounded-size subfamily of the optimum solution that induces 
a Voronoi diagram where every spoke and every diamond has small weight. We will prove it using a probabilistic sampling argument.

\begin{lemma}[Sampling lemma]\label{lem:sampling}
Suppose $W$ is a positive real and $\mathcal{F}$ is an independent, weighted family of objects in $G$ such that $|\mathcal{F}|\geq 4$ and $\mathbf{w}(\mathcal{F})\leq W$.
Let $\ell\geq 10$ be an integer such that $\mathbf{w}(p)\leq \frac{W}{\ell}$ for each $p\in \mathcal{F}$. Then
there exists a subfamily $\mathcal{S}\subseteq \mathcal{F}$ with $4\leq |\mathcal{S}|\leq 2\ell$ 
such that in the Voronoi diagram $H_\mathcal{S}$, the weight with respect to $\mathcal{F}$ of every spoke and of every diamond is at most $10\ln \ell \cdot \frac{W}{\ell}$.
\end{lemma}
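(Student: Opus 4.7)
The plan is to build $\mathcal{S}$ via weighted random sampling. I include each $p \in \mathcal{F}$ in $\mathcal{S}$ independently with probability $q_p := \ell\mathbf{w}(p)/W$; by the hypothesis $\mathbf{w}(p) \le W/\ell$, each $q_p \le 1$. Then $\Exp[|\mathcal{S}|] = \ell\,\mathbf{w}(\mathcal{F})/W \le \ell$, so Markov gives $\Prob[|\mathcal{S}|>2\ell] \le \tfrac{1}{2}$. The bound $|\mathcal{S}| \ge 4$ can be enforced a posteriori by padding with up to three arbitrary objects of $\mathcal{F}$, which only further subdivides Voronoi cells and hence never increases any spoke or diamond weight.

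The heart of the proof is the spoke bound. First I would observe that the conflict set of the spoke $P$ from $u$ to $p$ admits a clean description: by the triangle inequality along the unique shortest $u$-to-$p$ path, we have $p' \in C(P)$ if and only if $\textrm{dist}(u, p') < \textrm{dist}(u, p)$. The forward direction uses that any $v \in P$ satisfies $\textrm{dist}(u,v)+\textrm{dist}(v,p)=\textrm{dist}(u,p)$; the backward direction takes $v=u$. Consequently, for a fixed vertex $u$, if we sort $\mathcal{F}$ by distance from $u$ as $p^{(1)}, p^{(2)}, \ldots$ and let $K$ be the least index with $p^{(K)} \in \mathcal{S}$, then the spoke weight at $u$ equals $w_K := \sum_{i<K}\mathbf{w}(p^{(i)})$. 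By independence of the samples, and writing $T := 10\ln\ell \cdot W/\ell$,
\[
\Prob[\,w_K > T\,] \;\le\; \prod_{i:\,w_i \le T}(1-q_{p^{(i)}}) \;\le\; \exp\!\Bigl(-\ell T / W\Bigr) \;=\; \ell^{-10}.
\]

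The subtle point is how to turn this per-vertex bound into a union bound. Summing naively over candidate branching vertices (vertices of $\mathcal{D}$-important faces, of which Theorem~\ref{thm:important-enumeration} only guarantees $|\mathcal{D}|^4$) would give a failure bound that blows up in $|\mathcal{D}|$. Here I plan to exploit the structural description underlying Theorem~\ref{thm:important-enumeration}: each $\mathcal{D}$-important face is certified by a small tuple of objects, so one can strengthen the event by additionally requiring those specific objects to lie in $\mathcal{S}$, inserting factors $q_{p_a}q_{p_b}q_{p_c}$ into each summand. Using $\sum_{p\in\mathcal{F}} q_p \le \ell$, the sum of these products over all witness tuples is bounded by $\ell^{O(1)}$, which is absorbed by the decay $\ell^{-10}$; the resulting failure probability for spokes depends only on $\ell$, not on $|\mathcal{D}|$.

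Diamonds are handled along the same lines: each edge $e$ of $H_{\mathcal{S}}$ is determined by its two incident branching faces and the two objects $p_1, p_2 \in \mathcal{S}$ whose cells meet at $e$, so enumeration again proceeds through Theorem~\ref{thm:important-enumeration}. Objects of $\mathcal{F}$ trapped inside $\Delta_{\mathcal{S}}(e)$ are those enclosed by the four bounding spokes and the spanning-tree paths inside $p_1$ and $p_2$, and they can be controlled by an entirely analogous exponential-decay estimate applied to the ``corridor'' between $p_1$ and $p_2$. Combining the size bound with the spoke and diamond bounds, the probability that every required property holds simultaneously is strictly positive, which suffices for the existential conclusion. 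The main obstacle is the combinatorial accounting in the union bound: the raw count $|\mathcal{D}|^4$ is too crude, and the argument leans crucially on the finer structural information from~\cite{MarxP15} that allows the probability decay to be charged against sample-membership events, as indicated in the informal discussion preceding the lemma.
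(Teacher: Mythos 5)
Your plan follows the same strategy as the paper: sample each $p\in\mathcal{F}$ with probability $q_p=\ell\,\mathbf{w}(p)/W$, derive a per-event decay of $\ell^{-10}$ for a heavy spoke, and control the union bound by pre-multiplying each term by the $q$-factors of the small tuple of objects that certifies the relevant branching face as a singular face (Lemmas 4.8--4.10 and 4.12 of~\cite{MarxP15}, which underlie Theorem~\ref{thm:important-enumeration}). Your observation that the conflict set of the spoke from $u$ to $p$ is exactly $\{p'\colon \textrm{dist}(u,p')<\textrm{dist}(u,p)\}$ is correct and cleanly packages what the paper handles by working directly with the conflict definition, so that part is fine.

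However, the padding argument for $|\mathcal{S}|\ge 4$ is a genuine gap. When $|\mathcal{S}|<4$ the Voronoi diagram of $\mathcal{S}$ is not even defined, so the claim that padding ``never increases any spoke or diamond weight'' is vacuous; and more substantively, adding objects to $\mathcal{S}$ creates new branching points, hence entirely new spokes and diamonds, whose conflict sets and interiors are not controlled by the probabilistic estimate you performed on the unpadded sample. Refinement of existing cells only bounds the old spokes, not the new ones. The paper instead first disposes of the case $\mathbf{w}(\mathcal{F})\le\eta$ (where one can simply take $\mathcal{S}=\mathcal{F}$), and then uses $\mathbf{w}(\mathcal{F})>\eta$ to lower-bound $\mathbb{E}[|\mathcal{S}|]$ by $10\ln\ell$, so that a Chernoff bound gives $\mathbb{P}[|\mathcal{S}|\le 100]\le 1/12$; combined with the Markov bound on $|\mathcal{S}|>2\ell$ and the spoke and diamond bounds (each $\le 1/6$), the three failure probabilities sum to $11/12<1$. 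You should replace padding with this two-case treatment, and you will also need to carry the union bound through with explicit constants (at most two type-$1$ singular faces per triple, one type-$2$ per triple, one type-$3$ per quadruple, and three spokes per branching point) in order to verify that the total failure probability really lands below one; at present this part of your argument is only a plan.
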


Later, we will use Lemma~\ref{lem:sampling} with $\ell=\mathcal{O}((\frac{1}{\epsilon}\ln \frac{1}{\epsilon})^2)$. 
The reader may imagine that we then apply a balanced planar graph separator on the Voronoi diagram $H_\mathcal{S}$ of size $\mathcal{O}(\sqrt{\ell})$ along which we partition $\mathcal{F}$ into two parts, 
yielding the Voronoi separator claimed by Lemma~\ref{lem:balanced-sep}. 
Since the weight with respect to $\mathcal{F}$ of every spoke and every diamond is at most $10\ln \ell \cdot \frac{W}{\ell}$, the total weight of the objects of $\mathcal{F}$ banned by $S$ will be bounded by $\epsilon W$.

Lemma~\ref{lem:sampling} is a roughly an analogue of Lemma 3.3 from~\cite{Har-Peled2014}.
The main difference is that in the geometric setting, spokes and corridors have a simpler structure due to the fact that each branching point of the Voronoi diagram is defined by three objects from the solution ---
the three ones equidistant from it --- so that the branching point is the meeting point of the three corresponding Voronoi regions. This is no longer the case in planar graphs, as observed in~\cite{MarxP15}.
More precisely, out of the three regions around a branching point of the diagram, two or even three may be equal; this happens when there are bridges in the diagram, which is never the case in the geometric setting.

As part of their proof of Theorem~\ref{thm:important-enumeration}, to understand these additional situations Marx and Pilipczuk define {\em{singular faces}}, which come in three types.
The first one corresponds to ``standard'' branching points incident to three different regions, while the second and the third one correspond to branching points incident only to two, respectively one region.

\subparagraph*{Singular faces.}
For an independent triple of objects $\mathcal{F}_0=\{p_1,p_2,p_3\}\subseteq \mathcal{D}$, a face $f$ of $G$ is called a {\em{singular face}} of {\em{type $1$}} for $(p_1,p_2,p_3)$ if in $\mathbb{M}_{\mathcal{F}_0}$, 
all the vertices of $f$ belong to different cells (note that there are three cells in $\mathbb{M}_{\mathcal{F}_0}$).
For an independent triple of objects $\mathcal{F}_0=\{p_1,p_2,p_3\}\subseteq \mathcal{D}$, a face $f$ is called a {\em{singular face}} of {\em{type $2$}} for $(p_1,p_2,p_3)$ if in $\mathbb{M}_{\mathcal{F}_0}$, 
one vertex $v_1$ of $f$ belongs to $\mathbb{M}_{\mathcal{F}_0}(p_1)$, the other two vertices $v_2,v_3$ of $f$ belong to $\mathbb{M}_{\mathcal{F}_0}(p_2)$, and the closed walk $W$ obtained by taking the union of the unique path in
$\widehat{T}(p_2)$ between $v_2$ and $v_3$ and the edge $v_2v_3$ on the boundary of $f$ divides the plane into two regions, one containing $p_1$ and one containing $p_3$.
Finally, for an independent quadruple of objects $\mathcal{F}_0=\{p_0,p_1,p_2,p_3\}\subseteq \mathcal{D}$, a face $f$ is called a {\em{singular face}} of {\em{type $3$}} for $(p_0,p_1,p_2,p_3)$ if in $\mathbb{M}_{\mathcal{F}_0}$ all
the vertices of $f$ belong to $\mathbb{M}_{\mathcal{F}_0}(p_0)$, but the boundary of face $f$ plus the minimal subtree of $\widehat{T}(p_0)$ spanning the vertices of $f$ divides the plane into four regions: the face $f$
itself, one region containing $p_1$, one region containing $p_2$, and one region containing $p_3$.
See Figure~8 in~\cite{MarxP15} for a visualization.

It appears that for a fixed triple or quadruple of objects, there are only few singular faces.

\begin{lemma}[Lemmas 4.8, 4.9, and 4.10 of~\cite{MarxP15}]\label{lem:sing-bound}
For each independent triple of objects $(p_1,p_2,p_3)$, there are at most $2$ singular faces of type $1$ for $(p_1,p_2,p_3)$, and at most $1$ singular face of type $2$ for $(p_1,p_2,p_3)$.
For each independent quadruple of objects $(p_0,p_1,p_2,p_3)$, there is at most $1$ singular face of type $3$ for $(p_0,p_1,p_2,p_3)$.
\end{lemma}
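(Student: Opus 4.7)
The plan is to prove the three bounds separately, via a direct Euler's formula calculation for Type~1 and via topological uniqueness arguments exploiting the tree structure of $\widehat{T}(p_2)$ and $\widehat{T}(p_0)$ for Types~2 and~3 respectively. For Type~1, I would mimic the Voronoi-diagram construction but apply it to the $3$-cell partition $\mathbb{M}_{\{p_1,p_2,p_3\}}$: take the dual of $G$, delete duals of edges whose endpoints lie in the same cell, iteratively delete vertices of degree at most $1$, and contract maximal $2$-paths. Call the resulting plane multigraph $H'$. Because $G$ is triangulated, $H'$ is $3$-regular at every remaining vertex, and its faces are in bijection with the nonempty cells, so $F \leq 3$. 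Applying $V-E+F = 1+\kappa$ (where $\kappa$ is the number of connected components of $H'$) together with $2E=3V$ yields $V \leq 2F-2-2\kappa \leq 2$. Type~1 singular faces correspond exactly to vertices of $H'$ whose three incident faces of $H'$ are pairwise distinct: in a Type~1 face of $G$ the three triangle vertices lie in three different cells, so none of the three bounding primal edges is intra-cell, forcing degree three in $H'$. Hence the number of Type~1 singular faces is bounded by $V \leq 2$.

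For Type~2, I would argue by contradiction. Suppose $f\ne f'$ are both Type~2 singular faces for $(p_1,p_2,p_3)$, with associated closed walks $W$ and $W'$, each a Jordan curve on the sphere separating $p_1$ from $p_3$. Both walks lie in the closure of $\mathbb{M}_{\{p_1,p_2,p_3\}}(p_2)$ and each consists of a path in the tree $\widehat{T}(p_2)$ closed up by one edge of $G$ on the boundary of the respective face. Because $\widehat{T}(p_2)$ is a tree, such a Jordan curve can separate $p_1$ from $p_3$ only if the added ``chord'' identifies the unique edge of the tree path whose removal splits $\widehat{T}(p_2)$ into two subtrees --- one giving access to $p_1$, the other to $p_3$. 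Uniqueness of this cut-edge (it is the edge on the unique tree-path between the $\widehat{T}(p_2)$-gateways to $p_1$ and to $p_3$) forces $W$ and $W'$ to share the chord; since in a triangulation a face is determined by one of its edges and the fact that the third vertex lies in $\mathbb{M}_{\{p_1,p_2,p_3\}}(p_1)$, this pins $f=f'$, contradicting the assumption.

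For Type~3, I would run an analogous tree-uniqueness argument inside $\widehat{T}(p_0)$. The minimal subtree $T_f$ of $\widehat{T}(p_0)$ spanning the three vertices of $f$ is either a path or a ``tripod'' (three tree paths meeting at a common vertex); together with the boundary triangle of $f$ it carves the sphere into exactly four regions, one being the interior of $f$ and the other three, by the Type~3 definition, containing $p_1, p_2, p_3$ one each. I would then argue that the central vertex $c$ of $T_f$ must be the \emph{Steiner vertex} of the gateways to $p_1,p_2,p_3$ inside $\widehat{T}(p_0)$ --- the unique vertex of $\widehat{T}(p_0)$ whose removal splits the tree into (at least) three components, each containing a gateway to exactly one of $p_1,p_2,p_3$ --- and that the three branches of $T_f$ are forced to be the initial segments of the paths in $\widehat{T}(p_0)$ from $c$ towards these gateways. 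Once $c$ and the branches are fixed, $f$ is determined as the unique triangle of $G$ incident to the three tips of these branches, yielding at most one Type~3 singular face.

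The main obstacle is the rigorous execution of the Type~2 and Type~3 arguments: the Euclidean analogues in~\cite{Har-Peled2014} are geometrically transparent because disks and polygons have well-behaved boundaries, but here $\widehat{T}(p_2)$ and $\widehat{T}(p_0)$ are abstract spanning trees whose planar embeddings can be arbitrarily tangled, and the ``gateways'' through which $p_1,p_2,p_3$ are accessed from $\mathbb{M}(p_2)$ or $\mathbb{M}(p_0)$ must be defined carefully enough to support a uniqueness argument. The delicate point is to argue, using only the connectivity of the Voronoi cells, the triangulation assumption on $G$, and uniqueness of paths in trees, that two distinct separating walks or tripods with the prescribed region structure cannot coexist --- precisely the technical content of Lemmas~4.8--4.10 in~\cite{MarxP15}, which my plan essentially replicates.
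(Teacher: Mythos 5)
This lemma is not proved in the paper at all: it is imported verbatim from Marx and Pilipczuk~\cite{MarxP15} (Lemmas 4.8, 4.9, 4.10), with the bracketed attribution serving as the entire ``proof.'' So there is no in-paper argument to compare your proposal against, and the honest response to this statement would simply be to cite~\cite{MarxP15}, not to reprove it. Evaluated on its own terms, your Type~1 sketch is a reasonable Euler-formula argument and is plausibly what~\cite{MarxP15} does. One point to be careful about: the paper's Voronoi-diagram construction removes only the dual edges of the spanning trees $\widehat{T}(p)$, not all intra-cell dual edges; you would need a sentence to justify that your cleaner ``delete all intra-cell duals'' variant yields the same reduced multigraph, or at least that type-1 faces always survive as degree-$3$ vertices under the official construction (a type-1 face could in principle be pruned during the iterative degree-$1$ removals if a neighbor collapses, and this needs to be excluded).

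The Type~2 argument has a concrete confusion. You write that the Jordan curve can separate $p_1$ from $p_3$ ``only if the added chord identifies the unique edge of the tree path whose removal splits $\widehat{T}(p_2)$ into two subtrees'' and then conclude the two candidate faces ``share the chord.'' But the chord $v_2v_3$ is a boundary edge of the face $f$, hence an edge of $G$ that lies \emph{outside} the tree $\widehat{T}(p_2)$; it cannot coincide with, or be ``identified'' with, any tree edge, and there is no mechanism in your sketch that pins down the chord from the tree structure alone. What actually has to be shown is that among all Jordan curves of the form (tree path in $\widehat{T}(p_2)$) $\cup$ (one chord crossing into $\mathbb{M}(p_1)$), at most one can separate $p_1$ from $p_3$ --- a genuine nesting/topological-uniqueness argument that your sketch gestures at without carrying out. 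The Type~3 sketch has the right shape (uniqueness of the Steiner vertex of the three gateways and of the three initial branches of the tripod), but the final step ``once the tips are fixed, $f$ is the unique triangle incident to them'' is not automatic in a triangulation and also needs justification. You acknowledge these as the hard parts; indeed they are precisely the technical content of~\cite{MarxP15}'s Lemmas 4.8--4.10, which your plan would have to reproduce in full rather than paraphrase.
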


The next statement explains the connection between branching points and singular faces.

\begin{lemma}[Lemma 4.12 of~\cite{MarxP15}]\label{lem:diag-sing}
Let $\mathcal{F}\subseteq \mathcal{D}$ be an independent subfamily of objects, and let $H$ be the Voronoi diagram of $\mathcal{F}$.
Then every branching point of $H$ is either a type-$1$ singular face for some triple of objects from $\mathcal{F}$, or a type-$2$ singular face for some triple of objects from $\mathcal{F}$, or a type-$3$ singular
face for some quadruple of objects from $\mathcal{F}$.
\end{lemma}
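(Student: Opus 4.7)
The plan is to do a case analysis on how the three vertices $v_1,v_2,v_3$ of the branching face $f$ distribute themselves among the Voronoi cells of $\mathcal{F}$, since the three types of singular faces in the definition are already stratified in exactly this way. Before splitting into cases, I would first record a convenient reformulation of what it means for $f$ to be a branching point: since $G$ is triangulated, every face has exactly three incident dual edges, and by the construction of $H$ a dual edge survives iff the corresponding primal edge is not in any $\widehat{T}(p)$. Equivalently, an edge $uv$ on the boundary of $f$ has surviving dual edge iff either (i) $u$ and $v$ lie in different cells of $\mathbb{M}_\mathcal{F}$, or (ii) $u,v\in\mathbb{M}_\mathcal{F}(p)$ for some $p\in\mathcal{F}$ but $uv\notin\widehat{T}(p)$. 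A branching point is precisely a face all three of whose edges satisfy (i) or (ii).

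\textbf{Three cells.} If $v_1,v_2,v_3$ lie in three distinct cells $\mathbb{M}_\mathcal{F}(p_1),\mathbb{M}_\mathcal{F}(p_2),\mathbb{M}_\mathcal{F}(p_3)$, all three edges of $\partial f$ are between-cell, hence automatically survive, and by definition $f$ is type-$1$ singular for $(p_1,p_2,p_3)$. Nothing further is needed here.

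\textbf{Two cells.} Say $v_1\in\mathbb{M}_\mathcal{F}(p_1)$ and $v_2,v_3\in\mathbb{M}_\mathcal{F}(p_2)$. The branching hypothesis combined with (ii) forces $v_2v_3\notin\widehat{T}(p_2)$, so adding $v_2v_3$ to $\widehat{T}(p_2)$ closes a unique simple cycle $W$ lying in the closed region of $\mathbb{M}_\mathcal{F}(p_2)$, and $W$ partitions the sphere into two regions. Connectedness of $\mathbb{M}_\mathcal{F}(p_1)$ together with the fact that $W\subseteq\overline{\mathbb{M}_\mathcal{F}(p_2)}$ places $p_1$ on the same side of $W$ as $v_1$. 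The remaining task is to exhibit some $p_3\in\mathcal{F}\setminus\{p_1,p_2\}$ on the other side. For this I would argue that the edge $e$ of $H$ crossing $v_2v_3$ has the $p_2$-face of $H$ on both sides (both endpoints of $v_2v_3$ lie in $\mathbb{M}_\mathcal{F}(p_2)$, so the local sectors at $f$ on either side of $e$ are the $p_2$-face), whence $e$ is a bridge of $H$. Since $|\mathcal{F}|\geq 4$ and $H$ is connected, removing $e$ yields two components, each of which must contain a face of $H$ distinct from the $p_2$-face; one contains the $p_1$-face, and the other contributes the desired $p_3$. A topological trace of $e$ — which crosses $W$ exactly at the primal edge $v_2v_3$ — shows that $p_3$ sits in the region of $W$ opposite to $p_1$, verifying the type-$2$ definition.

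\textbf{One cell.} All three vertices lie in a single cell $\mathbb{M}_\mathcal{F}(p_0)$, and the branching condition forces none of the three edges of $\partial f$ to be in $\widehat{T}(p_0)$. Let $T^\ast$ be the minimal subtree of $\widehat{T}(p_0)$ spanning $\{v_1,v_2,v_3\}$; then $\partial f\cup T^\ast$ forms a theta-like configuration partitioning the sphere into the interior of $f$ plus three outer regions. At $f$, all three sectors in $H$ are the $p_0$-face, so each of the three $H$-edges at $f$ is a bridge. Using $|\mathcal{F}|\geq 4$, I would rule out the degenerate possibility that two of these bridges lead into the same component of $H-f$: if that happened, the portion of $H$ on the ``bridge side'' of $f$ would be a 3-regular planar multigraph with too few faces, contradicting $|\mathcal{F}|\geq 4$ via Euler's formula. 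Removing $f$ and its three incident bridges therefore yields three components, each containing at least one face of $H$ other than the $p_0$-face; choosing one such face from each component gives three distinct objects $p_1,p_2,p_3\in\mathcal{F}\setminus\{p_0\}$. A topological tracing of each bridge places the corresponding $p_i$ in the outer region of $\partial f\cup T^\ast$ into which that bridge leads, which gives the type-$3$ configuration.

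The main obstacle will be formalising the two topological traces invoked in Cases Two and One: namely, that following an $H$-edge whose both local sides near $f$ lie in a single Voronoi cell deposits one on the correct side of $W$, respectively in the correct outer region of $\partial f\cup T^\ast$. The key observation that should make this routine is that an $H$-edge is, after the $2$-path contraction, a concatenation of dual edges that each cross a primal edge classifiable as between-cell or within-cell non-tree; the cycle $W$ (or the tree-triangle configuration) is itself built from primal edges, so the number of crossings of the $H$-edge with the partitioning curve can be read off combinatorially and shown to be exactly one.
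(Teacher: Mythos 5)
The paper does not prove this statement itself; it imports it verbatim as Lemma~4.12 of~\cite{MarxP15}, so there is no in-document argument to compare against. Your three-way split on how many Voronoi cells of $\mathbb{M}_\mathcal{F}$ the vertices of $f$ touch is the natural skeleton and very likely mirrors the cited source, but two steps are genuinely missing. First, the singular-face definitions are phrased in the restricted world of the triple or quadruple $\mathcal{F}_0$: they refer to the cells of $\mathbb{M}_{\mathcal{F}_0}$ and to the trees $\widehat{T}(\cdot)$ built for $\mathcal{F}_0$, whereas your entire analysis is carried out in $\mathbb{M}_\mathcal{F}$ for the full family. You need to record that if $v\in\mathbb{M}_\mathcal{F}(p)$ and $p\in\mathcal{F}_0\subseteq\mathcal{F}$ then $v\in\mathbb{M}_{\mathcal{F}_0}(p)$ (restricting to a subfamily only enlarges cells), and that the $\widehat{T}(p_2)$ built for $\mathcal{F}$ is a subtree of the one built for $\mathcal{F}_0$, so the unique $v_2$–$v_3$ path, hence the walk $W$ and the tree $T^\ast$, are literally the objects the definition speaks about; without this your conclusion lands one family away from the one the definition quantifies over. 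Second, in both the two-cell and one-cell cases you assert without justification that every component created by deleting the bridge(s) at $f$ contains a face of $H$ other than $p_2$ (resp.\ $p_0$). That is exactly where $|\mathcal{F}|\geq 4$ earns its keep, and the argument should be that each such component is a connected submultigraph of the $3$-regular $H$ in which exactly one vertex has degree $2$ and all others degree $3$, hence it is not a forest, hence it contains a cycle whose enclosed side holds at least one face of $H$, and that face cannot be $p_2$/$p_0$ because those lie across the bridge.

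On the lesser flaws: your Euler's-formula aside in the one-cell case is aimed at a non-issue — two bridges incident to the same vertex automatically lead to distinct components of $H-f$, since otherwise a path through the common component plus the two bridges would be a cycle containing both, contradicting that either is a bridge — so the real counting work is the face-existence claim above, not the three-component claim. And your stated \emph{iff} characterisation of branching points is too strong: a face all three of whose dual edges survive the tree-removal step can still be eliminated by the cascading degree-one deletions (a neighbouring all-interior-to-one-cell triangle with two $\widehat{T}$ edges has dual degree $1$, gets deleted, and drops your face to degree $2$). You only use, and only need, the correct forward implication, so this does not break the argument, but the claim should not be stated as an equivalence.
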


Actually, the two results above may be combined into a proof of Theorem~\ref{thm:important-enumeration}. Lemma~\ref{lem:diag-sing} shows that every branching point of the Voronoi diagram of an independent
subfamily of $\mathcal{D}$ is among type-$1$, type-$2$, and type-$3$ singular faces for triples or quadruples of objects in $\mathcal{D}$, while using Lemma~\ref{lem:sing-bound} we can bound their total number by $|\mathcal{D}|^4$.

\subparagraph*{Sampling lemma: proof.} We now have all the tools needed to prove Lemma~\ref{lem:sampling}. 
Contrary to Har-Peled~\cite{Har-Peled2014} we do not use the Exponential Decay Lemma, but direct probability calculations; this makes the proof somewhat conceptually easier.
The main complications are due to the need to handling different types of singular faces, instead of just one.

\begin{proof}[Proof of Lemma~\ref{lem:sampling}]
Denote $\eta=10\ln \ell \cdot \frac{W}{\ell}$.
First observe that if $\mathbf{w}(\mathcal{F})\leq \eta$, then setting $\mathcal{S}=\mathcal{F}$ satisfies all the required properties, since no spoke may have larger weight than the whole of $\mathcal{F}$.
Therefore, from now on we assume that $\mathbf{w}(\mathcal{F})>\eta$.

Construct $\mathcal{S}$ by including every object $p\in \mathcal{F}$ independently with probability $q_p=\mathbf{w}(p)\cdot \frac{\ell}{W}$; note that this value is at most $1$ by the assumption of the lemma.
Let $X$ be the random variable equal to the cardinality of $\mathcal{S}$; then $X=\sum_{p\in \mathcal{F}} X_p$, where $X_p$ are indicator random variables, taking value $1$ if $p$ is included in $\mathcal{F}$ and $0$ otherwise.
Note that $\mathbb{E}[X_p] = q_p$ and $\mathbb{E}[X] = \sum_{p\in \mathcal{F}} \mathbb{E} [X_p]=\ell\cdot \frac{\mathbf{w}(\mathcal{F})}{W}\leq \ell$.
Since $X$ is a sum of independent indicator variables, standard concentration inequalities yield the following.

\begin{claim}[
]\label{cl:cardinality}
The probability that $|\mathcal{S}|>2\ell$ or $|\mathcal{S}|\leq 100$ is at most $\frac{7}{12}$.
\end{claim}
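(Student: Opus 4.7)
The plan is to control the two tails of $X = |\mathcal{S}|$ separately via standard concentration inequalities for sums of independent Bernoulli variables, and then combine the bounds by a union bound. First, I record the basic moment estimates: $\mu := \Exp[X] = \ell\cdot\weight(\mathcal{F})/W$, so $\weight(\mathcal{F})\leq W$ gives $\mu\leq \ell$, while the standing assumption $\weight(\mathcal{F}) > \eta = 10W\ln\ell/\ell$ (which we are free to make at this point, since the alternative $\weight(\mathcal{F})\leq \eta$ has already been dispatched at the top of the proof by taking $\mathcal{S}=\mathcal{F}$) yields $\mu > 10\ln\ell \geq 10\ln 10 > 23$. In addition, $\Var X \leq \Exp[X] = \mu$ since $X$ is a sum of independent indicators.

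For the upper tail, I would exploit that $\mu \leq \ell$ to obtain the inclusion $\{X > 2\ell\}\subseteq \{X \geq 2\mu\}$, so the multiplicative Chernoff bound $\Prob(X \geq 2\mu) \leq e^{-\mu/3}$ gives $\Prob(X > 2\ell)\leq e^{-\mu/3} \leq e^{-23/3}$, which is much smaller than $1/12$. For the lower tail $\Prob(X\leq 100)$, the natural tool is the lower-tail multiplicative Chernoff bound $\Prob(X\leq (1-\delta)\mu)\leq e^{-\delta^2\mu/2}$ applied with $\delta = 1 - 100/\mu$: this decays rapidly as soon as $\mu$ comfortably exceeds $100$. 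In the transition regime where $\mu$ is only moderately larger than $100$, a Chebyshev fallback $\Prob(X\leq 100)\leq \mu/(\mu-100)^2$ (using $\Var X \leq \mu$) picks up the slack.

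A union bound on the two tails then delivers the claimed overall bound of $\tfrac{7}{12}$ on the bad event. The main obstacle I expect is bounding $\Prob(X\leq 100)$ uniformly across the full range $\mu \in (10\ln\ell,\,\ell]$ permitted by the hypotheses: the upper-tail Chernoff estimate is uniformly tiny, but the lower-tail Chernoff bound degrades precisely when $\mu$ sits just above the threshold $100$, so the choice of the concrete constants $100$ and $\tfrac{7}{12}$ in the claim must be tailored so that the combined Chernoff/Chebyshev estimates for the lower tail stay below $\tfrac{7}{12}$ no matter where in this range $\mu$ lies.
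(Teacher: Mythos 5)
Your overall strategy mirrors the paper's: treat the two tails separately and union-bound. The small difference is in the upper tail, where you reach for a multiplicative Chernoff bound (after noting $\{X>2\ell\}\subseteq\{X\geq 2\mu\}$) whereas the paper simply applies Markov's inequality to get $\Prob(X>2\ell)\leq \mathbb{E}[X]/(2\ell)\leq 1/2$. Both are fine; Markov is cleaner and already gives the $1/2$ the paper budgets for this tail.

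The interesting part is the lower tail, and here you have put your finger on a real issue, but the proposal stops short of resolving it. You correctly deduce from $\weight(\mathcal{F})>\eta$ that $\mu>10\ln\ell\geq 10\ln 10>23$; with only $\mu>23$, there is no hope of bounding $\Prob(X\leq 100)$ away from $1$, let alone below $1/12$, and your Chernoff/Chebyshev fallbacks cannot be combined to cover a range that includes $\mu\leq 100$. You present this as a tuning-of-constants problem, but for $\mu$ anywhere in $(23,100]$ the event $X\leq 100$ is actually \emph{likely}, so no amount of tuning repairs it. The paper sidesteps the issue by writing $\mathbb{E}[X]>10\ln\ell>200$ and then applying the Chernoff lower-tail bound with $\delta=1/2$ to get $\Prob(X\leq 100)\leq e^{-\mu/8}\leq e^{-25}\leq 1/12$. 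The step $10\ln\ell>200$ is \emph{not} a consequence of the lemma's stated hypothesis $\ell\geq 10$; it requires $\ell>e^{20}$. This holds in the one place the Sampling Lemma is invoked (there $\ell=s^2=10^6(\frac{1}{\epsilon}\ln\frac{1}{\epsilon})^2>5\times 10^8>e^{20}$ for $\epsilon<\frac{1}{10}$), but it is not justified by the hypotheses as written. So your instinct that something is off is right; the correct fix is not to juggle Chernoff and Chebyshev over the range $\mu\in(10\ln\ell,\ell]$, but to observe that the only supported value of $\ell$ makes $\mu>200$, and then the paper's single Chernoff application (with $\delta=1/2$) finishes the lower tail immediately.
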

\begin{proof}[Proof of claim]
Since $\Exp[X]\leq \ell$, by Markov's inequality we have that $|\Ss|>2\ell$ happens with probability at most $\frac{1}{2}$.
Hence,  it suffices to prove that the probability that $|\Ss|\leq 100$ is that most $\frac{1}{12}$.

As $\weight(\Ff)>\eta=10\ln \ell\cdot \frac{W}{\ell}$, we have
$$\Exp[X] = \ell\cdot \frac{\weight(\Ff)}{W}>10\ln \ell>200.$$
Since $X$ is a sum of independent $\{0,1\}$-random variables with mean larger than $200$, by Chernoff's inequality we infer that
$$\Prob(X\leq 100)\leq \exp\left(-\frac{200}{8}\right)\leq \frac{1}{12}.$$
This concludes the proof of the claim.
\cqed\end{proof}

Call a spoke in the Voronoi diagram $H_\mathcal{S}$ {\em{heavy}} if its weight with respect to $\mathcal{F}$ is more than $\eta$.
We now estimate the probability that there is a heavy spoke in~$H_\mathcal{S}$.

\begin{claim}\label{cl:spokes}
The probability that there is a heavy spoke in $H_\mathcal{S}$ is at most $\frac{1}{6}$.
\end{claim}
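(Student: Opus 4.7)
The plan is a union bound over all candidate heavy spokes, indexed by the singular-face structure of their branching point. By Lemma~\ref{lem:diag-sing}, every branching point of $H_\mathcal{S}$ is a singular face of type $1$, $2$, or $3$ for some tuple $\mathcal{F}_0\subseteq\mathcal{S}$. Accordingly, I would enumerate candidate spokes by: a type $t\in\{1,2,3\}$, an ordered tuple $\mathcal{F}_0\subseteq\mathcal{F}$ (a triple for $t\in\{1,2\}$ and a quadruple for $t=3$), a singular face $f$ of type $t$ for $\mathcal{F}_0$, and a vertex $u$ of $f$. The associated candidate spoke $\sigma$ is then the shortest path from $u$ to the object $p\in\mathcal{F}_0$ whose cell in $\mathbb{M}_{\mathcal{F}_0}$ contains $u$. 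Note that $\sigma$, its conflicting set, and the notion of being heavy are determined by this deterministic data, independently of the sampling.

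For a heavy candidate $\sigma$ to actually appear as a spoke in $H_\mathcal{S}$, two events are necessary: (E1) every object of $\mathcal{F}_0$ is included in $\mathcal{S}$, which happens with probability $\prod_{p'\in\mathcal{F}_0}q_{p'}$; and (E2) no object in conflict with $\sigma$ is included in $\mathcal{S}$, since any such object would pull some vertex of $\sigma$ out of $\mathbb{M}_\mathcal{S}(p)$, contradicting the definition of a spoke. These two events are independent because they concern disjoint subfamilies of $\mathcal{F}$: the shortest path $\sigma$ from $u\in\mathbb{M}_{\mathcal{F}_0}(p)$ stays entirely inside $\mathbb{M}_{\mathcal{F}_0}(p)$ by the standard Voronoi-cell property, so no other object of $\mathcal{F}_0$ can be in conflict with $\sigma$; hence $\mathcal{F}_0$ and the conflicting set are disjoint. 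For heavy $\sigma$, the inequality $1-x\leq e^{-x}$ gives $\Prob[(\mathrm{E2})]\leq \exp(-\tfrac{\ell}{W}\eta)=\ell^{-10}$.

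Summing the resulting bound $\ell^{-10}\prod_{p'\in\mathcal{F}_0}q_{p'}$ over candidates, split by type and using Lemma~\ref{lem:sing-bound} (at most $2$ type-$1$ faces per triple, $1$ type-$2$ face per triple, $1$ type-$3$ face per quadruple), a factor of $3$ for the choice of $u$, and the estimate $\sum_{p\in\mathcal{F}}q_p=\Exp[X]\leq \ell$, the three type contributions become at most $6\ell^{-7}$, $3\ell^{-7}$, and $3\ell^{-6}$ respectively. The total is $O(\ell^{-6})$, comfortably below $\tfrac{1}{6}$ for $\ell\geq 10$.

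The main delicate step is the independence of (E1) and (E2), which reduces to the disjointness $\mathcal{F}_0\cap(\text{conflicting set of }\sigma)=\emptyset$ argued above; a secondary subtlety is to make sure that every spoke actually appearing in $H_\mathcal{S}$ is covered by some candidate in our enumeration, which is exactly what Lemma~\ref{lem:diag-sing} provides.
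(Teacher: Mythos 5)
Your proposal is correct and follows essentially the same approach as the paper's proof: enumerate candidate heavy spokes via the singular-face classification of Lemma~\ref{lem:diag-sing}, bound the probability that a candidate survives as an actual spoke in $H_\mathcal{S}$ by combining the product of inclusion probabilities over $\mathcal{F}_0$ with the $\exp(-\ell\eta/W)=\ell^{-10}$ bound on excluding all conflicting objects, and apply a union bound using Lemma~\ref{lem:sing-bound}. The only cosmetic difference is that the paper writes out the type-$1$ sum in terms of $\sum_{p}\mathbf{w}(p)/W\leq 1$ and declares the other two types ``similar,'' whereas you organize the same calculation uniformly via $\sum_p q_p = \Exp[X]\leq\ell$; your explicit observation that $\mathcal{F}_0$ is automatically disjoint from the conflicting set (so independence of (E1) and (E2) is for free) is a slightly cleaner phrasing of what the paper builds into its definition of $\mathcal{Z}$.
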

\begin{proof}[Proof of claim]
Take any triple $p_1,p_2,p_3$ of different objects from $\Ff$, and suppose $f$ is a singular face of type $1$ for $(p_1,p_2,p_3)$.
Suppose further that $P$ is one of the spokes incident to $f$ in the Voronoi diagram of $\{p_1,p_2,p_3\}\subseteq \Ff$, and assume that $P$ is heavy with respect to $\Ff$.
Let us estimate the probability of the following event $A$: $p_1,p_2,p_3$ all belong to $\Ss$ and moreover $P$ remains a spoke in the Voronoi diagram $H_\Ss$ of $\Ss$ (then it is obviously a heavy spoke in $H_\Ss$ as
well).
For $A$ to happen, apart from the event that $p_1,p_2,p_3\in \Ss$, 
we also need that the following event happens: for each $p'\in \Ff\setminus \{p_1,p_2,p_3\}$ that is in conflict with $P$, $p'$ is not included in $\Ss$.
Let $\mathcal{Z}\subseteq \Ff$ be the set of such objects $p'$. Using the standard inequality $1-x\leq \exp(-x)$, we have
\begin{eqnarray}
\Prob(A)& \leq & q_{p_1} q_{p_2} q_{p_3}\cdot \prod_{p'\in \mathcal{Z}} (1-q_{p'}) \leq q_{p_1} q_{p_2} q_{p_3}\cdot \exp\left(-\sum_{p'\in \mathcal{Z}} q_{p'}\right)\nonumber\\
& \leq & \frac{\weight(p_1)\weight(p_2)\weight(p_3)}{W^3}\cdot \ell^{3}\cdot \exp\left(-\frac{\ell}{W} \sum_{p'\in \mathcal{Z}} \weight(p')\right) \nonumber\\
& \leq & \frac{\weight(p_1)\weight(p_2)\weight(p_3)}{W^3}\cdot \ell^{3}\cdot \exp\left(-\frac{\ell\eta}{W}\right)\nonumber \\
& \leq & \frac{\weight(p_1)\weight(p_2)\weight(p_3)}{W^3}\cdot \ell^{3}\cdot \exp(-10\ln \ell) = \frac{\weight(p_1)\weight(p_2)\weight(p_3)}{W^3}\cdot \ell^{-7}\nonumber\\
& \leq & 10^{-7}\cdot \frac{\weight(p_1)\weight(p_2)\weight(p_3)}{W^3}.\label{eq:bound1}
\end{eqnarray}
For a fixed triple $p_1,p_2,p_3\in \Ff$, face $f$ that is singular of type $1$ for $(p_1,p_2,p_3)$, and spoke $P$ incident to $f$, 
let us denote by $A^1_{P,(p_1,p_2,p_3)}$ the event $A$ considered above.
Let $B^1$ denote the event that some event $A^1_{P,(p_1,p_2,p_3)}$ happens, i.e., $B^1$ is the union of all events $A^1_{P,(p_1,p_2,p_3)}$.
By Lemma~\ref{lem:sing-bound}, for every triple $p_1,p_2,p_3\in \Ff$ there are at most two faces $f$ that are singular of type $1$ for $(p_1,p_2,p_3)$, and for each of them there are at most $3$ spokes $P$ to consider.
Thus, by applying the union bound to~\eqref{eq:bound1} we infer that
$$\Prob(B^1)\leq 6\cdot 10^{-7}\cdot \sum_{p_1\in \Ff}\sum_{p_2\in \Ff}\sum_{p_3\in \Ff} \frac{\weight(p_1)\weight(p_2)\weight(p_3)}{W^3}=6\cdot 10^{-7}\cdot \frac{\weight(\Ff)^3}{W^3}\leq 10^{-6}.$$

Denote by $B^2$ the following event: there exists distinct $p_1,p_2,p_3\in \Ff$, a singular face $f$ of type $2$ for $(p_1,p_2,p_3)$ such that (i) $p_1,p_2,p_3\in \Ss$ and (ii) at least one of the heavy (w.r.t. $\Ff$)
spokes incident to $f$ in the Voronoi diagram of $\{p_1,p_2,p_3\}$ remains a spoke in $\Ss$.
Also, denote by $B^3$ the following event: there exists distinct $p_0,p_1,p_2,p_3\in \Ff$, a singular face $f$ of type $3$ for $(p_0,p_1,p_2,p_3)$ 
such that (i) $p_0,p_1,p_2,p_3\in \Ss$ and (ii) at least one of the heavy (w.r.t. $\Ff$) spokes incident to $f$ in the Voronoi diagram of $\{p_0,p_1,p_2,p_3\}$ remains a spoke in $\Ss$.
Similar calculations as for $B^1$ yield the following:
$$\Prob(B^2)\leq 10^{-6}\qquad \textrm{and}\qquad \Prob(B^3)\leq 10^{-6}.$$
Here, when estimating $\Prob(B^3)$ we sum over quadruples of objects instead of triples.
By Lemma~\ref{lem:diag-sing}, we conclude that the probability that any spoke in the diagram $H_\Ss$ is heavy is at most $3\cdot 10^{-6}\leq \frac{1}{6}$.
\cqed\end{proof}

We are left with diamonds.
Call a diamond $\Delta_\mathcal{S}(e)$ in $H_{\mathcal{S}}$ {\em{heavy}} if its weight with respect to $\mathcal{F}$ is larger than $\eta$.
The next check follows by essentially the same estimation as Claim~\ref{cl:spokes}.

\begin{claim}\label{cl:diamonds}
The probability that there is a heavy diamond in $H_\mathcal{S}$ is at most $\frac{1}{6}$.
\end{claim}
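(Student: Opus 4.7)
The plan is to closely parallel the proof of Claim~\ref{cl:spokes}, with edges of $H_\Ss$ playing the role that branching points played there. I would enumerate candidate diamonds, bound the probability that each is simultaneously realized in $H_\Ss$ and heavy, and union-bound.

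Every edge $e$ of $H_\Ss$ has two endpoints, and by Lemma~\ref{lem:diag-sing} each is a singular face of some type $k\in\{1,2,3\}$ for a triple or quadruple of objects from $\Ss$. A candidate is therefore specified by the pair of data $(k_1,\Ff_1),(k_2,\Ff_2)$ describing the two singular face structures, together with the choice of four vertices $u_{1,1},u_{1,2}$ on $f_1$ and $u_{2,1},u_{2,2}$ on $f_2$ pinning down $e$. By Lemma~\ref{lem:sing-bound} and the $3$-regularity of the diagram, picking $f_1,f_2$ and the four vertices only costs $\Oh(1)$ possibilities once $(\Ff_1,\Ff_2)$ is fixed.

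The key simplification that makes the sampling argument go through is that the boundary of $\Diam_\Ss(e)$ is a function of the configuration alone, independently of the remainder of $\Ss$. Indeed, by uniqueness of shortest paths in $G$, for each $j\in\{1,2\}$ the $\widehat{T}(p_j)$-path between the two designated vertices on $f_1,f_2$ coincides with the merge of their respective shortest paths to $p_j$ at the first common vertex, and this merge depends only on these vertices and on $p_j$, never on the rest of $\mathbb{M}_\Ss(p_j)$. Hence the interior of the ``canonical'' diamond, and in particular the set $\mathcal{Z}\subseteq\Ff$ of objects of $\Ff$ entirely contained in it, is determined by the configuration. For a fixed candidate, simultaneous realization and heaviness in $H_\Ss$ implies (i) every object of $\Ff_1\cup\Ff_2$ lies in $\Ss$ and (ii) no object of $\mathcal{Z}$ lies in $\Ss$; part (ii) follows from the structural observation built into the very definition of diamond, that an object entirely inside $\Diam_\Ss(e)$ cannot belong to $\Ss$, so if any $q\in\mathcal{Z}$ did lie in $\Ss$ the actual diamond could not coincide with the canonical one specified by the configuration. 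By independence, the probability of the conjunction is at most
\[
\prod_{p\in\Ff_1\cup\Ff_2} q_p \cdot \exp\!\left(-\tfrac{\ell}{W}\weight(\mathcal{Z})\right) \ \le\ \ell^{-10}\cdot \prod_{p\in\Ff_1\cup\Ff_2} q_p,
\]
exactly mirroring the estimate in~\eqref{eq:bound1}.

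Summing $\prod q_{p_i}$ over all ordered tuples of length at most $8$ in $\Ff$ yields at most $(\ell\weight(\Ff)/W)^8\le\ell^8$, so the grand total is $\Oh(\ell^{-2})$, comfortably below $\tfrac{1}{6}$ for $\ell\ge 10$ once the $\Oh(1)$ factors from type choices, singular-face multiplicities, and incident edges are absorbed. The step I expect to require the most care is the implication in (ii): arguing rigorously that planting an interior object of $\mathcal{Z}$ into $\Ss$ really does destroy the candidate. While intuitive, pinning it down cleanly will need a careful comparison of the ``canonical'' and ``realized'' diamonds using the Voronoi machinery of~\cite{MarxP15}; once this is in place, the remainder is a direct rerun of the spoke calculation.
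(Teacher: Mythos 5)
Your approach mirrors the paper's: enumerate candidate diamonds by the singular-face structure at the two endpoints, bound the probability that a fixed heavy candidate is realized as a product of ``all defining objects are sampled'' and ``no interior object is sampled,'' and union-bound. Your identification of the key structural fact---that the boundary of the candidate diamond, and hence its set of interior objects $\mathcal{Z}$, is determined by the configuration data alone via uniqueness of shortest paths, so that realization forces all of $\mathcal{Z}$ to be excluded---matches the paper's reasoning (``note that these objects do not belong to $\Ss$'' / ``all objects entirely contained in the interior of $\Diam$ need to be not included in $\Ss$''), and the step you flag as delicate is indeed exactly the step on which the whole argument hinges.

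The one place your bookkeeping is looser than it needs to be is the tuple length. You sum over tuples of length up to $8$ and obtain $\ell^{-10}\cdot\ell^8=\ell^{-2}$, which at $\ell=10$ leaves almost no room for the $\Oh(1)$ combinatorial factors (the paper estimates these at roughly $100$, and $100\cdot 10^{-2}=1>\frac16$). But the two tuples $\Ff_1,\Ff_2$ always share the two objects $p_1,p_2$ whose cells are incident to $e$: for a type-$1$ or type-$2$ face both of $p_1,p_2$ are in the defining triple, and for a type-$3$ face all three edges at the branching point lie on the boundary of the $p_0$-cell, so again the defining quadruple contains $p_1,p_2$. Hence $|\Ff_1\cup\Ff_2|\le 6$, the per-case sum is at most $\ell^{-10}\cdot\ell^{6}=\ell^{-4}$, and for $\ell\ge 10$ this comfortably absorbs a three-digit constant. (The paper even sharpens this further by treating the dominant case, both endpoints type $1$, where $m=4$ and the bound is $\ell^{-6}$.) With that correction your proof closes cleanly; as written, the numerics are not actually ``comfortable.''
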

\begin{proof}[Proof of claim]
Suppose $e$ is an edge of $H_\Ss$ with endpoints $f_1,f_2$, and let $p_1,p_2$ be the objects of $\Ss$ corresponding to faces of $H_\Ss$ incident to $e$. 
By Lemma~\ref{lem:diag-sing}, each $f_t$ ($t\in \{1,2\}$) is either a type-$1$ singular face for a triple of objects from $\Ss$, 
or a type-$2$ singular face for a triple of objects from $\Ss$,
or a type-$3$ singular face for a quadruple of objects from $\Ss$. Suppose for a moment that both $f_1$ and $f_2$ are type-$1$ singular faces for triples of objects from $\Ss$; then it is easy to see that
$f_1$ is a type-$1$ singular face for $(p_1,p_2,q_1)$ for some object $q_1\in \Ss$, while $f_2$ is a type-$1$ singular face for $(p_1,p_2,q_2)$ for some object $q_2\in \Ss$.
We will further assume that $q_1\neq q_2$ and discuss the other case, as well as the cases when $f_1$ or $f_2$ are not type-$1$ singular faces, at the end, as the reasoning for them is analogous.

All in all, we have a quadruple of pairwise different objects $p_1,p_2,q_1,q_2$. For the diamond $\Diam=\Diam_\Ss(e)$ to arise in $H_{\Ss}$, the following two events need to happen simultaneously:
\begin{itemize}
\item objects $p_1,p_2,q_1,q_2$ need to simultaneously be included in $\Ss$; and
\item all objects entirely contained in the interior of $\Diam$ need to be not included in $\Ss$.
\end{itemize}
These two events are independent. The probability of the first is 
$$\ell^4\cdot \frac{\weight(p_1)\weight(p_2)\weight(q_1)\weight(q_2)}{W^4}.$$
Let $\cal G$ be the family of objects entirely contained in the interior of $\Diam$. By the assumption that $\Diam$ is heavy, we have that $\sum_{r\in \cal G} \weight(r)\geq \eta$.
The probability that no object of $\cal G$ is included in $\Ss$ is upper bounded by
\begin{eqnarray*}
\prod_{r\in \cal G} (1-q_r) & \leq & \prod_{r\in \cal G} \exp(-q_r)=\exp\left(-\sum_{r\in \cal G} q_r\right)\\
& = & \exp\left(-\frac{\ell}{W}\cdot \sum_{r\in \cal G} \weight(r)\right)\leq \exp\left(-10\ln \ell\right)=\ell^{-10}.
\end{eqnarray*}
Therefore, the probability that $\Diam$ arises in $H_{\Ss}$ is upper bounded by 
$$\frac{\weight(p_1)\weight(p_2)\weight(q_1)\weight(q_2)}{W^4}\cdot \ell^4\cdot \ell^{-10}=\frac{\weight(p_1)\weight(p_2)\weight(q_1)\weight(q_2)}{W^4}\cdot \ell^{-6}.$$
Now for every quadruple $(p_1,p_2,q_1,q_2)$ of objects from $\Ff$ there are at most $4$ diamonds induced by this quadruple as above, as there are at most $2$ type-$1$ singular faces for $(p_1,p_2,q_1)$, and likewise
for $(p_1,p_2,q_2)$. Therefore, the total probability that there exists a heavy diamond with both endpoints being singular faces of type-$1$ and $q_1\neq q_2$ is bounded by
$$4\cdot \ell^{-6}\cdot \sum_{(p_1,p_2,q_1,q_2)\in \Ff^4} \frac{\weight(p_1)\weight(p_2)\weight(q_1)\weight(q_2)}{W^4} =4\cdot \ell^{-6}\cdot \frac{\weight(\Ff)^4}{W^4}\leq 4\cdot 10^{-6},$$
where the last inequality follows from the assumption that $\ell\geq 10$.
The reasoning for the case when $q_1=q_2$ is analogous (we sum over triples instead of quadruples), and similarly for the cases when we are dealing with singular faces of type different than $1$.
The number of different cases one needs to consider is bounded by $100$, so summing all the probabilities we conclude that the probability that there is a heavy diamond in $H_\Ss$ is at most $\frac{1}{6}$.
\cqed\end{proof}

Concluding, assertion $4\leq |\mathcal{S}|\leq 2\ell$ does not hold with probability at most $\frac{7}{12}$,
there is a heavy spoke in $H_\mathcal{S}$ with probability at most $\frac{1}{6}$, 
and there is a heavy diamond in $H_\mathcal{S}$ with probability at most $\frac{1}{6}$.
Hence, with probability at least $\frac{1}{12}$ neither of the above holds, so there exists a subfamily $\mathcal{S}$ satisfying all the postulated conditions.
\end{proof}

\subsection{Balanced nooses}

We proceed with the proof of Lemma~\ref{lem:balanced-sep} by explaining the second ingredient: balanced separators in Voronoi diagrams.
In general, short embedding-respecting separators in the Voronoi diagram --- so-called {\em{nooses}} --- correspond to Voronoi separators we are looking for.
We start by defining nooses and showing how the existence of a {\em{sphere-cut decomposition}} of small width --- a hierarchical decomposition of the diagram using nooses --- 
implies the existence of a short noose that breaks the instance in a balanced way. 

We remark that in~\cite{Har-Peled2014}, this part of the reasoning is essentially done by considering the {\em{radial}} graph of the Voronoi diagram and applying the weighted balanced separator theorem of
Miller~\cite{Miller86} to it. Such approach would be also applicable in our case, but we find the approach via sphere-cut decompositions more explanatory regarding how separators in the (radial graph of the)
diagram correspond to separators in the instance.

\subparagraph*{Sphere-cut decompositions.} We now recall the framework of {\em{sphere-cut decompositions}}, which are embedding-respecting hierarchical decompositions of planar graphs.

A {\em{branch decomposition}} of a graph $H$ is a pair $(T,\eta)$ where $T$ is a tree with all internal nodes
having degree $3$, and $\eta$ is a bijection between the edge set of $H$ and the leaf set of $T$ (for clarity, we always use the term {\em{node}} for a vertex of a decomposition tree).
Take any edge $e$ of $T$ and consider removing it from $T$; then $T$ breaks into two subtrees, say $T_1,T_2$. Let $A_1,A_2$ be the preimages of the leaf sets of $T_1,T_2$ under $\eta$, respectively;
then $(A_1,A_2)$ is a partition of the edge set of $H$.
The {\em{width}} of the edge $e$ is the number of vertices of $H$ incident to both an edge of $A_1$ and to an edge of $A_2$, and the {\em{width}}
of the branch decomposition $(T,\eta)$ is the maximum among the widths of the edges of $T$. The {\em{branchwidth}} of $H$ is the minimum possible width of a branch decomposition of $H$.

Let $H$ be a connected plane graph embedded in a sphere $\Sigma$. A {\em{noose}} in $H$ is a closed, directed curve $\gamma$ on $\Sigma$ without self-crossings
that meets $H$ only at its vertices and visits every face of $H$ at most once.
Note that removing $\gamma$ from the sphere $\Sigma$ breaks it into two open disks: for one of them $\gamma$ is the clockwise traversal of the perimeter, and for the other it is the counter-clockwise traversal (fixing
an orientation of $\Sigma$). The first disk shall be called $\mathbf{enc}(\gamma)$ while the second shall be called $\mathbf{exc}(\gamma)$ (for {\em{enclosed}} and {\em{excluded}}).
Two nooses $\gamma,\gamma'$ are {\em{equivalent}} if they are homotopic on $\Sigma$ with a homotopy that fixes the embedding of $H$; in other words, $\gamma'$ can be obtained from $\gamma$ by continuous
transformations within faces of $H$.

A {\em{sphere-cut decomposition}} of $H$ is a triple $(T,\eta,\delta)$ where $(T,\eta)$ is a branch decomposition of $H$ and $\delta$ maps ordered pairs of adjacent nodes of
$T$ to nooses on $\Sigma$ (w.r.t.~$H$) such that the following conditions are satisfied for each pair of adjacent nodes of $T$:
\begin{itemize}
\item $\delta(x,y)$ is equal to $\delta(y,x)$ reversed (in particular $\mathbf{enc}(\delta(x,y))=\mathbf{exc}(\delta(y,x))$);
\item $\mathbf{enc}(\delta(x,y))$ contains all the edges of $H$ mapped to the component of $T-xy$ containing~$y$, while $\mathbf{exc}(\delta(y,x))$ contains all the edges of $H$ mapped to the other component of $T-xy$.
\end{itemize}
The following result follows from~\cite{GuT12,SeymourT94} and was formulated in exactly this way in~\cite{MarxP15}.

\begin{theorem}\label{thm:sc-decomp}
Every $n$-vertex sphere-embedded multigraph that is connected and bridgeless has a sphere-cut decomposition of width at most $\sqrt{4.5 n}$.
\end{theorem}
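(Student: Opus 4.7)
The plan is to combine two independent classical results about plane graphs: an $O(\sqrt{n})$ bound on the branchwidth of planar graphs, and the fact that optimal branch decompositions of plane graphs can always be realized by nooses on the sphere. Neither component requires development from scratch; the work lies in assembling them cleanly so that the output is genuinely a sphere-cut decomposition of the quoted width.

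First, I would invoke the branchwidth bound of Fomin and Thilikos, which states that every $n$-vertex planar graph has branchwidth at most $\sqrt{4.5 n}$. This provides a branch decomposition $(T,\eta)$ of $H$ of the required width but with no geometric structure attached to its edge-cuts. The proof of this bound proceeds via planar separator-type arguments coupled with the (constant-factor) equivalence of branchwidth and treewidth; since the constant $4.5$ is precisely what is needed in the statement of Theorem~\ref{thm:sc-decomp}, this exact form of the bound is the right one to cite.

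Second, I would invoke the Seymour--Thomas theorem from \emph{Call routing and the ratcatcher}: for a connected bridgeless plane multigraph, any branch decomposition of width $k$ can be transformed, without increasing the width, into one in which every edge-cut $(A_1,A_2)$ is realized as a noose on $\Sigma$. Concretely, the shared vertex set between the two sides of an edge $xy$ of $T$ can be arranged in a cyclic order around a closed curve that avoids the interiors of edges and visits each face at most once; this curve is then taken as $\delta(x,y)$. Defining $\delta$ in a consistent top-down manner along $T$ and verifying the two compatibility conditions for $\delta(x,y)$ versus $\delta(y,x)$ yields a valid sphere-cut decomposition. Combined with the branchwidth bound above, this proves Theorem~\ref{thm:sc-decomp}; the efficient construction is the contribution of Gu and Tamaki, though only existence is needed here.

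The main obstacle is the second step, namely showing that every edge-cut in a planar branch decomposition can be realized as a noose. This is precisely where the hypotheses of connectivity and bridgelessness enter: a bridge would force any separating curve through the two sides to double back on itself, violating the no-self-crossings requirement in the definition of a noose, while disconnected components would break the cyclic structure of boundary vertices around the noose. Handling this correctly is the heart of the Seymour--Thomas argument, and it is the reason the theorem is stated under exactly these two assumptions on $H$.
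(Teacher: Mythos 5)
Your proposal is correct and matches the paper's approach: the paper does not prove Theorem~\ref{thm:sc-decomp} from scratch but cites it as following from Seymour--Thomas and Gu--Tamaki (with the statement lifted verbatim from Marx--Pilipczuk), which is exactly the combination you outline of the $\sqrt{4.5n}$ branchwidth bound for planar graphs and the realization of optimal branch decompositions of connected, bridgeless plane multigraphs by nooses. The only quibble is your side remark that the Fomin--Thilikos bound goes ``via the constant-factor equivalence of branchwidth and treewidth''; the actual proof argues about branchwidth directly through the ratcatcher/tangle machinery rather than detouring through treewidth, but this does not affect the validity of the assembly.
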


We note that in Theorem~\ref{thm:sc-decomp}, the assumption that the multigraph is bridgeless is necessary, as multigraphs with bridges do not have sphere-cut decompositions at all.

Suppose $(T,\eta,\delta)$ is a sphere-cut decomposition of $G$. It is straightforward to see that we may adjust the nooses $\delta(x,y)$ for $x,y$ ranging over adjacent nodes of $T$ so that the following
condition is satisfied: if node $x$ has neighbors $y_1,y_2,y_3$ in $T$, then 
$\mathbf{enc}(\delta(y_1,x))$ is the disjoint union of $\mathbf{enc}(\delta(x,y_2))$, $\mathbf{enc}(\delta(x,y_3))$, and $(\delta(x,y_2)\cap \delta(x,y_3))\setminus \delta(y_1,x)$.
Sphere-cut decompositions satisfying this condition will be called {\em{faithful}}.
It is easy to see that any sphere-cut decomposition can be made faithful by changing each noose to an equivalent one.

\subparagraph*{Separator theorem for nooses.}
We now state a separator theorem for nooses drawn from a sphere-cut decomposition of a given sphere-embedded multigraph.
The theorem is weighted with respect to a measure defined as follows. 
Suppose $\mathcal{R}$ is a finite family of pairwise disjoint {\em{objects}} on a sphere $\Sigma$,
where each object $p\in \mathcal{R}$ is a nonempty arc-connected subset of $\Sigma$ with associated nonnegative weight $\mathbf{w}(p)$.
For an open disk $\Delta\subseteq \Sigma$, define its {\em{$\mathcal{R}$-measure}} $\mu_{\mathcal{R}}(\Delta)$ as the total weight of objects from $\mathcal{R}$ that are entirely contained in $\Delta$.

\begin{lemma}\label{lem:sc-to-balanced}
Let $H$ be a connected, bridgeless multigraph embedded on a sphere $\Sigma$. 
Let $\mathcal{R}$ be a weighted family of pairwise disjoint objects on $\Sigma$ and let $W=\mathbf{w}(\mathcal{R})$.
Suppose further $(T,\eta,\delta)$ is a faithful sphere-cut decomposition of $H$ such that for every pair $(x,y)$ of adjacent nodes in $T$ such that $x$ is a leaf, we have $\mu_{\mathcal{R}}(\mathbf{enc}(\delta(y,x)))<\frac{9}{20}W$.
Then there exists a noose $\gamma$ w.r.t $H$, which is one of the nooses in the sphere-cut decomposition $(T,\eta,\delta)$, such that the following hold:
$$\mu_{\mathcal{R}}(\mathbf{enc}(\gamma))\leq \frac{9}{10}W\qquad\textrm{and}\qquad \mu_{\mathcal{R}}(\mathbf{exc}(\gamma))\leq \frac{9}{10}W.$$
\end{lemma}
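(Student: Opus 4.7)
The plan is to root $T$ at an arbitrary leaf $r_0$ and, for each non-root node $v$, let $f(v) = \muR(\enc(\delta(p(v), v)))$ denote the $\Rr$-measure of the disk on $v$'s subtree side, where $p(v)$ is the parent of $v$. The first ingredient will be the following consequence of faithfulness: for any internal node $v$ with children $c_1, c_2$,
\begin{equation*}
\enc(\delta(p(v), v)) = \enc(\delta(v, c_1)) \sqcup \enc(\delta(v, c_2)) \sqcup E_v,
\end{equation*}
where $E_v = (\delta(v, c_1) \cap \delta(v, c_2)) \setminus \delta(p(v), v)$ is a finite set of vertices of $H$ lying on both nooses $\delta(v, c_i)$. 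Since the two child disks are open and disjoint and $E_v$ is zero-dimensional, every object entirely contained in $\enc(\delta(p(v), v))$ either lies entirely in one of the child disks or---being arc-connected---contains a point of $E_v$ and thus intersects both nooses $\delta(v, c_1)$ and $\delta(v, c_2)$. Writing $s(v) := f(v) - f(c_1) - f(c_2) \geq 0$, this $s(v)$ is precisely the total weight of such ``straddling'' objects.

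If every non-root node $v$ satisfies $f(v) \leq \frac{9}{10}W$, I return $\gamma = \delta(r_0, c)$ where $c$ is the unique child of $r_0$: the $c$-side has measure $f(c) \leq \frac{9}{10}W$ and the $r_0$-side has measure at most $\frac{9}{20}W$ by the leaf hypothesis. Otherwise I pick a \emph{deepest} node $v$ with $f(v) > \frac{9}{10}W$; the leaf hypothesis forces $v$ to be internal, and its two children $c_1, c_2$ satisfy $f(c_i) \leq \frac{9}{10}W$ by maximality of depth. If some child, say $c_1$, has $f(c_1) \geq \frac{1}{10}W$, then $\gamma = \delta(v, c_1)$ already works: its $c_1$-side has measure $f(c_1) \leq \frac{9}{10}W$, and its $v$-side has measure at most $W - f(c_1) \leq \frac{9}{10}W$, because measures on opposite sides of a noose sum to at most $W$.

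The main obstacle is the remaining case, where $f(c_1), f(c_2) < \frac{1}{10}W$. Here the straddle weight will exceed $s(v) > \frac{9}{10}W - \frac{2}{10}W = \frac{7}{10}W$. The key observation is that each straddling object intersects the noose $\delta(v, c_1)$ at a point of $E_v$, hence it is excluded from both sides of that noose. Consequently,
\begin{equation*}
\muR(\enc(\delta(v, c_1))) + \muR(\enc(\delta(c_1, v))) \leq W - s(v) < \frac{3}{10}W,
\end{equation*}
so $\gamma = \delta(v, c_1)$ is balanced with both sides of measure at most $\frac{3}{10}W \leq \frac{9}{10}W$. The crux---and the step I expect to require the most care---is recognizing that the very objects that obstruct the additive identity $f(v) = f(c_1) + f(c_2)$ are precisely those that hit both child nooses, and therefore end up balancing the child edge ``for free''.
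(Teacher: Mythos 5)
Your proof is correct, and it takes a genuinely different route from the paper. The paper first dichotomizes on whether all nooses in the decomposition are \emph{$\mathcal{R}$-light} (total weight of crossing objects at most $W/10$): a non-light noose works outright, while if every noose is light one obtains approximate additivity of $\mu(x,y)=\mu_{\mathcal{R}}(\mathbf{enc}(\delta(x,y)))$ across edges and nodes, which feeds into the separately-stated Lemma~\ref{lem:balanced-edge}; that lemma orients each edge of $T$ toward the heavier side and extracts a balanced edge at a node of outdegree $0$. You instead root $T$ at a leaf, descend to a \emph{deepest} node $v$ whose subtree-side measure $f(v)$ exceeds $\frac{9}{10}W$ (the leaf hypothesis forces $v$ to be internal), and case-split on the children: if some child has $f(c_1)\geq\frac{1}{10}W$, the trivial subadditivity $\mu_{\mathcal{R}}(\mathbf{enc}(\gamma))+\mu_{\mathcal{R}}(\mathbf{exc}(\gamma))\leq W$ finishes; if both children are light, the straddle weight $s(v)>\frac{7}{10}W$ consists entirely of objects hitting $E_v\subseteq\delta(v,c_1)\cap\delta(v,c_2)$, so they are excluded from both sides of $\delta(v,c_1)$, and that child noose is automatically balanced. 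This second case plays the role of the paper's ``not $\mathcal{R}$-light'' branch, but you obtain it locally, directly, and without proving or invoking the general-purpose Lemma~\ref{lem:balanced-edge}. One cosmetic inaccuracy: $E_v$ need not be a ``finite set of vertices of $H$'' --- canonical nooses can share whole arcs through faces, not just vertices --- but your argument never uses this; it only needs $E_v\subseteq\delta(v,c_1)\cap\delta(v,c_2)$ together with arc-connectedness of the objects, and both hold.
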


The proof of Lemma~\ref{lem:sc-to-balanced} is standard: we find a balanced edge $(x,y)$ in the decomposition $(T,\eta,\delta)$ and $\delta(x,y)$ is the sought noose. The fact that nooses appearing in $(T,\eta,\delta)$ may intersect objects from $\mathcal{R}$ requires some technical attention.
We start by proving a general-use statement using which we can extract balanced separators from hierarchical decompositions, provided we want balanceness with respect to some measure $\mu(\cdot)$ that the
decomposition roughly obeys. 

\begin{lemma}\label{lem:balanced-edge}
Let $T$ be a tree with all internal nodes of degree $3$ and let $W$ be a positive real.
Suppose that there is a function $\mu$ which maps ordered pairs of adjacent nodes of $T$ to nonnegative reals such that the following conditions are satisfied:
\begin{enumerate}[label=(S\arabic*),ref=(S\arabic*)]
\item\label{p:edge} For every pair $x,y$ of adjacent nodes in $T$, we have $$\frac{9}{10}W\leq \mu(x,y)+\mu(y,x)\leq W.$$
\item\label{p:node} If $x$ is an internal node of $T$ with neighbors $y_1,y_2,y_3$, then $$\mu(x,y_2)+\mu(x,y_3)\leq \mu(y_1,x)\leq \mu(x,y_2)+\mu(x,y_3)+\frac{1}{10}W.$$
\item\label{p:leaf} Whenever $x,y$ are adjacent nodes in $T$ and $x$ is a leaf, we have $\mu(y,x)<\frac{9}{20}W$.
\end{enumerate}
Then there exists an edge $xy$ of $T$ such that
$$\frac{1}{4}W<\mu(x,y)<\frac{3}{4}W\qquad\textrm{and}\qquad\frac{1}{4}W<\mu(y,x)<\frac{3}{4}W.$$
\end{lemma}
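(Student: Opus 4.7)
The plan is to identify a balanced edge among those incident to a sink of a natural orientation of $T$. First I would orient each edge $xy$ from $x$ toward $y$ whenever $\mu(x,y) \leq \mu(y,x)$, breaking ties arbitrarily; any orientation of a tree is acyclic, so the resulting directed tree has a sink $v$, meaning $\mu(u,v) \geq \mu(v,u)$ for every neighbor $u$ of $v$. I would then argue that some edge incident to $v$ is the desired balanced edge.

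My next step would be to rule out that $v$ is a leaf: if $v$ were a leaf with unique neighbor $u$, then property~\ref{p:leaf} would give $\mu(u,v) < \tfrac{9}{20}W$, whereas the sink property combined with the lower bound in~\ref{p:edge} would yield $2\mu(u,v) \geq \mu(u,v) + \mu(v,u) \geq \tfrac{9}{10}W$, a contradiction. So $v$ is internal and has three neighbors $y_1, y_2, y_3$.

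The heart of the argument is then a short aggregate calculation over these three edges. Setting $A = \sum_{i=1}^3 \mu(v,y_i)$ and $B = \sum_{i=1}^3 \mu(y_i,v)$, I would sum the lower bound of~\ref{p:edge} over the three incident edges to obtain $A + B \geq \tfrac{27}{10}W$, and sum the upper bound of~\ref{p:node} applied once per choice of ``special'' neighbor at $v$ to obtain $B \leq 2A + \tfrac{3}{10}W$. Together these force $A \geq \tfrac{4}{5}W$, and by the pigeonhole principle some $\mu(v,y_i) \geq \tfrac{4}{15}W > \tfrac{1}{4}W$. The sink property then gives $\mu(y_i,v) \geq \mu(v,y_i) > \tfrac{1}{4}W$ as well, and the upper bound in~\ref{p:edge} pushes both values strictly below $\tfrac{3}{4}W$, so $vy_i$ is the edge we want.

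The main subtle point is essentially arithmetic: the pigeonhole bound $\tfrac{4}{15}W$ only barely exceeds $\tfrac{1}{4}W$, so the numerical constants in~\ref{p:edge}, \ref{p:node}, and~\ref{p:leaf} are essentially tight for this line of argument. Apart from that, the reasoning is direct and does not seem to require any auxiliary construction beyond the sink orientation itself.
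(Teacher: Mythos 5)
Your proof is correct and takes essentially the same route as the paper's: locate a node $v$ for which every neighbor $u$ satisfies $\mu(u,v)\geq\mu(v,u)$, rule out $v$ being a leaf via \ref{p:leaf} and the lower bound in \ref{p:edge}, derive $\sum_i\mu(v,y_i)\geq\frac{4}{5}W$ from \ref{p:edge} and \ref{p:node}, and pigeonhole to find the balanced edge. One small slip (which the paper's own writeup shares): with the orientation rule as you state it---orient $x\to y$ when $\mu(x,y)\leq\mu(y,x)$---a sink $v$ actually satisfies $\mu(v,u)\geq\mu(u,v)$ for each neighbor $u$, the reverse of what you then use; take a source of that orientation (or reverse the rule), after which everything goes through, and your symmetric summation of \ref{p:edge} and \ref{p:node} over all three neighbors is a tidy alternative to the paper's targeted application of each inequality to the maximizing neighbor.
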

\begin{proof}
Orient the edges of $T$ as follows: if $x,y$ are adjacent nodes, then orient $xy$ toward $x$ if $\mu(x,y)>\mu(y,x)$, toward $y$ if $\mu(x,y)<\mu(y,x)$, and arbitrarily if $\mu(x,y)=\mu(y,x)$.
Thus, $T$ becomes an oriented tree with $|V(T)|$ nodes and $|V(T)|-1$ edges, implying that there exists a node $x$ of $T$ that has outdegree $0$.

We first check that $x$ is not a leaf of $T$. Suppose otherwise and let $y$ be the unique neighbor of~$x$.
By~\ref{p:edge} we have that $\mu(y,x)+\mu(x,y)\geq \frac{9}{10}W$. 
Since $xy$ was oriented toward $x$, we infer that $\mu(y,x)\geq \mu(x,y)$, implying $\mu(y,x)\geq \frac{9}{20}W$.
This is a contradiction with~\ref{p:leaf}.

Therefore $x$ is an internal node of $T$. Let $y_1,y_2,y_3$ be the neighbors of $x$.
Without loss of generality suppose $\mu(x,y_1)$ is the largest among $\mu(x,y_1),\mu(x,y_2),\mu(x,y_3)$.
Using~\ref{p:edge} and~\ref{p:node} we infer that
$$\mu(x,y_1)+\mu(x,y_2)+\mu(x,y_3)\geq \frac{9}{10}W-\mu(y_1,x)+\mu(x,y_2)+\mu(x,y_3)\geq \frac{4}{5}W,$$
which implies that $\mu(x,y_1)\geq \frac{4}{15}W>\frac{1}{4}W$. As $xy_1$ was oriented toward $x$, we also have $\mu(x,y_1)\leq \frac{1}{2}W$.
By~\ref{p:edge} we infer that 
$$\mu(y_1,x)\geq \frac{9}{10}W-\mu(x,y_1)\geq \frac{4}{10}W>\frac{1}{4}W\quad\textrm{and}\quad\mu(y_1,x)\leq W-\mu(x,y_1)\leq \frac{11}{15}W<\frac{3}{4}W.$$
Thus, the edge $xy_1$ satisfies the requirements from the statement.
\end{proof}

With Lemma~\ref{lem:balanced-edge} in place, we now give the proof of Lemma~\ref{lem:sc-to-balanced}.

\begin{proof}[Proof of Lemma~\ref{lem:sc-to-balanced}]
We shall say that a closed non-self-crossing curve $\gamma$ on $\Sigma$ is {\em{$\Rr$-light}} if the total weight of objects in $\Rr$ that intersect $\gamma$ is at most $\frac{W}{10}$.

Suppose first that the sphere-cut decomposition $(T,\eta,\delta)$ contains at least one noose $\gamma$ which is not $\Rr$-light.
Then the total weight of objects of $\Rr$ intersected by $\gamma$ is more than $\frac{W}{10}$, 
hence both $\muR(\enc(\gamma))\leq \frac{9}{10}W$ and $\muR(\exc(\gamma))\leq \frac{9}{10}W$ hold, and $\gamma$ satisfies the required properties.
Hence, we may assume that all the nooses in $(T,\eta,\delta)$ are $\Rr$-light.

We now verify that for $\Rr$-light curves as separators, the requirements of Lemma~\ref{lem:balanced-edge} hold.

\begin{claim}\label{lem:light-almost-all}
Suppose $\gamma$ is an $\Rr$-light curve on $\Sigma$. Then 
$$\frac{9}{10}W \leq \muR(\enc(\gamma))+\muR(\exc(\gamma))\leq W.$$
\end{claim}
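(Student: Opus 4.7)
The plan is to partition $\Rr$ into three classes based on how each object meets $\gamma$, sum their weights, and then invoke $\Rr$-lightness on the class of objects that cross $\gamma$.

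First I would observe that $\Sigma$ decomposes into the disjoint union $\enc(\gamma)\sqcup \gamma \sqcup \exc(\gamma)$. Consequently every $p\in \Rr$ belongs to exactly one of three classes: (i) $p\subseteq \enc(\gamma)$, (ii) $p\subseteq \exc(\gamma)$, or (iii) $p\cap \gamma\neq \emptyset$. The only delicate point is ruling out the ``mixed'' case in which $p$ would have points in both $\enc(\gamma)$ and $\exc(\gamma)$ while missing $\gamma$; here I would invoke arc-connectedness of $p$, since any arc inside $p$ from a point of $\enc(\gamma)$ to a point of $\exc(\gamma)$ must cross the separating curve $\gamma$, which actually places $p$ in class (iii).

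Since the objects of $\Rr$ are pairwise disjoint, the weights of the three classes sum to $W=\weight(\Rr)$, giving
\[
\muR(\enc(\gamma))+\muR(\exc(\gamma))+\!\!\sum_{\substack{p\in\Rr\\ p\cap\gamma\neq\emptyset}}\!\!\weight(p)=W.
\]
The upper bound $\muR(\enc(\gamma))+\muR(\exc(\gamma))\leq W$ is then immediate, and the lower bound follows by applying the definition of $\Rr$-lightness to bound the third sum by $\tfrac{W}{10}$, yielding $\muR(\enc(\gamma))+\muR(\exc(\gamma))\geq \tfrac{9}{10}W$. No substantial obstacle is expected; the whole argument is essentially a book-keeping exercise enabled by the fact that a non-self-crossing closed curve on $\Sigma$ separates the sphere into two open disks.
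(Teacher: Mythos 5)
Your proof is correct and follows essentially the same approach as the paper: partition the objects into those contained in $\enc(\gamma)$, those contained in $\exc(\gamma)$, and those meeting $\gamma$, then bound the last class by $\Rr$-lightness. You spell out the arc-connectedness argument ruling out the ``mixed'' case, which the paper leaves implicit, but the reasoning is the same.
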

\begin{proof}
The right inequality is obvious. For the left one, all objects in $\Rr$ that do not intersect $\gamma$ are entirely contained in either
in $\enc(\gamma)$ or in $\exc(\gamma)$, thus they contribute their weight to the sum $\muR(\enc(\gamma))+\muR(\exc(\gamma))$, while the total weight of objects intersecting $\gamma$ is at most $\frac{W}{10}$
due to $\Rr$-lightness of $\gamma$.
\cqed\end{proof}

\begin{claim}\label{lem:light-sum}
Suppose $\gamma,\gamma_1,\gamma_2$ are $\Rr$-light curves such that $\enc(\gamma)$ is equal to the disjoint union of $\enc(\gamma_1)$, $\enc(\gamma_2)$ and $(\gamma_1\cap \gamma_2)\setminus \gamma$.
Then
$$\muR(\enc(\gamma_1))+\muR(\enc(\gamma_2))\leq \muR(\enc(\gamma))\leq \muR(\enc(\gamma_1))+\muR(\enc(\gamma_2))+\frac{W}{10}.$$
\end{claim}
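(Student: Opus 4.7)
The plan is to prove the two inequalities separately, treating the left one as essentially trivial and concentrating the effort on the right one. The key conceptual observation I would exploit is the hypothesis $\enc(\gamma) = \enc(\gamma_1) \sqcup \enc(\gamma_2) \sqcup ((\gamma_1 \cap \gamma_2)\setminus\gamma)$: it tells us that the symmetric difference of $\gamma_1$ and $\gamma_2$, viewed as point sets on $\Sigma$, is contained in $\gamma$, while $\gamma_1 \cap \gamma_2$ is precisely the ``internal'' dividing curve that splits $\enc(\gamma)$ into $\enc(\gamma_1)$ and $\enc(\gamma_2)$.

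For the left inequality, I would simply note that $\enc(\gamma_1)$ and $\enc(\gamma_2)$ are disjoint subsets of $\enc(\gamma)$, so any object $r \in \Rr$ entirely contained in $\enc(\gamma_1)$ (respectively $\enc(\gamma_2)$) is entirely contained in $\enc(\gamma)$, and no object contributes to both $\muR(\enc(\gamma_1))$ and $\muR(\enc(\gamma_2))$. Summing weights gives the bound immediately.

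For the right inequality, fix an object $r \in \Rr$ with $r \subseteq \enc(\gamma)$ and argue that if $r \not\subseteq \enc(\gamma_1)$ and $r \not\subseteq \enc(\gamma_2)$, then $r$ intersects $\gamma_1$. The reasoning I would give: because $r$ is arc-connected and $\enc(\gamma)$ is decomposed as a disjoint union of $\enc(\gamma_1)$, $\enc(\gamma_2)$, and $(\gamma_1 \cap \gamma_2) \setminus \gamma$, the object $r$ must meet the ``seam'' $(\gamma_1 \cap \gamma_2)\setminus \gamma$, or lie in both open disks (impossible since they are disjoint). In either case $r$ intersects $\gamma_1$. Now I invoke the $\Rr$-lightness of $\gamma_1$: the total weight of objects in $\Rr$ meeting $\gamma_1$ is at most $W/10$. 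Partitioning the objects of $\Rr$ contained in $\enc(\gamma)$ into those contained in $\enc(\gamma_1)$, those contained in $\enc(\gamma_2)$, and the remainder, the first two groups contribute at most $\muR(\enc(\gamma_1)) + \muR(\enc(\gamma_2))$ and the third group contributes at most $W/10$.

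The only step requiring real care is the connectedness argument that identifies the ``leftover'' objects as ones intersecting $\gamma_1$ (equivalently $\gamma_2$); everything else is bookkeeping. In particular I would make sure to handle the edge case where $r$ meets $\gamma_1 \cap \gamma_2$ but not the ``exclusive'' part $\gamma_1 \setminus \gamma_2$: since $r \subseteq \enc(\gamma)$ rules out intersection with $\gamma$, and $\gamma_1 \triangle \gamma_2 \subseteq \gamma$, any intersection of $r$ with $\gamma_1$ actually occurs inside $\gamma_1 \cap \gamma_2$, so we can safely charge it to either $\gamma_1$'s or $\gamma_2$'s lightness budget. Using a single one of these two budgets suffices to obtain the additive $W/10$ slack promised by the claim.
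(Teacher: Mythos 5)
Your proof is correct and follows essentially the same route as the paper: partition the objects contained in $\enc(\gamma)$ into those inside $\enc(\gamma_1)$, those inside $\enc(\gamma_2)$, and the remainder; observe by arc-connectedness that any remaining object must meet the seam $(\gamma_1\cap\gamma_2)\setminus\gamma\subseteq\gamma_1$; and charge the remainder to the $\Rr$-lightness of $\gamma_1$. The extra paragraph about the edge case is harmless but unnecessary, since you have already reduced to ``$r$ meets $\gamma_1$,'' which is exactly what lightness controls.
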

\begin{proof}
The left inequality is obvious, because every subset from $\Rr$ entirely contained either in $\enc(\gamma_1)$ or in $\enc(\gamma_2)$ is also entirely contained in $\enc(\gamma)$, while $\enc(\gamma_1)$ and
$\enc(\gamma_2)$ are disjoint. For the right inequality, observe that every object from $\Rr$ that is entirely contained in $\enc(\gamma)$ but not in $\enc(\gamma_1)$ or $\enc(\gamma_2)$ has to intersect
$(\gamma_1\cap \gamma_2)\setminus \gamma$. Since $\gamma_1$ is $\Rr$-light, the total weight of such subsets is at most $\frac{W}{20}$.
\cqed\end{proof}

We proceed with the proof of Lemma~\ref{lem:sc-to-balanced}.
For any ordered pair $(x,y)$ of adjacent nodes in $T$, let $\mu(x,y)=\muR(\enc(\delta(x,y))$.
Since all nooses in $(T,\eta,\delta)$ are $\Rr$-light and $(T,\eta,\delta)$ is faithful, 
Claims~\ref{lem:light-almost-all} and~\ref{lem:light-sum} verify that conditions~\ref{p:edge} and~\ref{p:node} of Lemma~\ref{lem:balanced-edge} hold.
Condition~\ref{p:leaf}, on the other hand, is satisfied by the assumptions of the lemma.
Therefore, we may apply Lemma~\ref{lem:balanced-edge}, yielding a pair $(x,y)$ such that $\delta(x,y)$ satisfies the required assertions.
\end{proof}


\subsection{From nooses to separators: proof of Lemma~\ref{lem:balanced-sep}}\label{sec:nooses-to-seps}

For the proof of Lemma~\ref{lem:balanced-sep} we need to formalize the connection between nooses in the Voronoi diagram and Voronoi separators in the graph.
This connection was largely explored in~\cite{MarxP15}. We now recall and adjust the basic observations; see Figure~\ref{fig:vorsep} for a visualization.

\begin{figure}
                \centering
                \def\svgwidth{0.8\columnwidth}
                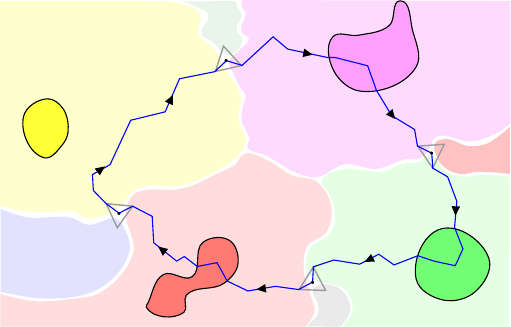
\caption{A Voronoi separator $S$ of length $4$ and its perimeter $\pi(S)$, depicted in blue. The Voronoi regions are depicted in light colors, objects in respective solid colors. 
As in Figure~\ref{fig:diamonds}, the perimeter $\pi(S)$ may not cross some object $p$ traversed by $S$.
The figure is taken almost verbatim from~\cite{MarxP15},
with slight modifications.}\label{fig:vorsep}
\end{figure}

\begin{figure}
                \centering
                \def\svgwidth{0.8\columnwidth}
                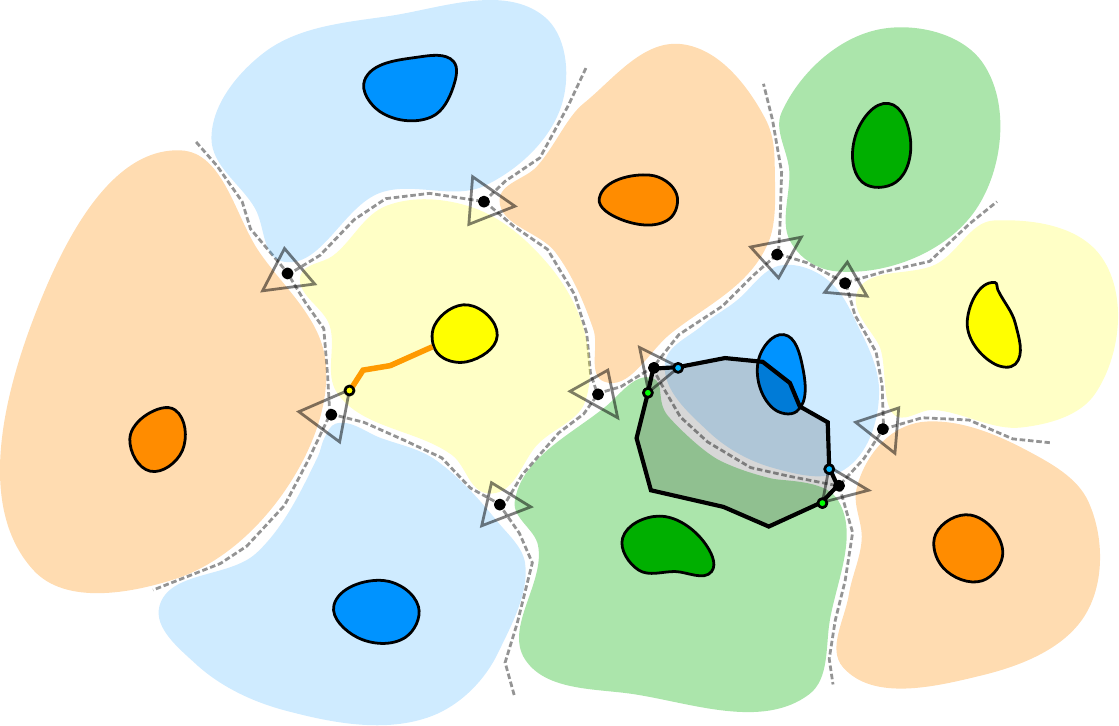
\caption{Part of the Voronoi diagram with a spoke and a diamond highlighted. Depicted triangles are the branching points of the Voronoi diagram. 
Gray dashed curves are the edges of the Voronoi diagram. The spoke of $u$, which is the shortest path leading from $u$ to the object $p$ to whose region $u$ belongs, is depicted in orange.
The diamond induced by the edge $f_1f_2$ is depicted in black (its interior is grayed). Note that the diamond does not need to cross objects $p_1$ and $p_2$.
For instance, as in the figure, it may happen that
the path within $\widehat{T}(p_2)$ connecting $u_{2,1}$ and $u_{2,2}$ does not intersect~$p_2$.}\label{fig:diamonds}
\end{figure}

Suppose $G$ is an edge-weighted connected graph embedded in a sphere $\Sigma$, $\Ss\subseteq \Ff$ are independent families of objects in $G$ with $|\Ss|\geq 4$, and $H=H_{\Ss}$ is the Voronoi diagram of $\Ss$ in $G$. 
Recall that $H$ is a connected $3$-regular multigraph embedded in $G$. Suppose $\gamma$ is a noose in $H$. Then $\gamma$ naturally induces a Voronoi separator $S(\gamma)$ defined as the sequence: 
\begin{equation}\label{eq:sep}
\langle p_1,u_1,f_1,v_1,p_2,u_2,f_2,v_2,\ldots,p_r,u_r,f_r,v_r\rangle
\end{equation}
such that 
\begin{itemize}
\item $f_1,\ldots,f_r$ are consecutive branching points of $H$ visited by $\gamma$;
\item between $f_{i-1}$ and $f_i$, $\gamma$ travels through the face of $H$ corresponding to the object $p_i$;
\item $u_i$ is the vertex of $f_i$ corresponding to the direction of $\gamma$ entering $f_i$; and
\item $v_i$ is the vertex of $f_i$ corresponding to the direction of $\gamma$ leaving $f_i$.
\end{itemize}
Here, by correspondence between vertices of $f_i$ and directions entering/leaving $f_i$ we mean the following. Recall that $f_i$ is a triangular face of $G$ and at the same time it is a degree-$3$ vertex of $H$.
Edges incident to $f_i$ in $H$ are in the dual correspondence to the edges of $f_i$ in $G$, thus the vertices of $f_i$ may be naturally associated with the three incidences between $f_i$ and faces of $H$ around it:
vertex of $f_i$ in $G$ incident to edges $e_1,e_2$ of $f_i$ in $G$ corresponds to the incidence with the face lying between the duals of $e_1,e_2$ in $H$.
A noose in $H$ may enter/leave $f_i$ in three directions corresponding to these three incidences, hence we have a correspondence between the vertices of $f_i$ and the directions.

For a Voronoi separator $S$ as in \eqref{eq:sep}, define its {\em{perimeter}} $\pi(S)$ to be the noose constructed by concatenating
the following $2r$ curves in order, two for each $i=1,2,\ldots,r$:
\begin{itemize}
\item the unique path in $\widehat{T}(p)$ from $v_{i-1}$ to $u_i$, and
\item a curve within $f_i$ from $u_i$ to $v_i$, obtained by concatenating the segment from $u_i$ to the center of $f_i$ and the segment from the center of $f_i$ to $v_i$.
\end{itemize}

Given a noose $\gamma$ with respect to $H$, the {\em{canonical version}} of $\gamma$ is the noose $\pi(S(\gamma))$.
Note that $\gamma$ and its canonical version are equivalent as nooses with respect to $H$.
A noose that is equal to its own canonical version shall be called {\em{canonical}}.
It can be easily seen that if $\gamma$ is a canonical noose with respect to $H$, then $\enc(\gamma)$ is the union of interiors of diamonds induced by edges of $H$ enclosed by $\gamma$; symmetrically for $\exc(\gamma)$.
Then the following is immediate.

\begin{lemma}\label{prop:canon}
Every sphere-cut decomposition of $H$ whose all nooses are canonical is faithful.
\end{lemma}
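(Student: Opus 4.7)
The plan is to verify the faithfulness condition directly at each internal node. Fix an internal node $x$ of $T$ with neighbors $y_1,y_2,y_3$, and let $E_i$ ($i=1,2,3$) be the set of edges of $H$ assigned to the subtree of $T-\{xy_1,xy_2,xy_3\}$ containing $y_i$. These three sets partition $E(H)$, so by the sphere-cut axiom, $\enc(\delta(y_1,x))$ is the open disk on $\Sigma$ enclosing exactly the edges $E_2\cup E_3$, while $\enc(\delta(x,y_2))$ and $\enc(\delta(x,y_3))$ enclose $E_2$ and $E_3$ respectively.

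Next I would invoke the observation recalled just before the lemma: for a canonical noose $\gamma$ with respect to $H$, $\enc(\gamma)$ equals (up to a measure-zero set of spoke/segment curves) the union of the interiors of the diamonds $\Delta_{\Ss}(e)$ for the edges $e$ of $H$ enclosed by $\gamma$. Applying this to all three canonical nooses, I obtain
\[
\enc(\delta(y_1,x))=\bigcup_{e\in E_2\cup E_3}\mathrm{int}(\Delta_{\Ss}(e)),\qquad \enc(\delta(x,y_t))=\bigcup_{e\in E_t}\mathrm{int}(\Delta_{\Ss}(e))\ \text{for }t\in\{2,3\}.
\]
Since $E_2$ and $E_3$ are disjoint and the interiors of diamonds of distinct edges of $H$ are pairwise disjoint (they can share only boundary pieces, namely spokes and face-center segments, by construction of diamonds), the two sets $\enc(\delta(x,y_2))$ and $\enc(\delta(x,y_3))$ are disjoint, and their union differs from $\enc(\delta(y_1,x))$ exactly along the boundary curves that separate diamonds of $E_2$ from diamonds of $E_3$ inside the larger disk.

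The final step is to identify this ``seam'' with the set $(\delta(x,y_2)\cap \delta(x,y_3))\setminus \delta(y_1,x)$. The intersection $\delta(x,y_2)\cap \delta(x,y_3)$ consists of precisely the canonical curve pieces that bound a diamond in $E_2$ on one side and a diamond in $E_3$ on the other, together with the portion of the outer boundary shared with $\delta(y_1,x)$; subtracting the latter isolates the internal seam. The main obstacle is exactly this topological bookkeeping: I must check, at every branching face of $H$ through which two of the nooses pass and at every spanning tree $\widehat{T}(p)$ they traverse, that the pieces they use coincide pointwise on $\Sigma$. This is where canonicality is essential and sufficient: being canonical means the noose is forced to route through each branching face via the fixed segments to its center and along each object $p$ via the unique path in the fixed tree $\widehat{T}(p)$. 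Hence whenever two canonical nooses both cross a given face or traverse a given portion of an object, they do so along identical curves, so the shared pieces match as subsets of $\Sigma$ and the claimed disjoint-union identity holds.
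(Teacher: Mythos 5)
Your proof is correct and follows exactly the approach the paper intends: the paper states the observation that for a canonical noose $\gamma$, $\enc(\gamma)$ is (up to the internal seams) the union of interiors of the diamonds of the enclosed edges, and then declares the lemma ``immediate.'' You have simply spelled out the bookkeeping that the paper leaves implicit, and you correctly identify the crux — canonicality pins every noose to the fixed face-center segments and the fixed trees $\widehat{T}(p)$, so any two canonical nooses that traverse the same face or the same portion of an object do so along literally the same curve on $\Sigma$, which is what makes the disjoint-union identity hold pointwise.
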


We are ready to give a proof of Lemma~\ref{lem:balanced-sep}.

\begin{proof}[Proof of Lemma~\ref{lem:balanced-sep}]
Apply Lemma~\ref{lem:sampling} to $\Ff$ and 
$\ell=s^2=10^6\cdot \left(\frac{1}{\eps}\ln \frac{1}{\eps}\right)^2$.
Note that thus $\ell\geq 10$, as required by Lemma~\ref{lem:sampling}.
This yields a subfamily $\Ss\subseteq \Ff$ and the corresponding Voronoi diagram $H=H_{\Ss}$ satisfying the following properties:
\begin{enumerate}[label=(N\arabic*),ref=(N\arabic*)]
\item $4\leq |\Ss|\leq 2\ell$;
\item\label{p:spoke} every spoke in $H$ has weight with respect to $\Ff$ upper bounded by $10\ln \ell \cdot \frac{W}{\ell}$; and
\item\label{p:diamond} every diamond in $H$ has weight with respect to $\Ff$ upper bounded by $10\ln \ell \cdot \frac{W}{\ell}$.
\end{enumerate}
Recall that $H$ is a $3$-regular connected multigraph embedded in $\Sigma$.
Let us continue the proof under the assumption that $H$ is bridgeless.
The general case when $H$ may have bridges requires attention to a few more technical details but is conceptually no different; 
we present it later. 

Since $H$ is bridgeless, by Theorem~\ref{thm:sc-decomp} we may find a sphere-cut decomposition $(T,\eta,\delta)$ of $H$ of width at most $\sqrt{4.5 |V(H)|}\leq \sqrt{9\ell}=3\sqrt{\ell}$.
We may further assume that each noose used by $(T,\eta,\delta)$ is canonical, hence $(T,\eta,\delta)$ is faithful by Lemma~\ref{prop:canon}.
The following claims now follow easily from the properties of $\Ss$ provided by Lemma~\ref{lem:sampling}.

\begin{claim}\label{cl:noose-weight}
For every pair $(x,y)$ of adjacent nodes of $T$, the total weight of objects of $\Ff$ banned by the Voronoi separator $S(\delta(x,y))$ is at most $\eps W$. 
\end{claim}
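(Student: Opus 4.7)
The plan is to split the objects of $\Ff$ banned by $S = S(\delta(x,y))$ according to the three reasons an object can be banned, and to bound each source of banned weight using the structural facts already in place.

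First, I would record that since $\delta(x,y)$ is a noose of a sphere-cut decomposition of width at most $3\sqrt{\ell}$, it passes through at most $3\sqrt{\ell} = 3s$ branching points of $H$, so the separator $S$ has length $r \leq 3s$ and $|\Dd(S)| \leq 3s$. Because every $p \in \Ff$ has weight at most $s^{-2}W$, this yields $\weight(\Dd(S)) \leq 3s \cdot s^{-2}W = 3W/s$. This single bound handles both the banned objects of $\Ff$ that belong to $\Dd(S)$ and those banned merely because they intersect some object of $\Dd(S)$: since $\Dd(S) \subseteq \Ss \subseteq \Ff$ and $\Ff$ is independent, no object of $\Ff \setminus \Dd(S)$ can meet $\Dd(S)$ at all.

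The remaining banned objects of $\Ff$ are those in conflict with one of the paths $P_i$ or $Q_i$ appearing in $S$. The key step I foresee is to identify each such path with a spoke in the Voronoi diagram $H_\Ss$. Because $\delta(x,y)$ is canonical and travels through the face of $H_\Ss$ corresponding to $p_i$ between the branching points $f_{i-1}$ and $f_i$, the correspondence between vertices of the triangular face $f_i$ and the Voronoi regions around that branching point forces $u_i \in \vorprt_\Ss(p_i)$. Hence $P_i$, being the shortest path from $u_i$ to $p_i$, is exactly the spoke of $u_i$ in $H_\Ss$, and symmetrically $Q_i$ is the spoke of $v_i$. Moreover, the condition for an object of $\Ff$ to be in conflict with such a spoke coincides verbatim with the conflict condition appearing in the definition of $\Ban(S)$. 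Once this correspondence is set up, the objects of $\Ff$ contributing to the banning of $S$ through conflict with $P_i$ (respectively $Q_i$) are precisely those counted by the spoke-weight of $P_i$ (respectively $Q_i$) with respect to $\Ff$.

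At this point property~\ref{p:spoke} of the sampling output gives that each of $P_i, Q_i$ has weight with respect to $\Ff$ at most $10\ln\ell \cdot W/\ell$. Summing over the at most $2r \leq 6s$ spokes and substituting $\ell = s^2$ yields a total contribution of at most $6s \cdot 10\ln(s^2) \cdot W/s^2 = 120 \ln s \cdot W / s$. Combined with the $3W/s$ bound from $\Dd(S)$, the total weight of objects of $\Ff$ banned by $S$ is at most $(3 + 120\ln s)W/s$, which is a short arithmetic check to bound by $\eps W$ using $s = 10^3 \cdot \frac{1}{\eps}\ln\frac{1}{\eps}$ and the assumption $\eps < \frac{1}{10}$ (the dominant term on each side is a constant multiple of $\ln(1/\eps)$, and the constants on the right beat those on the left comfortably). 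The only conceptually nontrivial ingredient is matching $P_i, Q_i$ to genuine spokes in $H_\Ss$; everything else is bookkeeping against properties of $\Ss$ already guaranteed by the sampling lemma.
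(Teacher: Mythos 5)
Your proof is correct and takes essentially the same route as the paper's: it identifies the paths $P_i,Q_i$ with spokes in $H_{\Ss}$, applies property~\ref{p:spoke} over the at most $6s$ spokes, and closes with an arithmetic check against $\eps W$. In fact you are slightly more careful than the paper, which asserts that the banned objects are exactly those in conflict with the spokes and forgets that $\Dd(S)\subseteq\Ff$ is also banned (yet, as you note, is not in conflict with any spoke, and for independent $\Ff$ accounts for all banned-by-intersection objects); your explicit extra bound $\weight(\Dd(S))\le 3s\cdot s^{-2}W=3W/s$ patches this and is harmless given the slack in the constants.
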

\begin{proof}
Since the width of $(T,\eta,\delta)$ is at most $3\sqrt{\ell}$, the noose $\delta(x,y)$ traverses at most $3\sqrt{\ell}=3s$ faces of $H$.
Each traversal, say of the face corresponding to an object $p\in \Ss$, consists of the union of two spokes and a path within $p$.
Observe that the objects banned by $S(\delta(x,y))$ are exactly the objects that are in conflict with any of these at most $6\sqrt{\ell}$ spokes.
By~\ref{p:spoke}, each of these spokes has weight with respect to $\Ff$ upper bounded by $10\ln \ell \cdot \frac{W}{\ell}$, which means that the total weight of objects banned by $S(\delta(x,y))$ is at most
$$6\sqrt{\ell}\cdot 10\ln \ell \cdot \frac{W}{\ell}=60\ln \ell \cdot \frac{W}{\sqrt{\ell}}=120\ln s\cdot \frac{W}{s}\leq \eps W,$$
where the last inequality holds due to $\ln s\leq 7\cdot \ln \frac{1}{\eps}$ being satisfied for $\eps<\frac{1}{10}$.\cqed\end{proof}

\begin{claim}\label{cl:leaves}
If $(x,y)$ is a pair of adjacent nodes in $T$ such that $x$ is a leaf, then the total weight of objects of $\Ff$ contained in $\enc(\delta(y,x))$ is at most $\eps W$.
\end{claim}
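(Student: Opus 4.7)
The plan is to show that when $x$ is a leaf of $T$, the region $\enc(\delta(y,x))$ is exactly the interior of a single diamond of $H_{\Ss}$, and then to invoke property~\ref{p:diamond} together with a short arithmetic comparison between $\ell$, $s$, and $\eps$.

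First, recall that $x$, being a leaf of $T$, is assigned by the bijection $\eta$ to a single edge $e$ of $H$. By the defining property of a sphere-cut decomposition, the edges of $H$ mapped to the component of $T - xy$ containing $x$ are exactly $\{e\}$, and this set must lie entirely in $\enc(\delta(y,x))$. Since we arranged for every noose in $(T,\eta,\delta)$ to be canonical, the observation stated immediately before Lemma~\ref{prop:canon} tells us that $\enc(\delta(y,x))$ is the union of interiors of diamonds induced by edges of $H$ enclosed by $\delta(y,x)$; here this union consists of exactly one diamond, namely $\Diam_\Ss(e)$. Hence $\enc(\delta(y,x))$ coincides with the interior of $\Diam_\Ss(e)$, and so the objects of $\Ff$ contained in $\enc(\delta(y,x))$ are precisely the objects contributing to the weight of $\Diam_\Ss(e)$ with respect to $\Ff$.

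Second, by property~\ref{p:diamond} this weight is at most $10\ln \ell \cdot \frac{W}{\ell}$. Recalling $\ell = s^2$, this equals $\frac{20 \ln s}{s^2}\cdot W$. Using $s = 10^3 \cdot \frac{1}{\eps}\ln \frac{1}{\eps}$ and $\ln s \leq 7\ln \frac{1}{\eps}$ for $\eps < \tfrac{1}{10}$ (the same estimate used already in the proof of Claim~\ref{cl:noose-weight}), a direct comparison shows $\frac{20\ln s}{s^2} \leq \eps$, and thus the total weight of objects of $\Ff$ contained in $\enc(\delta(y,x))$ is at most $\eps W$.

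The only step requiring any care is the identification of $\enc(\delta(y,x))$ with the diamond interior; once the canonical form of the nooses is in place, this is a direct consequence of Lemma~\ref{prop:canon} and its preceding remark. The remaining arithmetic is routine and parallels the bookkeeping of Claim~\ref{cl:noose-weight}.
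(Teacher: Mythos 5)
Your proposal is correct and follows essentially the same route as the paper: identify $\enc(\delta(y,x))$ with the interior of the single diamond $\Diam_\Ss(e)$ for $e=\eta^{-1}(x)$ (the paper states this directly; you justify it via the canonicality of the nooses, which is implicit in the paper's setup), then apply property~\ref{p:diamond} and the same $\ell=s^2$, $\ln s\leq 7\ln\frac{1}{\eps}$ bookkeeping already used in Claim~\ref{cl:noose-weight}.
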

\begin{proof}
Let $e=\eta^{-1}(x)$. Then $\delta(y,x)$ is the diamond induced by $e$ and $\enc(\delta(y,x))$ is its interior. By~\ref{p:diamond}, every diamond in $H$ has weight with respect to $\Ff$ upper bounded by
$$10\ln \ell \cdot \frac{W}{\ell}\leq \eps W,$$
similarly as in Claim~\ref{cl:noose-weight}. The claim follows.
\cqed\end{proof}

The next claim is the key insight from~\cite{MarxP15}: Voronoi separators induced by nooses in $(T,\eta,\delta)$ break the intersection graph $\IntGraph(\Ff)$ in the expected way.

\begin{claim}\label{cl:separability}
Suppose $(x,y)$ is a pair of adjacent nodes of $T$ and $p,p'\in \Ff$ are two objects not banned by $S(\delta(x,y))$ such that $p$ and $p'$ are contained in different regions of $\Sigma-\delta(x,y)$.
Then $p$ and $p'$ are non-adjacent in $\IntGraph(\Ff)$.
\end{claim}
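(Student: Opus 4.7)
The plan is to argue by contradiction: suppose $p$ and $p'$ are adjacent in $\IntGraph(\Ff)$, so they share some vertex $w$ of $G$. Since $p$ and $p'$ are connected subgraphs of $G$ embedded in $\Sigma$, and they are entirely contained in different connected regions of $\Sigma-\delta(x,y)$, the shared vertex $w$ must lie on the noose $\delta(x,y)$ itself. The goal is then to show that any such $w$ would force one of $p,p'$ to belong to $\Ban(S(\delta(x,y)))$, which contradicts the hypothesis.

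To exploit this, I would use the fact that $\delta(x,y)$ is canonical, hence equal to the perimeter $\pi(S(\delta(x,y)))$. Write $S(\delta(x,y))=\langle p_1,u_1,f_1,v_1,\ldots,p_r,u_r,f_r,v_r\rangle$. The perimeter is composed, for each $i$, of (a) the unique path in $\widehat{T}(p_i)$ from $v_{i-1}$ to $u_i$ and (b) a curve within the face $f_i$ from $u_i$ to $v_i$ passing through the center of $f_i$. Since the center of $f_i$ and the interiors of the segments in (b) contain no vertex of $G$ other than the endpoints $u_i,v_i$, the vertex $w$ must either be one of these transition vertices or lie on one of the type-(a) pieces. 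Inspecting the structure of $\widehat{T}(p_i)$, the type-(a) piece decomposes naturally into three consecutive parts: the shortest path $Q_{i-1}$ from $v_{i-1}$ to $p_i$, a walk inside $T(p_i)\subseteq p_i$, and the reverse of the shortest path $P_i$ from $u_i$ to $p_i$. So $w$ belongs to one of three categories: $w\in p_i$ for some $i$; $w$ lies on some $P_i$ or $Q_i$ but not on any object in $\Dd(S)$; or $w$ is one of the $u_i, v_i$ (which are the endpoints of $P_i$ or $Q_i$ respectively, so this folds into the second category).

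If $w\in p_i$, then $w$ is a common vertex of $p\in\Ff$ and $p_i\in\Ff$; since $\Ff$ is independent, this forces $p=p_i\in\Dd(S)$, so $p$ is banned by $S(\delta(x,y))$, contradicting the assumption. Otherwise $w$ lies on some $P_i$ (resp.\ $Q_i$) and is not a vertex of $p_i$ (resp.\ $p_{i+1}$); the same disjointness argument for $\Ff$ shows $p\neq p_i$, whence $\dist(w,p_i)>0$. But $w\in p$ gives $\dist(w,p)=0<\dist(w,p_i)$, so by the very definition of being banned, $p\in\Ban(S(\delta(x,y)))$, again a contradiction. The analogous argument applies to $p'$.

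The only subtle point is the bookkeeping in case $p$ or $p'$ share an edge (not just a vertex) with the noose; but sharing an edge entails sharing its endpoints, which are vertices, so the analysis above still applies. I do not expect any serious obstacle: all the heavy lifting has been done in setting up the correct notion of Voronoi separator and the banning relation, so that the definitions line up precisely with this kind of case analysis. This is the same type of argument as in Marx and Pilipczuk~\cite{MarxP15}, and I would essentially replay it here with the annotations adapted to our $\pi(S)$.
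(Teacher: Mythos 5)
Your proof proposes to argue by contradiction that a shared vertex $w$ must lie on the noose $\delta(x,y)$ and then show this would force one of $p,p'$ into $\Ban(S(\delta(x,y)))$. But this setup already contains a logical misstep: the claim's hypothesis is that $p$ and $p'$ are \emph{contained} in different regions of $\Sigma-\delta(x,y)$, and since $\enc(\delta(x,y))$ and $\exc(\delta(x,y))$ are disjoint open disks, an object contained in one cannot touch $\delta(x,y)$ at all (let alone share a vertex with something contained in the other). So there is no case ``$w$ lies on the noose''; the contradiction is immediate, and $p\cap p'=\emptyset$ follows directly from disjointness of the two regions. This is exactly what the paper's one-line proof says.

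What you have actually sketched is a \emph{different} (and genuinely needed) fact: an object of $\Ff$ that intersects the canonical noose $\pi(S)$ must be banned by $S$. That fact is the glue that lets one deduce that every \emph{non-banned} object of $\Ff$ is entirely inside one of the two regions, which in turn is what makes Claim~\ref{cl:separability} applicable when the algorithm carves up $\IntGraph(\Dd)-\Ban(S)$. Your case analysis (a vertex on $\pi(S)$ lies either on some $p_i\in\Dd(S)$ or on a shortest path $P_i$/$Q_i$, and in either case the intersecting object is banned) is the right argument for that companion fact, and the observation about shared edges reducing to shared endpoints is fine. So the content isn't wasted, but it is answering a question the claim does not pose; as written, the claim itself has a trivial one-line proof, and the hypothesis you try to derive a contradiction from never arises.
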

\begin{proof}
Since $p$ and $p'$ are contained in different regions of $\Sigma-\delta(x,y)$, they are in particular disjoint, and hence they are non-adjacent in the intersection graph $\IntGraph(\Ff)$.
\cqed\end{proof}

For an open disk $\Delta$ on $\Sigma$, define the measure $\mu_{\Ff}(\Delta)$ as before: $\mu_{\Ff}(\Delta)$ is the sum of weights of objects $p\in \Ff$ that are entirely contained in $\Delta$.
Since $\eps\leq \frac{9}{20}$, Claim~\ref{cl:leaves} verifies that the prerequisites of Lemma~\ref{lem:sc-to-balanced} are satisfied.
Thus, we may apply Lemma~\ref{lem:sc-to-balanced} to $(T,\eta,\delta)$, yielding a pair of adjacent nodes $(x,y)$ of $T$ such that within each of the disks $\enc(\delta(x,y))$ and $\enc(\delta(y,x))$ the total
weight of objects from $\Ff$ is at most $\frac{9}{10}W$.

We claim that the Voronoi separator $S$ satisfies all the required conditions.
For condition~\ref{p:important}, we have that $\Dd(S)\subseteq \Ss\subseteq \Ff$ and all faces traversed by $S$ are branching points of the Voronoi diagram of $\Ss\subseteq \Dd$, hence they are $\Dd$-important.
For condition~\ref{p:length}, the length of $S(\delta(x,y))$ is equal to the length of $\delta(x,y)$, which is at most $3s$.
Condition~\ref{p:sacrifice} follows directly from Claim~\ref{cl:noose-weight}.
Finally, for condition~\ref{p:balanced}, Claim~\ref{cl:separability} together with the construction of $S$
imply that every connected component of $\IntGraph(\Dd)-\Ban(S)$ contains objects from $\Ff$ of total weight at most at most $\frac{9}{10}W$.
Hence $S$ has all the prescribed properties.

%

\medskip

We are left with discussing the case when the Voronoi diagram $H$ has bridges.
Note that diamonds in $H$ are well-defined also for bridges; hence, as in Claim~\ref{cl:leaves},
every diamond in $H$ has weight at most $10\ln \ell\cdot \frac{W}{\ell}\leq \eps W$.

Consider any bridge $b$ in $H$, say with endpoints $f_1$ and $f_2$; recall that these are branching points of $H$.
For $t=1,2$ let us construct the canonical noose (w.r.t. $H$) of length $1$ traveling through $f_t$ and the unique face of $H$ incident to the bridge $b$.
Call this noose $\beta(f_t,b)$ and orient it so that $\enc(\beta(f_t,b))$ contains the bridge $b$; nooses constructed in this manner shall be called {\em{bridge nooses}}.

Same calculations as in Claim~\ref{cl:noose-weight} show that for any bridge noose $\beta$, the total weight of objects of $\Ff$ banned by the Voronoi separator $S(\beta)$ is at most $\eps W$;
this is because the length of a bridge noose is $1$, which is never larger than $3s$.
This means that if for any bridge noose $\beta$ we have that for both $\enc(\beta)$ and $\exc(\beta)$, the total weight of objects from $\Ff$ contained in each of these disks is at
most $\frac{9}{10}W$, then $S(\beta)$ satisfies all the required conditions.
Hence, from now on we assume that for every bridge noose $\beta$, either $\mu_{\Ff}(\enc(\beta))$ or $\mu_{\Ff}(\exc(\beta))$ exceeds $\frac{9}{10}W$.

Let us inspect any bridge $b$ in $H$, say with endpoints $f_1$ and $f_2$. Denote $\beta_1=\beta(f_1,b)$ and $\beta_2=\beta(f_2,b)$.
As $\exc(\beta_1)$ and $\exc(\beta_2)$ are disjoint, it cannot happen that $\mu_\Ff(\exc(\beta_1))>\frac{9}{10}W$ and $\mu_\Ff(\exc(\beta_2))>\frac{9}{10}W$.
Also, if it happened that $\mu_\Ff(\enc(\beta_1))>\frac{9}{10}W$ and $\mu_\Ff(\enc(\beta_2))>\frac{9}{10}W$, 
then the objects from $\Ff$ contained in the interior of the diamond induced by $b$ in $H$ (which is equal to $\enc(\beta_1)\cap \enc(\beta_2)$) would have weight more than $\frac{8}{10}W$, 
a contradiction with the fact that every diamond in $H$ has weight at most $\eps W$.
Therefore, either $\mu_\Ff(\enc(\beta_1))>\frac{9}{10}W$ and $\mu_\Ff(\exc(\beta_2))>\frac{9}{10}W$, or $\mu_\Ff(\exc(\beta_1))>\frac{9}{10}W$ and $\mu_\Ff(\enc(\beta_2))>\frac{9}{10}W$
Orient $b$ toward $f_2$ in the former case and toward $f_1$ in the latter case.

Thus, every bridge of $H$ becomes oriented. Since $H$ has one fewer bridge than bridgeless component, there exists a bridgeless component $B$ of $H$ such that every bridge incident to $B$ is oriented toward its
endpoint residing in $B$.

We first resolve the corner case when $B$ consists of one vertex, say $u$. Since $H$ is $3$-regular, $u$ has three different neighbors $v_1,v_2,v_3$ such that $uv_1,uv_2,uv_3$ are the three bridges incident to $u$.
These bridges are oriented toward $u$, which means that each of the disks $$\exc(\beta(u,uv_1)),\exc(\beta(u,uv_2)),\exc(\beta(u,uv_3))$$ contains objects from $\Ff$ of total weight 
more than $\frac{9}{10}W$. However,
each object from $\Ff$ can be contained in at most $2$ of these disks, and a simple counting argument yields a contradiction.

Hence, from now on suppose $B$ is a bridgeless component of $H$ consisting of more than one vertex.
Since $B$ is bridgeless and nontrivial, every vertex of $u$ of $B$ is incident to at most one bridge of $H$.
Thus, it is easy to see that there is an injective mapping $\phi$ from bridges incident to $B$ to edges of $B$ so that each bridge $b$ shares an endpoint with $\phi(b)$.

For a bridge $b$ incident to $B$, let $H_b$ be the component of $H-b$ that is disjoint with $B$.
Now, for every noose $\gamma$ with respect to $B$ we may define a noose $\hat{\gamma}$ with respect to $H$ so that $\gamma$ and $\hat{\gamma}$ are equivalent as nooses w.r.t. $B$, 
and for every bridge $b$ incident to $B$, either all of $H_b$, $b$, and $\phi(b)$ are contained in $\enc(\gamma)$, or all of them are contained in $\exc(\gamma)$.
See Figure~\ref{fig:shift} for an example.
We may further require that $\hat{\gamma}$ is canonical (as a noose with respect to $H$).

\begin{figure}[h!]
                \centering
                \def\svgwidth{0.98\columnwidth}
                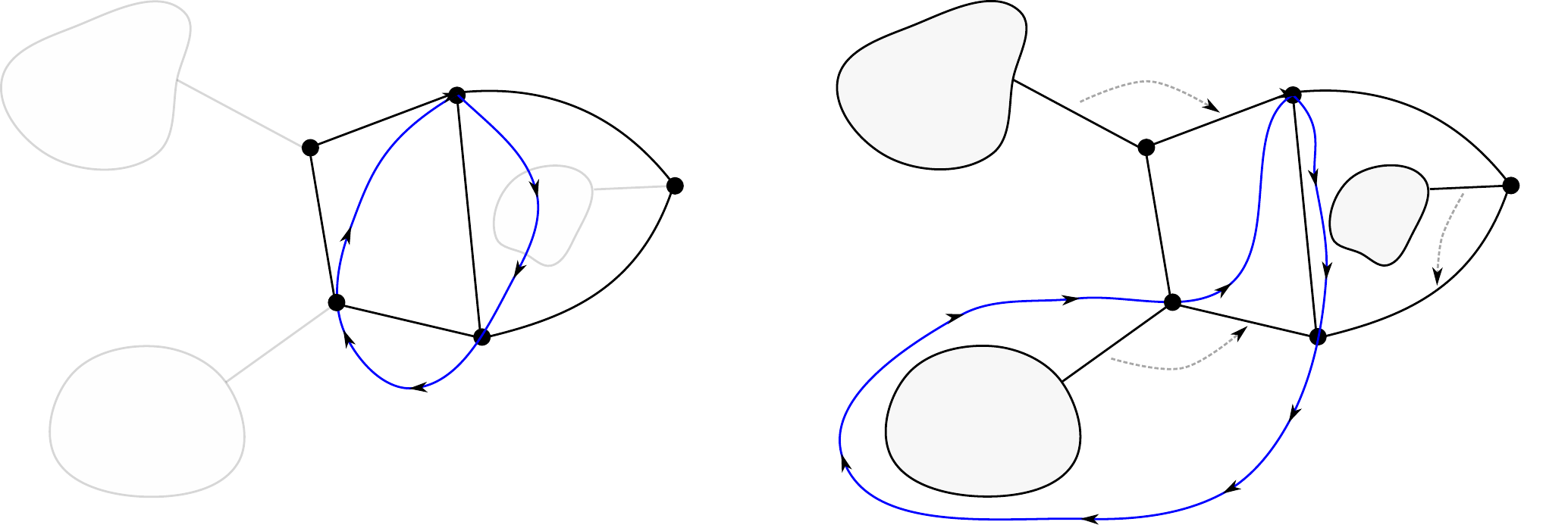
\caption{Construction of $\hat{\gamma}$ (a noose w.r.t. $H$) out of $\gamma$ (a noose w.r.t $B$). The left panel depicts the bridgeless component $B$ alone together with the noose $\gamma$.
The right panel depicts the whole graph $H$, with components $H_{b_1},H_{b_2},H_{b_3}$ and bridges $b_1,b_2,b_3$ reintroduced, and the adjusted noose~$\hat{\gamma}$.
The mapping $\phi$ is depicted with gray, dashed arrows.}\label{fig:shift}
\end{figure}

By Theorem~\ref{thm:sc-decomp}, let $(T,\eta,\delta)$ be a sphere-cut decomposition of $B$ of width at most $\sqrt{4.5|V(B)|}\leq 3\sqrt{\ell}$.
Replace each noose $\gamma$ appearing in $(T,\eta,\delta)$ by $\hat{\gamma}$; it is easy to see that thus $(T,\eta,\delta)$ is still a sphere-cut decomposition of $B$ and becomes faithful.
We would like now to apply the whole reasoning to $(T,\eta,\delta)$ and $B$.
Claim~\ref{cl:noose-weight} works exactly as before, because the only condition we needed is an upper bound on the weights of spokes in $H$.

However, Claim~\ref{cl:leaves} may fail,
because disks of the form $\enc(\delta(y,x))$, where $x$ is a leaf of $T$ and $y$ is the node of $T$ adjacent to it,
are no longer diamonds in $H$.
More precisely, if $e=\eta^{-1}(x)$ is the edge of $H$ associated with $x$ and $\Delta(e)$ is the interior of the diamond of $H$ induced by it, then the following two cases may happen:
\begin{itemize}
\item if $e$ is not in the image of $\phi$, then $\enc(\delta(y,x))=\Delta(e)$;
\item if $e=\phi(b)$ for some bridge $b$ incident to $B$, say with $u$ being the endpoint of $b$ in $B$, then $\enc(\delta(y,x))$ is the union of $\Delta(e)$,
the disk $\enc(\beta(u,b))$, and the intersection of their boundaries.
\end{itemize}
In the first case, we already know that $\mu_\Ff(\enc(\delta(y,x)))\leq \eps W<\frac{9}{20}W$.
In the second case, observe that every object of $\Ff$ contained in $\enc(\delta(y,x))$ is either contained $\Delta(e)$ or is not contained in $\exc(\beta(u,b))$.
The total weight of objects satisfying the first condition is at most $\eps W$, whereas the total weight of objects satisfying the second condition is less than $\frac{W}{10}$, because 
$\mu_\Ff(\exc(\beta(u,b)))>\frac{9}{10}W$. Hence, $$\mu_{\Ff}(\enc(\delta(y,x)))\leq\eps W + \frac{W}{10}<\frac{9}{20}W.$$
Summing up, in both cases we have verified that $\mu_{\Ff}(\enc(\delta(y,x)))<\frac{9}{20}W$, so we may apply Lemma~\ref{lem:sc-to-balanced} and proceed as in the case without bridges.
\end{proof}

\section{A QPTAS for {\sc{Maximum Weight Independent Set of Objects}}}\label{sec:mwis}
In this section we use Lemma~\ref{lem:breaking} to design a QPTAS for {\sc{MWISO}}, that is, we prove Theorem~\ref{thm:main-mwis}.
Recall that the setting is as follows. The input is $(G,\mathcal{D})$, where $G$ is a graph embedded in a sphere $\Sigma$ together with a family of objects $\mathcal{D}$, 
each being a connected subgraph of $G$ with a prescribed positive weight. Moreover, we are given an accuracy parameter $\epsilon>0$ and we may assume w.l.o.g. that $\epsilon<\frac{1}{10}$.
The goal is to find an independent subfamily $\mathcal{F}\subseteq \mathcal{D}$ with the largest possible weight;
more precisely, the algorithm shall compute a solution of weight at least $(1-\epsilon)$ times the optimum.
Let $n=|V(G)|$ and $N=|\mathcal{D}|$; w.l.o.g. we assume $N\geq 2$.

First, we assume that all objects in $\mathcal{D}$ have weights between $1$ and $M=2\epsilon^{-1}N$.
For the given instance $(G,\Dd)$ let us define
$$\spn(\Dd)=\frac{\max_{p\in \Dd}\weight(p)}{\min_{p\in \Dd}\weight(p)}.$$
The following standard claim shows that we may assume that the span in the input instance is bounded polynomially in $N$ and $1/\eps$; more precisely, by $M=2\eps^{-1}N$.

\begin{claim}\label{cl:span-reduction}
Suppose there is a QPTAS as described in Theorem~\ref{thm:main-mwis}, but working on instances $(G,\Dd)$ that additionally satisfy $\spn(\Dd)<2\eps^{-1}|\Dd|$.
Then Theorem~\ref{thm:main-mwis} holds.
\end{claim}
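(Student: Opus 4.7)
The plan is a standard filter-and-rescale reduction, built on the elementary observation that $\OPT(G,\Dd) \geq W_{\max} := \max_{p \in \Dd}\weight(p)$, since any singleton family is independent and hence $W_{\max}$ is attained by some feasible solution.

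First I would \emph{filter}: given an input $(G,\Dd)$ with accuracy $\eps$, set $\tau = \eps W_{\max}/(2N)$ and define $\Dd' = \{p \in \Dd : \weight(p) \geq \tau\}$. Since $|\Dd \setminus \Dd'| \leq N$, the total weight of discarded objects is at most $N\tau = \eps W_{\max}/2 \leq \eps\OPT/2$. Consequently, for any optimal solution $\Ff^{\ast}$ for the original instance we have $\weight(\Ff^{\ast}\cap \Dd') \geq (1-\eps/2)\OPT(G,\Dd)$, which yields $\OPT(G, \Dd') \geq (1-\eps/2)\OPT(G, \Dd)$.

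Next I would \emph{rescale} all weights in $\Dd'$ by the factor $1/\tau$: this is a global multiplicative scaling that preserves the identity of optimal solutions and all multiplicative approximation ratios. After rescaling, every weight lies in $[1,\, W_{\max}/\tau] = [1,\, 2N/\eps]$, so $\spn(\Dd') \leq 2\eps^{-1} N$. A cosmetic tightening of the constant in $\tau$ (or, equivalently, a slightly stricter threshold) ensures the strict inequality $\spn(\Dd') < 2\eps^{-1}|\Dd'|$ demanded by the hypothesis; in the degenerate case where $|\Dd'|$ is very small, the instance can anyway be solved exactly in polynomial time by brute force.

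Finally, I would invoke the assumed bounded-span QPTAS on $(G, \Dd')$ with accuracy parameter $\eps/2$; it returns an independent subfamily of weight at least
\[
(1-\eps/2)\,\OPT(G, \Dd') \;\geq\; (1-\eps/2)^2\,\OPT(G,\Dd) \;\geq\; (1-\eps)\,\OPT(G,\Dd).
\]
The running time inherits the bound $2^{\poly(1/\eps,\,\log N)}\cdot n^{\Oh(1)}$, since both the accuracy parameter and the instance size change only by constant factors, and the preprocessing (computing $W_{\max}$, $\tau$, and the rescaled $\Dd'$) runs in polynomial time. The reasoning is entirely routine and poses no real obstacle; the only substantive insight is the inequality $\OPT \geq W_{\max}$, which is exactly what makes it affordable to discard objects whose weight is negligible compared to $W_{\max}$.
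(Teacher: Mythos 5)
Your proof is correct and takes a genuinely different, and in fact simpler, route than the paper's own argument. The paper does not know which object is the heaviest in the optimal solution $\Fopt$, so it enumerates over all $p\in\Dd$, builds for each one the subfamily $\Dd_p$ of objects with weight in $(\weight(p)/M,\,\weight(p)]$, runs the hypothesized bounded-span QPTAS on each of these $N$ instances with accuracy $\eps/2$, and takes the best answer; the correctness argument hinges on taking $p$ to be the heaviest object of $\Fopt$ and bounding the discarded weight by $\eps/2\cdot\weight(p)\le\eps/2\cdot\OPT$ via $p\in\Fopt$. You instead use the global maximum weight $W_{\max}=\max_{p\in\Dd}\weight(p)$ and the elementary inequality $\OPT\ge W_{\max}$ (any singleton is independent), which removes the need to guess $p$ at all: a single threshold filter plus a rescale produces one bounded-span instance and one call to the assumed QPTAS suffices. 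This is a cleaner reduction, and the substantive insight $\OPT\ge W_{\max}$ replaces the paper's enumeration.

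One caveat about the passage you flag as ``cosmetic.'' After filtering you obtain $\spn(\Dd')<2\eps^{-1}N$ with $N$ the \emph{original} number of objects, whereas the claim's hypothesis literally asks for $\spn<2\eps^{-1}|\Dd'|$ with $|\Dd'|$ the size of the instance actually passed to the assumed QPTAS; since $|\Dd'|$ may be much smaller than $N$, merely tweaking the constant in $\tau$ does not close this gap, and brute force only covers the regime where $|\Dd'|$ is polylogarithmic. However, this is not a defect specific to your argument: the paper's own $\Dd_p$ may equally well be much smaller than $\Dd$, and its proof only establishes $\spn(\Dd_p)<2\eps^{-1}|\Dd|$. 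The mismatch reflects that the claim's span condition is stated loosely; what the algorithm in Section~3 actually needs, and what both arguments deliver, is a span polynomially bounded in the original $N$ and $1/\eps$. With that reading both proofs are sound, and yours is the more economical of the two.
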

\begin{proof}
Let $M=2\eps^{-1}|\Dd|$.
For every object $p\in \Dd$ construct a subfamily $\Dd_p\subseteq \Dd$ consisting of all objects whose weight is not larger than $\weight(p)$ but larger than $\weight(p)/M$.
Clearly, $\spn(\Dd_p)<M$.
We claim that there exists $p\in \Dd$ such that the optimum solution for $(G,\Dd_p)$ has weight at most $(1-\eps/2)$ smaller than the optimum solution for $(G,\Dd)$.
If this is the case, then it suffices to run the assumed QPTAS on each of the instances $(G,\Dd_p)$, for $p\in \Dd$, using accuracy parameter $\eps/2$, and output the heaviest of the solutions found.

Let $\Fopt\subseteq \Dd$ be the optimum solution for $(G,\Dd)$. Let $p$ be the heaviest object in $\Fopt$.
Consider $\Ff=\{q\in \Fopt\colon \weight(q)>\weight(p)/M\}$ and observe that $\Ff\subseteq \Dd_p$, thus $\Ff$ is a solution for $(G,\Dd_p)$.
On the other hand, we have
$$\weight(\Fopt\setminus \Ff)\leq \weight(p)/M\cdot |\Fopt\setminus \Ff|\leq \weight(p)/M\cdot |\Dd|=\eps/2\cdot \weight(p)\leq \eps/2\cdot \weight(\Fopt),$$
where the last inequality follows from $p\in \Fopt$. Hence $\weight(\Ff)\geq (1-\eps/2)\weight(\Fopt)$ and we are done.
\cqed\end{proof}

By Claim~\ref{cl:span-reduction} in order to prove Theorem~\ref{thm:main-mwis} it suffices to work under the assumption that $\spn(\Dd)<M$.
Hence, by rescaling the weights we may assume that all the weights of objects in $\Dd$ are at least $1$ and smaller than $M$.
%
%

Before we proceed to the algorithm, we fix the following parameters:
\begin{equation*}
d_{\max}=10\ln(MN),\qquad \qquad \hat{\epsilon} = \frac{\epsilon}{d_{\max}},\qquad \qquad s=10^3\cdot\frac{1}{\hat{\epsilon}} \ln \frac{1}{\hat{\epsilon}}.
\end{equation*}
Parameter $d_{\max}$ is the maximum recursion depth of the algorithm; note that $d_{\max}=\mathcal{O}(\log (N/\epsilon))$.
Next, $\hat{\epsilon} = \mathcal{O}(\epsilon/\log(N))$ is the refined accuracy parameter which will be used throughout the recursion instead of $\epsilon$.
Similarly as in~\cite{AW2013,adamaszek2014qptas,Har-Peled2014}, intuitively we lose a factor of $1-\hat{\epsilon}$ in each recursion level which
yields an overall approximation ratio of $(1-\hat{\epsilon})^{d_{\max}}=(1- \epsilon/\log(N))^{\mathcal{O}(\log(N))}=1-\mathcal{O}(\epsilon)$.
Let us stress that although the algorithm uses recursion and the number of objects changes in subsequent recursive calls, the values of $d_{\max},\hat{\epsilon},s$ are fixed as above 
and their definitions always refer to the initial number of objects.

We now explain the algorithm; it is also summarized using pseudocode as Algorithm~\ref{alg:main}.
Let us fix an optimum solution $\mathcal{F}_{\mathrm{OPT}}$ and denote $W=\mathbf{w}(\mathcal{F}_{\mathrm{OPT}})$. 
We shall analyze $\mathcal{F}_{\mathrm{OPT}}$, which will lead to the formulation of the algorithm as a recursive search for $\mathcal{F}_{\mathrm{OPT}}$.

We would like to use Lemma~\ref{lem:breaking} to guess a Voronoi separator that breaks $\mathcal{F}_{\mathrm{OPT}}$ in a balanced way.
However, we first need make sure that every object in question constitutes only a small fraction of $W$.
This is done by a standard method of guessing exactly ``heavy'' objects in the solution, whose number is small, and proceeding only with the ``light'' ones.

More precisely, call an object $p\in \mathcal{F}_{\mathrm{OPT}}$ {\em{heavy}} if $\mathbf{w}(p)>s^{-2}W$.
Observe that the number of heavy objects in $\mathcal{F}_{\mathrm{OPT}}$ is at most $s^2$, hence there are at most $N^{s^2}$ possible ways to select those heavy objects from $\mathcal{D}$.
The algorithm branches into all possible such ways, in each branch fixing a different candidate for the set of heavy objects.
Hence, by increasing the number of subproblems by a multiplicative factor $N^{s^2}$ we may assume that the algorithm fixes the set $\mathcal{F}_{\mathrm{hv}}$ 
consisting of all heavy objects in $\mathcal{F}_{\mathrm{OPT}}$. Let $\mathcal{D}'$ be obtained from $\mathcal{D}$ by removing $\mathcal{F}_{\mathrm{hv}}$ and all objects intersecting any object from $\mathcal{F}_{\mathrm{hv}}$, and let $\mathcal{F}_{\mathrm{OPT}}'=\mathcal{F}_{\mathrm{OPT}}-\mathcal{F}_{\mathrm{hv}}$.
Note that $\mathbf{w}(\mathcal{F}_{\mathrm{OPT}}')\leq W$ and $\mathbf{w}(p)\leq s^{-2}W$ for all $p\in \mathcal{F}_{\mathrm{OPT}}'$.

We may now apply Lemma~\ref{lem:breaking} to $\mathcal{F}_{\mathrm{OPT}}'\subseteq \mathcal{D}'$ with $W$ being the upper bound on its weight. 
Thus, in time $N^{\mathcal{O}(s)}\cdot n^{\mathcal{O}(1)}$ we may compute a family $\mathbb{X}$ consisting of subsets of $\mathcal{D}'$ with $|\mathbb{X}|\leq 6^{3s}N^{15s}$ and satisfying the following property:
there exists $\mathcal{X}\in \mathbb{X}$ such that $\mathbf{w}(\mathcal{F}_{\mathrm{OPT}}'\setminus \mathcal{X})\leq \epsilon W$ and 
within each connected component of the graph $\mathrm{IntGraph}(\mathcal{D}')-\mathcal{X}$ the total weight of objects from $\mathcal{F}_{\mathrm{OPT}}'$ does not exceed $\frac{9}{10}W$.
By branching into all the members of $\mathbb{X}$, via increasing the number of subproblems by a multiplicative factor $|\mathbb{X}|$ we may henceforth assume that the algorithm fixes $\mathcal{X}$ with properties as above.

For a fixed choice of $\mathcal{F}_{\mathrm{hv}}$ and $\mathcal{X}$ as above, let us inspect the connected components of $\mathrm{IntGraph}(\mathcal{D}')-\mathcal{X}$; let their vertex sets be $\mathcal{D}_1,\ldots,\mathcal{D}_k$. 
We apply the algorithm recursively to the instances $(G,\mathcal{D}_i)$ for $i=1,\ldots,k$, yielding independent families $\mathcal{F}_1,\ldots,\mathcal{F}_k$ with $\mathcal{F}_i\subseteq \mathcal{D}_i$. 
We record the family
$\mathcal{F}=\mathcal{F}_{\mathrm{hv}}\cup \bigcup_{i=1}^k \mathcal{F}_i$
as a candidate for the solution; it is straightforward to see from the construction that this family is independent. 
Finally, as the final solution we output the heaviest among the recorded candidates; that is,
the heaviest solution found for all choices of $\mathcal{F}_{\mathrm{hv}}$ and $\mathcal{X}$.
We remark that if for some choice of $\mathcal{F}_{\mathrm{hv}}$ it turned out that $\mathcal{D}'=\emptyset$, i.e., every object intersects some objects from $\mathcal{F}_{\mathrm{hv}}$, then $\mathbb{X}$ contains only one choice of $\mathcal{X}$ being $\emptyset$, 
hence we include $\mathcal{F}=\mathcal{F}_{\mathrm{hv}}$ among the candidates without invoking any recursive~calls.

The base case of the recursion is provided by trimming it at level $d_{\max}$. 
More precisely, all subcalls at depth larger than $d_{\max}$ return empty solutions.
This concludes the description of the algorithm; as mentioned, it is summarized using pseudocode as Algorithm~\ref{alg:main}.

\begin{algorithm}
  \KwIn{An instance $(G,\mathcal{D})$, recursion depth $d$} 
  \KwOut{An independent family $\mathcal{F}\subseteq \mathcal{D}$}  \Indp \BlankLine
  \If{$d>d_{\max}$}{\KwRet{$\emptyset$}}
  $\mathcal{F}\leftarrow \emptyset$\\
  \ForAll{\normalfont{$\mathcal{F}_{\mathrm{hv}}$: independent subfamily of $\mathcal{D}$ with $|\mathcal{F}_{\mathrm{hv}}|\leq s^2$}}{
        $\mathcal{D}'\leftarrow \mathcal{D} - (\textrm{all objects that intersect any object of}\ \mathcal{F}_{\mathrm{hv}})$\\
        $\mathbb{X}\leftarrow$ family computed for $\mathcal{D}'$ using Lemma~\ref{lem:breaking}\\
        \ForAll{$\mathcal{X}\in \mathbb{X}$}{
	  $\mathcal{D}_1,\ldots,\mathcal{D}_k\leftarrow$ vertex sets of the connected components of $\mathrm{IntGraph}(\mathcal{D}')-\mathcal{X}$\\
	  \For{$i=1\ \mathbf{to}\  k$}{
	    $\mathcal{F}_i\leftarrow \mathtt{AlgMWISO}(G,\mathcal{D}_i,d+1)$
	  }
	  $\mathcal{F}_{\textrm{cand}}\leftarrow \mathcal{F}_{\mathrm{hv}}\cup \bigcup_{i=1}^k \mathcal{F}_i$\\
	  \If{$\mathbf{w}(\mathcal{F}_{\mathrm{cand}})>\mathbf{w}(\mathcal{F})$}{$\mathcal{F}\leftarrow \mathcal{F}_{\textrm{cand}}$}
        }
   }
   \KwRet{$\mathcal{F}$}  
   
\caption{Algorithm $\mathtt{AlgMWISO}$}
  \label{alg:main}
\end{algorithm}

\subparagraph*{Running time analysis.}
On a high level, the above algorithm runs in time $2^{\mathrm{poly}(1/\epsilon,N)}\cdot n^{\mathcal{O}(1)}$ because at each node of the recursion tree the algorithm uses polynomial time and calls itself on $N^{\mathcal{O}(s^2)}$ subinstances, where $s=\mathrm{poly}(1/\epsilon,\log N)$.
Together with the bound of $d_{\max}=\mathcal{O}(\log (N/\epsilon))$ on the recursion depth this yields the promised running time. 

Formally, we argue as follows. Throughout this analysis $N$ denotes the initial number of objects.
It is straightforward to see that the computation at each recursive call, excluding time spent on executing the invoked subcalls, takes time $N^{\Oh(s)}\cdot n^{\Oh(1)}$. 
Therefore, to bound the running time it suffices to bound the total number of nodes in the recursion tree.
As the recursion is trimmed at depth $\dmax$, its height is at most $\dmax$.
Further, in each call we investigate:
\begin{itemize}
\item at most $N^{s^2}$ ways to select heavy objects $\Fheavy$ from $\Dd$;
\item at most $6^{3s}N^{15s}$ ways to select the family $\Xx$; and
\item at most $N$ connected components of $\IntGraph(\Dd')-\Xx$.
\end{itemize}
This yields at most $6^{3s}\cdot N^{s^2+18s+1}\leq N^{2s^2}$ instances on which a recursive subcall is invoked.
This means that each node of the recursion tree has at most $N^{2s^2}$ children, which together with the height of at most $\dmax=10\ln (NM)$ yields an upper bound of $N^{20s^2\ln (NM)}$ on the number of nodes 
in the recursion tree.
Since the computation at each node takes time $N^{\Oh(s)}\cdot n^{\Oh(1)}$, we arrive at the total running time of
$$N^{\Oh(s^2\log (NM))}\cdot n^{\Oh(1)}=2^{\poly(1/\eps,N)}\cdot n^{\Oh(1)},$$
as promised.

\subparagraph*{Approximation factor analysis.}
As mentioned before, 
the claimed approximation ratio follows as intuitively in each of the $d_{\max}$ recursion levels we lose a factor of $1-\hat{\epsilon}$, accumulating to 
$(1-\hat{\epsilon})^{d_{\max}}=1-\mathcal{O}(\epsilon)$ overall.
Formally, we argue as follows. It is clear that the family output by the algorithm is indeed independent.
In order to prove that its weight is at least $(1-\eps)W$, it suffices to show the following claim and apply it for $d=0$; i.e., for the initial instance $(G,\Dd)$.
Recall here that $\Fopt$ is a fixed optimum solution for the whole instance.

\begin{claim}
Let $(G,\td{\Dd})$ be an instance on which the algorithm is called in the recursion tree, say at depth $d$.
Suppose further that $\weight(\Fopt\cap \td{\Dd})\leq (9/10)^d\cdot W$.
Then the call of the algorithm to $(G,\td{\Dd})$ returns an independent subfamily $\td{\Ff}\subseteq \td{\Dd}$ with $\weight(\td{\Ff})\geq (1-d'\epsloc)\weight(\Fopt\cap \td{\Dd})$, where $d'=\dmax-d$.
\end{claim}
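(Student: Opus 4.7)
The plan is to prove the claim by reverse induction on $d$, equivalently by induction on $d' = \dmax - d$, starting from the case $d = \dmax+1$ (where $d' = -1$) and working upward to $d = 0$ (the outermost call). The base case is $d = \dmax+1$, where the algorithm returns $\emptyset$ immediately; here I need to argue that the hypothesis forces $\Fopt\cap\td{\Dd}=\emptyset$. By Claim~\ref{cl:span-reduction} we may assume every object has weight in $[1,M)$, and since $\Fopt\subseteq \Dd$ has at most $N$ objects, $W < NM$. With $\dmax = 10\ln(NM)$ one checks $(9/10)^{\dmax+1}\cdot NM < 1$, so the hypothesis $\weight(\Fopt\cap\td{\Dd})\leq (9/10)^{\dmax+1}W$ forces this weight to be strictly less than $1$; combined with the lower bound of $1$ on each object's weight, we conclude $\Fopt\cap\td{\Dd}=\emptyset$ and the claim holds vacuously.

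For the inductive step with $d \leq \dmax$, I will exhibit one favorable branch of Algorithm~\ref{alg:main} and show that the candidate it records already has the required weight; since the algorithm outputs the heaviest candidate over all branches, the conclusion transfers. Let $\td{W} = \weight(\Fopt\cap\td{\Dd})$ and let $\tdFheavy$ consist of those objects of $\Fopt\cap\td{\Dd}$ with weight exceeding $s^{-2}\td{W}$; since there are at most $s^2$ such heavy objects, this exact choice appears in the outer enumeration of the algorithm. Writing $\tdFopt' = (\Fopt\cap\td{\Dd})\setminus\tdFheavy$, the family $\tdFopt'$ is contained in the surviving $\Dd'$, has total weight at most $\td{W}$, and each of its objects has weight at most $s^{-2}\td{W}$. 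These are precisely the hypotheses of Lemma~\ref{lem:breaking} applied with accuracy $\epsloc$ to $\Dd'$, so the family $\Xfam$ computed by the algorithm contains some $\Xx$ with $\weight(\tdFopt'\cap\Xx)\leq \epsloc\td{W}$ and such that every component $\Dd_i$ of $\IntGraph(\Dd')-\Xx$ intersects $\Fopt$ in weight at most $\tfrac{9}{10}\td{W}\leq (9/10)^{d+1}W$ (using here that $\Dd_i\cap\tdFheavy=\emptyset$, hence $\Fopt\cap\Dd_i = \tdFopt'\cap\Dd_i$).

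The inductive hypothesis then applies to each recursive call $\mathtt{AlgMWISO}(G,\Dd_i,d+1)$, yielding $\weight(\Ff_i)\geq (1-(d'-1)\epsloc)\weight(\Fopt\cap\Dd_i)$. Summing over components and using $\sum_i \weight(\Fopt\cap\Dd_i) = \weight(\tdFopt'\setminus\Xx)\geq \weight(\tdFopt')-\epsloc\td{W}$, the candidate $\Ff_{\mathrm{cand}} = \tdFheavy\cup\bigcup_i \Ff_i$ has weight at least $\weight(\tdFheavy)+(1-(d'-1)\epsloc)(\weight(\tdFopt')-\epsloc\td{W})$. Using $\weight(\tdFheavy)+\weight(\tdFopt')=\td{W}$ and $\weight(\tdFopt')\leq \td{W}$, a routine expansion simplifies this lower bound to $(1-d'\epsloc)\td{W}$, as required. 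I expect the main obstacle to be the base case: verifying that the quantitative choice $\dmax=10\ln(NM)$ interacts correctly with the weight normalization from Claim~\ref{cl:span-reduction} so that the OPT weight is forced below $1$ at the bottom of the recursion. The inductive step itself is standard telescoping of the $\epsloc$-loss per level that Lemma~\ref{lem:breaking} provides.
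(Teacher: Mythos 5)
Your proof follows the same inductive scheme as the paper, and the inductive step is set up in essentially the same way (guessing $\tdFheavy$, applying Lemma~\ref{lem:breaking} to $\tdFopt'$, using $\Fopt\cap\Dd_i=\tdFopt'\cap\Dd_i$, and telescoping). However there is a boundary gap at $d=\dmax$, which you placed in the inductive regime rather than the base case. Writing $a=\weight(\tdFheavy)$, $b=\weight(\tdFopt')$, $\delta=\epsloc$ and $c=1-(d'-1)\delta$, your claimed lower bound on the candidate weight is $a+c(b-\delta\td{W})$, and after substituting $a=\td{W}-b$ one gets
\[
a+c(b-\delta\td{W})-(1-d'\delta)\td{W}\;=\;(d'-1)\,\delta\,\bigl[(1+\delta)\td{W}-b\bigr].
\]
The bracket is strictly positive whenever $\td{W}>0$ (since $b\leq\td{W}<(1+\delta)\td{W}$), so the whole expression is nonnegative if and only if $d'\geq 1$ or $\td{W}=0$. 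Thus the ``routine expansion'' you invoke fails precisely at $d'=0$, i.e.\ $d=\dmax$, and the inductive step as stated does not close there.

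The fix is minor and is in fact contained in your own base-case reasoning: the inequality $(9/10)^{d}\cdot NM<1$ holds for every $d\geq\dmax$ (not just $d=\dmax+1$), so one should take all $d\geq\dmax$ as base cases, as the paper does — this forces $\Fopt\cap\td{\Dd}=\emptyset$ at $d=\dmax$ and renders the claim vacuous there, matching the degenerate $\td{W}=0$ case in which your algebraic identity still holds. With the base case enlarged to $d\geq\dmax$ and the inductive step restricted to $d<\dmax$ (so $d'\geq 1$), your argument is correct and coincides with the paper's.
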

\begin{proof}
Denote $\tdFopt=\Fopt\cap \td{\Dd}$.
We prove the claim by induction on the depth $d$, starting with the case $d\geq \dmax$ and then decreasing $d$.
For the base of the induction, for $d\geq \dmax$ we have
$$(9/10)^d\leq (1-1/10)^{10\ln (MN)}<e^{-\ln (MN)}=\frac{1}{MN}.$$
On the other hand, $\Fopt$ contains at most $N$ objects of weight less than $M$ each, hence $W<MN$.
Combining these two observations we infer that $(9/10)^{d}\cdot W<1$.
Since each object of $\Dd$ has weight at least $1$, we conclude that $\tdFopt=\emptyset$, so the claim holds vacuously.

Suppose now that $d<\dmax$. Let $\td{W}=\weight(\tdFopt)$; by assumption, $\td{W}\leq (9/10)^d\cdot W$.
Call an object $p\in \tdFopt$ {\em{heavy}} if 
$\weight(p)\geq s^{-2}\td{W}$ and let $\tdFheavy\subseteq \tdFopt$ be the family of heavy objects in $\tdFopt$. Observe that $|\tdFheavy|\leq s^2$,
so in the first loop of the algorithm there is an iteration in which we correctly fix the set $\tdFheavy$. We continue the reasoning under this assumption.
As in the description, let $\td{\Dd}'$ be $\td{\Dd}$ with all objects intersecting any object from $\tdFheavy$ removed.

Now let $\tdFopt'=\tdFopt\setminus \tdFheavy$. Since $\weight(p)<s^{-2}\td{W}$ for all $p\in \tdFopt'$, we may apply Lemma~\ref{lem:breaking} to $\tdFopt'$ and $\td{W}$ as the upper bound on its weight,
and conclude that the enumerated family $\Xfam$ contains at least one member $\Xx$ satisfying the properties as in the description of the algorithm.
In particular, in one iteration of the second loop the algorithm correctly fixes such $\Xx$ and proceeds with it. 
From now on we continue under this assumption. 

Having fixed $\tdFheavy$ and $\Xx$, the algorithm investigates the connected components $\td{\Dd}_1,\ldots,\td{\Dd}_k$ of $\IntGraph(\td{\Dd}')-\Xx$ and applies itself recursively to each instance $(G,\td{\Dd}_i)$,
for $i=1,\ldots,k$, yielding families $\td{\Ff}_i\subseteq \td{\Dd}_i$.
By the properties of $\Xx$, we have that 
$$\weight(\Fopt\cap\td{\Dd}_i)\leq (9/10)\cdot \weight(\Fopt\cap \td{\Dd})\leq (9/10)^{d+1}\cdot W,$$
for each $i\in \{1,\ldots,k\}$. Since recursive subcalls to instances $(G,\td{\Dd}_i)$ are at level $d+1$ in the recursion tree, by the induction assumption we infer that
\begin{equation}\label{eq:recursion}
\weight(\Ff_i)\geq (1-(d'-1)\epsloc)\cdot \weight(\Fopt\cap \td{\Dd}_i)
\end{equation}
for each $i\in \{1,\ldots,k\}$. Since the total weight of objects of $\tdFopt\cap \Xx$ is at most $\epsloc\td{W}$, we have that
\begin{equation}\label{eq:loss}
\weight(\tdFheavy)+\sum_{i=1}^k \weight(\tdFopt\cap \td{\Dd}_i)\geq (1-\epsloc)\td{W}.
\end{equation}
Recall that the algorithm returns a family of weight not smaller than the weight of $\td{\Ff}=\tdFheavy\cup \bigcup_{i=1}^k \td{\Ff}_i$.
By combining~\eqref{eq:recursion} with~\eqref{eq:loss} we have
\begin{eqnarray*}
\weight(\td{\Ff}) & =     & \weight(\tdFheavy)+\sum_{i=1}^k \weight(\td{\Ff}_i) \\
                  & \geq  & \weight(\tdFheavy)+(1-(d'-1)\epsloc)\cdot \sum_{i=1}^k \weight(\tdFopt\cap \td{\Dd}_i) \\
                  & \geq  & (1-(d'-1)\epsloc)\cdot \left(\weight(\tdFheavy)+\sum_{i=1}^k \weight(\tdFopt\cap \td{\Dd}_i)\right) \\
                  & \geq  & (1-(d'-1)\epsloc)(1-\epsloc)\td{W}\geq (1-d'\epsloc)\td{W}.
\end{eqnarray*}
This concludes the proof of Theorem \ref{thm:main-mwis}.
\cqed\end{proof}

\section{A QPTAS for {\sc{Minimum Weight Distance Set Cover}}}\label{sec:mwds}
In this section we prove Theorem~\ref{thm:main-mwds}. We first state and prove a suitable variant of Lemma~\ref{lem:breaking}; its proof will follow from Lemma~\ref{lem:balanced-sep} in a similar way.


\begin{lemma}[Separator Lemma for {\sc{MWDSC}}]\label{lem:breaking-ds}
Let $G$ be an $n$-vertex planar graph, $\Dd$ be a weighted set of vertices of $G$ with $|\Dd|=N$, $\Cc$ be a set of vertices of $G$, and $r$ be a nonnegative real.
Let $0<\eps<\frac{1}{20}$ and denote $s=10^3\cdot\frac{1}{\eps} \ln \frac{1}{\eps}$.
Then there is a family $\Xfam$ consisting of quadruples of the form $(\Dd_1,\Dd_2,\Cc_1,\Cc_2)$ such that:
\begin{enumerate}[label=(D\arabic*),ref=(D\arabic*)]
\item\label{p:osl2-union} for each $(\Dd_1,\Dd_2,\Cc_1,\Cc_2)\in \Xfam$, we have that $\Dd_1,\Dd_2\subseteq \Dd$, $\Cc_1,\Cc_2\subseteq \Cc$, and $\Cc_1\cup \Cc_2=\Cc$;
\item\label{p:osl2-size} $|\Xfam|\leq 6^{3s}N^{15s}$ and $\Xfam$ can be computed in time $N^{\Oh(s)}\cdot n^{\Oh(1)}$; and
\item\label{p:osl2-sep} for every real $W\geq 0$ and subset $\Ff\subseteq \Dd$ such that $\Ff$ $r$-covers $\Cc$, $\weight(\Ff)\leq W$, and $\weight(p)\leq s^{-2}W$ for each $p\in \Ff$,
there exists a quadruple $(\Dd_1,\Dd_2,\Cc_1,\Cc_2)\in \Xfam$ such that:
\begin{itemize}
\item $\weight(\Ff\cap \Dd_1\cap \Dd_2)\leq \eps W$;
\item $\weight(\Ff\cap \Dd_1)\leq \frac{19}{20}W$ and $\weight(\Ff\cap \Dd_2)\leq \frac{19}{20}W$; and
\item $\Ff\cap \Dd_1$ $r$-covers $\Cc_1$ and $\Ff\cap \Dd_2$ $r$-covers $\Cc_2$.
\end{itemize}
\end{enumerate}
\end{lemma}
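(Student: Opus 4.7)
My plan is to follow the blueprint of the proof of Lemma~\ref{lem:breaking}, invoking Lemma~\ref{lem:balanced-sep} as a black box. First I enumerate all candidate Voronoi separators $S$ satisfying conditions \ref{p:important} and \ref{p:length}; as in the proof of Lemma~\ref{lem:breaking}, there are at most $6^{3s}N^{15s}$ of them and they can be listed in time $N^{\Oh(s)}n^{\Oh(1)}$. For each candidate $S$, let $R_1=\enc(\pi(S))$ and $R_2=\exc(\pi(S))$, and let $V_i$ denote the set of vertices of $G$ in $R_i\cup\pi(S)$; thus $V_1\cap V_2$ is precisely the set of graph vertices on $\pi(S)$, and $V_1\cup V_2=V(G)$. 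I add to $\Xfam$ the quadruple
$$\Cc_i=\Cc\cap V_i,\qquad \Dd_i=(\Dd\cap V_i)\cup(\Dd\cap\Ban(S)),\qquad i=1,2.$$
Properties \ref{p:osl2-union} and \ref{p:osl2-size} follow directly from this construction.

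To check property \ref{p:osl2-sep}, I apply Lemma~\ref{lem:balanced-sep} to $\Ff$ (viewed as an independent family of single-vertex objects in $G$), yielding a candidate $S \in \Xfam$ with $\weight(\Ff \cap \Ban(S)) \le \eps W$; moreover, inspecting the proof of Lemma~\ref{lem:balanced-sep} (which applies Lemma~\ref{lem:sc-to-balanced} to the sphere-cut decomposition) also provides the geometric bound $\mu_\Ff(R_i) \le \tfrac{9}{10}W$ for $i=1,2$, which I use in place of the intersection-graph component bound (the latter being trivial for single-vertex objects). The key observation is that every graph vertex on $\pi(S)$ belongs to $\Ban(S)$: such a vertex $q$ lies on some spoke $P_{i_0}$ or $Q_{i_0}$, and taking $w=q$ gives $\dist(w,q)=0<\dist(w,p_{i_0})$ whenever $q\neq p_{i_0}$, while $q=p_{i_0}$ yields $q\in\Dd(S)\subseteq\Ban(S)$ by intersection. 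Consequently $\Dd_1\cap\Dd_2=\Dd\cap\Ban(S)$, giving $\weight(\Ff\cap\Dd_1\cap\Dd_2)\le\eps W$. The weight balance $\weight(\Ff\cap\Dd_i)\le\tfrac{19}{20}W$ follows from $\Ff\cap\Dd_i\subseteq(\Ff\cap R_i)\cup(\Ff\cap\Ban(S))$ combined with $\mu_\Ff(R_i)+\weight(\Ff\cap\Ban(S))\le\tfrac{9}{10}W+\eps W\le\tfrac{19}{20}W$, using $\eps\le\tfrac{1}{20}$.

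The main step is verifying the covering condition. Given $c\in\Cc_i$ and a covering witness $p\in\Ff$, if $p\in\Dd_i$ we simply take $p'=p$. Otherwise $p$ lies strictly in $R_{3-i}$ and $p\notin\Ban(S)$. A graph path can cross $\pi(S)$ only at a graph vertex on $\pi(S)$ (the within-face segments of the perimeter contain no vertex of $G$ other than $u_i,v_i$, and in a planar embedding no graph edge enters the interior of a triangular face), so the shortest path from $p$ to $c$ passes through some $w\in V(G)\cap\pi(S)$, and this $w$ lies on some spoke, say $P_{i_0}$. The non-conflict of $p$ with $P_{i_0}$ yields $\dist(w,p_{i_0})\le\dist(w,p)$, whence
$$\dist(p_{i_0},c)\le\dist(p_{i_0},w)+\dist(w,c)\le\dist(p,w)+\dist(w,c)=\dist(p,c)\le r.$$
Since $p_{i_0}\in\Dd(S)\subseteq\Ff\cap\Ban(S)\subseteq\Ff\cap\Dd_i$, the backup witness $p'=p_{i_0}$ covers $c$ as required. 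The delicate part, which I expect to need the most careful formalization, is the planarity-based justification that the crossing vertex $w$ really sits on a spoke of $S$, together with the uniform handling of the boundary cases $w\in\{p_{i_0},u_{i_0},v_{i_0-1}\}$: in each of these cases the same non-conflict inequality produces a banned covering witness for $c$.
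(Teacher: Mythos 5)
Your proof is correct and follows essentially the same route as the paper's: enumerate Voronoi separators via Lemma~\ref{lem:balanced-sep}, assemble each quadruple from $\Ban(S)$ together with the two sides of the perimeter $\pi(S)$, and verify the covering condition by routing the crossing vertex $w$ of a shortest covering path to a banned backup witness in $\Dd(S)$. You also correctly spotted a small imprecision the paper glosses over: for single-vertex objects condition~\ref{p:balanced} of Lemma~\ref{lem:balanced-sep} is vacuous (the intersection graph is edgeless, so every component is a singleton), and the geometric bound $\mu_\Ff(\enc(\pi(S)))\le\frac{9}{10}W$ and its mirror must instead be read off from the application of Lemma~\ref{lem:sc-to-balanced} inside the proof of Lemma~\ref{lem:balanced-sep}, exactly as you do.
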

\begin{proof}
As before, we may assume we are given an embedding of $G$ in a sphere $\Sigma$, $G$ is triangulated using edges of infinite weight, and distances in $G$ are unique.

Consider applying Lemma~\ref{lem:balanced-sep} to $\Dd$ (where we interpret vertices in $\Dd$ as single-vertex objects) and a subset $\Ff\subseteq \Dd$ as in \ref{p:osl2-sep}.
This yields a Voronoi separator $S$ of length at most $3s$ such that all faces traversed by $S$ are $\Dd$-important and $S$ breaks $\Ff$ in a balanced way as described in Lemma~\ref{lem:balanced-sep}.
As explained in the proof of Lemma~\ref{lem:breaking}, given only $\Dd$ there are at most $6^{3s}N^{15s}$ ways to choose such a Voronoi separator $S$ and 
in time $N^{\Oh(s)}\cdot n^{\Oh(1)}$ we can enumerate a family $\Vorsepfam$ of at most $6^{3s}N^{15s}$ candidates for $S$.

Now, construct $\Xfam$ by including the following quadruple $(\Dd_1,\Dd_2,\Cc_1,\Cc_2)$ for each $S\in \Vorsepfam$.
Let $\gamma=\pi(S)$ be the perimeter of $S$ (recall that the perimeter of $S$ is the noose ``following'' $S$; we defined it in Section~\ref{sec:nooses-to-seps}).
Then define $(\Dd_1,\Dd_2,\Cc_1,\Cc_2)$ as follows:
\begin{itemize}
\item $\Dd_1$ is the union of $\Ban(S)$ and all vertices of $\Dd$ contained in $\enc(\gamma)$;
\item $\Dd_2$ is the union of $\Ban(S)$ and all vertices of $\Dd$ contained in $\exc(\gamma)$;
\item $\Cc_1$ comprises all vertices of $\Cc$ contained in $\enc(\gamma)\cup \gamma$; and 
\item $\Cc_2$ comprises all vertices of $\Cc$ contained in $\exc(\gamma)\cup \gamma$.
\end{itemize}
This concludes the construction of $\Xfam$; we are left with verifying its properties.
Conditions \ref{p:osl2-union} and \ref{p:osl2-size} are clear from the construction, so we are left with condition \ref{p:osl2-sep}.
Suppose $W$ and $\Ff$ are as in the statement of condition \ref{p:osl2-sep}.

For the first assertion, observe that $\Dd_1\cap \Dd_2=\Ban(S)$ and by Lemma~\ref{lem:balanced-sep}, condition \ref{p:sacrifice}, we have that $\weight(\Ff\cap \Ban(S))\leq \eps W$. The claim follows.

For the second assertion, observe that $\Ff\cap \Dd_1$ comprises the vertices of $\Ff$ contained in $\enc(\gamma)$ and the vertices of $\Ff$ banned by $S$.
By Lemma~\ref{lem:balanced-sep}, conditions \ref{p:sacrifice} and \ref{p:balanced}, the weights of these sets of vertices are at most $\frac{9}{10}W$ and at most $\eps W\leq \frac{W}{20}$, respectively.
It follows that $\weight(\Ff\cap \Dd_1)\leq \frac{19}{20}W$; a symmetric reasoning shows that $\weight(\Ff\cap \Dd_2)\leq \frac{19}{20}W$ as~well.

We are left with the third assertion. Take any vertex $c\in \Cc_1$. Let $p$ be the vertex from $\Ff$ that is closest to $c$ and let $P$ be the shortest path connecting $c$ with $p$.
Since $\Ff$ $r$-covers $\Cc$, we have that $\mathrm{length}(P)=\dist(p,c)\leq r$.
If $p\in \Ff\cap \Dd_1$ then we are done, hence assume that $p\in \Ff\setminus \Dd_1$.
In particular, $p$ is contained in $\exc(\gamma)$ and $p$ is not banned by $S$.
Since $c\in \Cc_1$, $c$ is contained in $\enc(\gamma)\cup \gamma$, so $P$ has to cross $\gamma$ at some vertex, say $w$.
Let $q\in \Dd(S)$ be the vertex on the Voronoi separator $S$ that is the closest to $w$; recall that $\Dd(S)\subseteq \Ff$, hence $q\in \Ff$.
Since $p$ is not banned by $S$, we have that $\dist(w,p)>\dist(w,q)$. As $P$ is the shortest path connecting $c$ and $p$, we infer that
$$r\geq \dist(c,p)=\dist(c,w)+\dist(w,p)>\dist(c,w)+\dist(w,q)\geq \dist(c,q).$$
This means that $c$ is $r$-covered by $q$. But $q$ is banned by $S$ due to $q\in \Dd(S)\subseteq \Ban(S)$, which implies that $c$ is $r$-covered by $\Ff_1$.
Since $c$ was chosen arbitrarily, we infer that $\Cc_1$ is $r$-covered by $\Ff\cap \Dd_1$, and a symmetric reasoning shows that $\Cc_2$ is $r$-covered by $\Ff\cap \Dd_2$.
\end{proof}

\subparagraph*{Reduction of the weight span.}
Similarly as in Section~\ref{sec:mwis}, we first bound the weight span of the input instance, defined as
$$\spn(\Dd)=\frac{\max_{p\in \Dd}\weight(p)}{\min_{p\in \Dd}\weight(p)}.$$
by a polynomial of $|\Dd|$ and $1/\eps$.

\begin{claim}\label{cl:span-reduction-ds}
Suppose there is a QPTAS as described in Theorem~\ref{thm:main-mwds}, but working on instances $(G,\Dd,\Cc,r)$ that additionally satisfy $\spn(\Dd)\leq 2\eps^{-1}|\Dd|$.
Then Theorem~\ref{thm:main-mwds} holds.
\end{claim}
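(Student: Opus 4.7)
The plan is to carry over the guess-and-restrict strategy from Claim~\ref{cl:span-reduction} to the covering setting, adding one new ingredient to deal with the fact that simply throwing away light objects may destroy feasibility of the instance.

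Set $M = 2\eps^{-1}|\Dd|$, let $\Fopt$ be a fixed optimum cover of weight $\OPT$, and enumerate every object $p\in \Dd$ as a candidate for the \emph{heaviest} element of $\Fopt$. For a fixed $p$ I would first discard objects heavier than $p$ (they cannot belong to $\Fopt$ if the guess is correct), and split the rest into medium objects $\Dd_p=\{q\colon \weight(p)/M<\weight(q)\leq \weight(p)\}$, which form a subfamily of span $<M$, and light objects $\Dd_p^{\mathrm{lt}}$ of weight $\leq \weight(p)/M$. To replace the discarded light objects, I would precompute a cheap fallback: for every $c\in \Cc$ let $q_c^*$ be the cheapest object in $\Dd^{\leq p}:=\{q\in \Dd\colon \weight(q)\leq \weight(p)\}$ that $r$-covers $c$, and set
$$Q_p=\{q_c^*\colon c\in \Cc,\ q_c^*\in \Dd_p^{\mathrm{lt}}\}, \qquad \Cc_p=\{c\in \Cc\colon c \text{ is not $r$-covered by } Q_p\}.$$
Then I would invoke the assumed bounded-span QPTAS on $(G,\Dd_p,\Cc_p,r)$ with accuracy $\eps/2$, obtain a cover $\Ff_p$, record $\Ff_p\cup Q_p$ as a candidate, and finally output the lightest candidate over all guesses~$p$.

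The analysis hinges on the iteration $p=p^*$, where $p^*$ is the heaviest object of $\Fopt$. The key claim is that $\Fopt\cap \Dd_{p^*}$ is a valid cover of $\Cc_{p^*}$: take any $c\in \Cc_{p^*}$ and any $f_c\in \Fopt$ that $r$-covers it. If $f_c$ were light (of weight $\leq \weight(p^*)/M$), then $q_c^*$ would be at most as heavy, hence would land in $Q_{p^*}$, making $c$ already covered by $Q_{p^*}$ --- contradicting $c\in \Cc_{p^*}$. Therefore $f_c\in \Dd_{p^*}$, which gives that $\Fopt\cap \Dd_{p^*}$ covers $\Cc_{p^*}$ and has weight at most $\OPT$, so the QPTAS returns $\Ff_{p^*}$ with $\weight(\Ff_{p^*})\leq (1+\eps/2)\OPT$. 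The fallback contributes little: $Q_{p^*}\subseteq \Dd_{p^*}^{\mathrm{lt}}$ has at most $|\Dd|$ elements, each of weight at most $\weight(p^*)/M$, so $\weight(Q_{p^*})\leq |\Dd|\cdot \weight(p^*)/M=(\eps/2)\weight(p^*)\leq (\eps/2)\OPT$, where the last step uses $p^*\in \Fopt$. Summing the two bounds yields $\weight(\Ff_{p^*}\cup Q_{p^*})\leq (1+\eps)\OPT$, as required.

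The running time amounts to $|\Dd|$ invocations of the assumed QPTAS plus polynomial preprocessing, so it stays within the required $2^{\poly(1/\eps,\log N)}\cdot n^{\Oh(1)}$ bound. I expect the main conceptual obstacle to be exactly the key claim above: verifying that, once the light fallback $Q_p$ has taken care of all demands whose cheapest covering object is light, the remaining demands $\Cc_p$ really are fully coverable by the medium portion of $\Fopt$. This is what legitimizes calling the bounded-span QPTAS on the subinstance and still competing with $\OPT$; the rest of the reduction is essentially arithmetic identical to Claim~\ref{cl:span-reduction}.
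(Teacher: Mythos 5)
Your proof is correct, and it takes a genuinely different route from the paper's. The paper handles the light objects (those of weight at most $\weight(p)/M$) by \emph{keeping} them in $\Dd_p$ and inflating their weights up to $\weight(p)/M$; this makes $\spn(\Dd_p)\leq M$ automatically, and since $|\Fopt|\leq|\Dd|$ and each inflation adds at most $\weight(p)/M$, the weight of $\Fopt$ under the new weight function grows by at most $(\eps/2)\weight(p)\leq(\eps/2)\OPT$. Feasibility is never at risk because no object is deleted. Your proposal instead \emph{deletes} the light objects to bound the span, which can destroy feasibility in a covering problem; you repair this with the precomputed fallback $Q_p$ (the cheapest light cover for each demand) and shrink the demand set to $\Cc_p$. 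Your key claim --- that for the guess $p=p^*$ the medium part $\Fopt\cap\Dd_{p^*}$ still covers $\Cc_{p^*}$ --- is exactly the step that restores feasibility, and the argument is sound: any $c\in\Cc_{p^*}$ has some $f_c\in\Fopt$ covering it, and $f_c$ cannot be light since otherwise $q_c^*$ would be light and $c$ would be covered by $Q_{p^*}$. The cost of $Q_{p^*}$ is bounded by $|\Dd|\cdot\weight(p^*)/M=(\eps/2)\weight(p^*)\leq(\eps/2)\OPT$, which plays the same role as the paper's weight-inflation loss. The paper's weight-inflation trick is arguably the more streamlined adaptation of Claim~\ref{cl:span-reduction} to the covering setting, but your fallback-set approach is a perfectly valid alternative and is a closer syntactic analogue of the ``remove light objects'' strategy used for the packing problem.
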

\begin{proof}
Let $M=2\eps^{-1}|\Dd|$.
For every vertex $p\in \Dd$ construct a weighted set of vertices $\Dd_p$ by modifying $\Dd$ as follows:
 remove all vertices of weight larger than $\weight(p)$ and for every vertex of weight smaller than $\weight(p)/M$, increase its weight to $\weight(p)/M$.
For clarity, the weight function in $\Dd_p$ will be denoted by $\weight_p(\cdot)$. That is, for each $q\in \Dd$ we have:
\begin{itemize}
\item if $\weight(p)<\weight(q)$, then $q\notin \Dd_p$;
\item if $\weight(p)/M<\weight(q)\leq \weight(p)$, then $q\in \Dd_p$ and $\weight_p(q)=\weight(q)$; and
\item if $\weight(q)\leq \weight(p)/M$, then $q\in \Dd_p$ and $\weight_p(q)=\weight(p)/M$.
\end{itemize}
Clearly $\spn(\Dd_p)\leq M$.
We claim that there exists $p\in \Dd$ such that the optimum solution for $(G,\Dd_p,\Cc,r)$ has weight at most $(1+\eps/2)$ larger than the optimum solution for $(G,\Dd,\Cc,r)$.
If this is the case, then it suffices to run the assumed QPTAS on each of the instances $(G,\Dd_p,\Cc,r)$, for $p\in \Dd$, using accuracy parameter $\eps/2$, 
and output the lightest of the solutions found.
Indeed, instances $(G,\Dd_p,\Cc,r)$ are derived from $(G,\Dd,\Cc,r)$ only by dropping some vertices and increasing the weights of some others, so every solution for $(G,\Dd_p,\Cc,r)$
naturally projects to a solution for $(G,\Dd,\Cc,r)$ of not larger weight.

Let $\Fopt\subseteq \Dd$ be the optimum solution for $(G,\Dd,\Cc,r)$. Let $p$ be the heaviest vertex in $\Fopt$.
Observe that $\Fopt$ is still a solution in $(G,\Dd_p,\Cc,r)$, yet its weight with respect to $\weight_p(\cdot)$ may be larger than with respect to $\weight(\cdot)$.
However, for each vertex $q\in \Fopt$ we have
$$\weight_p(q)\leq \weight(q)+\weight(p)/M.$$
Thus,
\begin{eqnarray*}
\weight_p(\Fopt) & \leq & \weight(\Fopt)+|\Fopt|\cdot \weight(p)/M\\
& = & \weight(\Fopt)+\eps/2\cdot \weight(p)\cdot \frac{|\Fopt|}{|\Dd|}\leq (1+\eps/2)\cdot \weight(\Fopt),
\end{eqnarray*}
where the last inequality follows from $p\in \Fopt$.
\cqed\end{proof}

Denoting $M=2\eps^{-1}N=2\eps^{-1}|\Dd|$, by Claim~\ref{cl:span-reduction-ds} in order to prove Theorem~\ref{thm:main-mwds} it suffices to work under the assumption that $\spn(\Dd)<M$.
By rescaling the weights from now on we assume that all the weights of vertices in $\Dd$ are at least $1$ and smaller than $M$.

\subparagraph*{Parameters.} Similarly as in Section~\ref{sec:mwis}, we set parameters as follows:
\begin{equation}\label{eq:params}
\dmax=20\ln(MN),\qquad \qquad \epsloc = \frac{\eps}{2\dmax},\qquad \qquad s=10^3\cdot\frac{1}{\epsloc} \ln \frac{1}{\epsloc}.
\end{equation}
Note that compared to the proof of Theorem~\ref{thm:main-mwis}, the recursion depth $\dmax$ is twice larger and the rescaled accuracy parameter $\epsloc$ is four times smaller.
This is for technical reasons.

\subparagraph*{Description of the algorithm.} We perform a similar recursive scheme as in Section~\ref{sec:mwis}, but using Lemma~\ref{lem:breaking-ds} instead of Lemma~\ref{lem:breaking}.
The algorithm is summarized using pseudocode as Algorithm~\ref{alg:main-ds}. We now explain it viewing it as a recursive search for the optimum solution.

Recall that we are given an instance $(G,\Dd,\Cc,r)$ and an accuracy parameter $\eps$, and we have fixed parameters $\dmax,\epsloc,s$ as in~\eqref{eq:params}.
Fix an optimum solution $\Fopt$ and denote $W=\weight(\Fopt)$.

We first guess heavy vertices in the solution as in the proof of Theorem~\ref{thm:main-mwis}.
Call a vertex $p\in \Fopt$ {\em{heavy}} if $\weight(p)>s^{-2}W$. Observe that $\Fopt$ contains at most $s^2$ heavy vertices; denote them by $\Fheavy$.
The algorithm iterates through all subsets of $\Dd$ of size at most $s^2$, in each case fixing the considered subset as $\Fheavy$.
In one of the cases $\Fheavy$ will be fixed correctly, hence we may proceed with the assumption that the algorithm knows $\Fheavy$.

Let $\Fopt'=\Fopt\setminus \Fheavy$ and $\Dd'=\Dd\setminus \Fheavy$.
Further, let $\Cc'$ be constructed from $\Cc$ by removing all vertices that are $r$-covered by $\Fheavy$.
Note that $\Cc'$ is $r$-covered by $\Fopt'$. 
Moreover, observe that $\weight(p)\leq s^{-2}W$ for each $p\in \Fopt'$, hence we may apply Lemma~\ref{lem:breaking-ds} to $\Dd'$ and $\Cc'$.
Thus, in time $N^{\Oh(s)}\cdot n^{\Oh(1)}$ we construct a family $\Xfam$ consisting of at most $6^{3s}N^{15s}$ quadruples with the following guarantee:
there exists a quadruple $(\Dd_1,\Dd_2,\Cc_1,\Cc_2)\in \Xfam$ such that
\begin{itemize}
\item $\Dd_1,\Dd_2\subseteq \Dd'$, $\Cc_1,\Cc_2\subseteq \Cc'$, and $\Cc_1\cup \Cc_2=\Cc'$;
\item $\weight(\Fopt'\cap \Dd_1\cap \Dd_2)\leq \epsloc W$;
\item $\weight(\Fopt'\cap \Dd_1)\leq \frac{19}{20}W$ and $\weight(\Fopt'\cap \Dd_2)\leq \frac{19}{20}W$; and
\item $\Fopt'\cap \Dd_1$ $r$-covers $\Cc_1$ and $\Fopt'\cap \Dd_2$ $r$-covers $\Cc_2$.
\end{itemize}
By iterating through all members of $\Xfam$, we may henceforth assume that the algorithm has fixed a quadruple $(\Dd_1,\Dd_2,\Cc_1,\Cc_2)$ with the properties as above.

The algorithm now recurses on instances $(G,\Dd_1,\Cc_1,r)$ and $(G,\Dd_2,\Cc_2,r)$, thus computing two solutions $\Ff_1\subseteq \Dd_1$ and $\Ff_2\subseteq \Dd_2$
such that $\Cc_1$ is $r$-covered by $\Ff_1$ and $\Cc_2$ is $r$-covered by $\Ff_2$. Since $\Cc_1\cup \Cc_2=\Cc'$ and all members of $\Cc\setminus \Cc'$ are $r$-covered by $\Fheavy$,
the set $\Ff=\Fheavy\cup \Ff_1\cup \Ff_2$ $r$-covers $\Cc$. The algorithm outputs the lightest of the sets $\Ff$ computed as above for all choices of $\Fheavy$ and $(\Dd_1,\Dd_2,\Cc_1,\Cc_2)\in \Xfam$.

The base case of the recursion is given by trimming it at depth $\dmax$. More precisely, all recursive calls at depth larger than $\dmax$ return $\Ff=\emptyset$ in the case when $\Cc$ is empty 
(then it is an optimal solution), or a special marker $\bot$ in the case when $\Cc$ is not empty. The marker $\bot$ should be interpreted as ``error'', that is, 
it marks that a recursive call has failed to find a solution, and it propagates in the recursion using the following convention: $\weight(\bot)=+\infty$ and the union of $\bot$ with any other set is equal to $\bot$.

\newcommand{\AlgNameDS}{\mathtt{AlgMWDSC}}

\begin{algorithm}[h!]
  \KwIn{An instance $(G,\Dd,\Cc,r)$, recursion depth $d$} 
  \KwOut{A subset $\Ff\subseteq \Dd$ that $r$-covers $\Cc$}  \Indp \BlankLine
  \If{$\Cc=\emptyset$}{\KwRet{$\emptyset$}}
  \If{$d>\dmax$}{\KwRet{$\bot$}}
  $\Ff\leftarrow \bot$\\
  \ForAll{\normalfont{$\Fheavy$: subset of $\Dd$ with $|\Fheavy|\leq s^2$}}{
        $\Dd'\leftarrow \Dd\setminus \Fheavy$\\
        $\Cc'\leftarrow \Cc\setminus (\textrm{vertices of }\Cc\textrm{ that are }r\textrm{-covered by }\Fheavy)$\\
        $\Xfam\leftarrow$ family computed for $\Dd'$ and $\Cc'$ using Lemma~\ref{lem:breaking-ds}\\
        \ForAll{$(\Dd_1,\Dd_2,\Cc_1,\Cc_2)\in \Xfam$}{
	  $\Ff_1 \leftarrow \AlgNameDS(G,\Dd_1,\Cc_1,r)$\\
	  $\Ff_2 \leftarrow \AlgNameDS(G,\Dd_2,\Cc_2,r)$\\
	  $\Ff_{\textrm{cand}}\leftarrow \Fheavy\cup \Ff_1\cup \Ff_2$\\
	  \If{$\weight(\Ff_{\mathrm{cand}})<\weight(\Ff)$}{$\Ff\leftarrow \Ff_{\textrm{cand}}$}
        }
   }
   \KwRet{$\Ff$}  
\caption{Algorithm $\AlgNameDS$}
  \label{alg:main-ds}
\end{algorithm}

Again, we need to argue that the running time of the algorithm is as promised and that the output solution has weight at most $(1+\eps)W$.
The running time analysis is exactly the same as in the proof of Theorem~\ref{thm:main-mwis}, so we skip it and proceed directly to arguing the approximation factor.

\subparagraph*{Approximation factor analysis.} Similarly as in the case of Theorem~\ref{thm:main-mwis},
the fact that the algorithm outputs a solution of weight at most $(1+\eps)W$ follows directly from the following claim by applying it to the original instance $(G,\Dd,\Cc,r)$ and $d=0$.
Recall here that $\Fopt$ is a fixed optimum solution to the original instance and its weight is $W$.

\begin{claim}
Let $(G,\td{\Dd},\td{\Cc},r)$ be an instance on which the algorithm is called in the recursion tree, say at depth $d$.
Suppose further that $\weight(\Fopt\cap \td{\Dd})\leq (19/20)^d\cdot W$ and that $\Fopt\cap \td{\Dd}$ $r$-covers $\td{\Cc}$.
Then the call of the algorithm to $(G,\td{\Dd},\td{\Cc},r)$ returns a subset $\td{\Ff}\subseteq \td{\Dd}$ that $r$-covers $\td{\Cc}$ and satisfies
$\weight(\td{\Ff})\leq (1+2d'\epsloc)\weight(\Fopt\cap \td{\Dd})$, where $d'=\dmax-d$.
\end{claim}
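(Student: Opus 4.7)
The plan is to prove the claim by induction on $d$, starting from the case $d \geq \dmax$ and then decreasing, mirroring the corresponding inductive argument in Section~\ref{sec:mwis}. Two new twists are that we bound the weight of the returned set from \emph{above} rather than below, and that we must track how the covering requirement is preserved under recursion. For the base case $d \geq \dmax$, I would show $\tdFopt := \Fopt \cap \td{\Dd} = \emptyset$, which forces $\td{\Cc} = \emptyset$ (since $\tdFopt$ $r$-covers $\td{\Cc}$), so the algorithm correctly returns $\emptyset$. Indeed, $\dmax = 20\ln(MN)$ gives $(19/20)^\dmax \leq e^{-\ln(MN)} = 1/(MN)$; as $\Fopt$ has at most $N$ vertices each of weight less than $M$, we have $W < MN$, so $(19/20)^d W < 1$, and since vertices in $\Dd$ have weight at least $1$, $\tdFopt$ must be empty.

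For the inductive step $d < \dmax$, write $\td{W} = \weight(\tdFopt) \leq (19/20)^d W$ and call $p \in \tdFopt$ \emph{heavy} if $\weight(p) > s^{-2}\td{W}$. The set $\tdFheavy$ of heavy vertices has size at most $s^2$, so in one iteration of the outer loop the algorithm correctly fixes $\Fheavy = \tdFheavy$. With this choice, $\tdFopt' := \tdFopt \setminus \tdFheavy$ lies in $\Dd'$, $r$-covers $\Cc'$, and has every vertex of weight at most $s^{-2}\td{W}$. Applying Lemma~\ref{lem:breaking-ds} to $\Dd'$, $\Cc'$, $\tdFopt'$ with bound $\td{W}$, the family $\Xfam$ contains a quadruple $(\Dd_1, \Dd_2, \Cc_1, \Cc_2)$ with $\weight(\tdFopt' \cap \Dd_1 \cap \Dd_2) \leq \epsloc\td{W}$, $\weight(\tdFopt' \cap \Dd_i) \leq (19/20)\td{W}$, and $\tdFopt' \cap \Dd_i$ $r$-covering $\Cc_i$; in one iteration of the inner loop the algorithm picks it. Since $\tdFheavy$ is disjoint from $\Dd_1 \cup \Dd_2$, we have $\tdFopt \cap \Dd_i = \tdFopt' \cap \Dd_i$, which then has weight at most $(19/20)^{d+1}W$ and still $r$-covers $\Cc_i$. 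The inductive hypothesis at depth $d+1$ yields $\td{\Ff}_i \subseteq \Dd_i$ that $r$-cover $\Cc_i$ with $\weight(\td{\Ff}_i) \leq (1 + 2(d'-1)\epsloc)\weight(\tdFopt \cap \Dd_i)$, and the set $\td{\Ff} = \tdFheavy \cup \td{\Ff}_1 \cup \td{\Ff}_2$ returned by the algorithm $r$-covers $\Cc' \cup (\td{\Cc} \setminus \Cc') = \td{\Cc}$.

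The main obstacle is the arithmetic for the approximation bound. Using $\weight(\tdFopt' \cap \Dd_1 \cap \Dd_2) \leq \epsloc\td{W}$ together with $\weight(\tdFopt') = \td{W} - \weight(\tdFheavy)$, one gets
\begin{equation*}
\weight(\tdFopt \cap \Dd_1) + \weight(\tdFopt \cap \Dd_2) \leq \td{W} - \weight(\tdFheavy) + \epsloc\td{W}.
\end{equation*}
Writing $\alpha := 2(d'-1)\epsloc$ and combining with the inductive bounds,
\begin{equation*}
\weight(\td{\Ff}) \leq \weight(\tdFheavy) + (1+\alpha)\bigl(\td{W} - \weight(\tdFheavy) + \epsloc\td{W}\bigr) \leq (1+\alpha)(1+\epsloc)\td{W},
\end{equation*}
where the second inequality uses $\alpha \geq 0$ to drop the $-\alpha\weight(\tdFheavy)$ term. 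This is at most $(1 + 2d'\epsloc)\td{W} = (1 + \alpha + 2\epsloc)\td{W}$ iff $\alpha\epsloc \leq \epsloc$, i.e., $\alpha \leq 1$; with the choice $\epsloc = \eps/(2\dmax)$ and $\eps < 1/20$, indeed $\alpha \leq 2\dmax\epsloc = \eps < 1$. This constraint is precisely why $\epsloc$ is taken four times smaller and $\dmax$ twice larger than in the MWISO argument: each recursion level now contributes a multiplicative $(1+\epsloc)$ factor from the doubly-paid vertices in $\Dd_1 \cap \Dd_2$, rather than the purely additive loss in Section~\ref{sec:mwis}.
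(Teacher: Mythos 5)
Your proof is correct and follows essentially the same structure as the paper's: the same base case, the same guessing of heavy vertices and of the quadruple from Lemma~\ref{lem:breaking-ds}, the same propagation of the covering property, and the same bound $(1+2(d'-1)\epsloc)(1+\epsloc) \leq 1+2d'\epsloc$, just expanded slightly differently (your condition $\alpha\leq 1$ is exactly the paper's $2(d'-1)\epsloc^2\leq\epsloc$). The only immaterial deviations are the strict-vs.-nonstrict threshold in the definition of \emph{heavy} and that you use the inequality $\weight(\tdFopt'\cap(\Dd_1\cup\Dd_2))\leq\weight(\tdFopt')$ where the paper tacitly has equality; both give the same final bound.
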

\begin{proof}
We prove the claim by induction on $d$, starting with the case $d\geq \dmax$ and then proceeding with decreasing $d$.
When $d\geq \dmax$, we have
$$(19/20)^d\leq (1-1/20)^{20\ln(MN)}<e^{-\ln(MN)}=\frac{1}{MN}.$$
Since $\Fopt$ consists of at most $N$ vertices of weight less than $M$ each, we have $W<MN$.
Therefore 
$$\weight(\Fopt\cap \td{\Dd})\leq (19/20)^d\cdot W<\frac{1}{MN}\cdot MN=1.$$
As each vertex of $\Dd$ has weight at least $1$, we infer that $\Fopt\cap \td{\Dd}=\emptyset$. This means that $\td{\Cc}$ is empty as well and the algorithm correctly outputs $\emptyset$ as a solution.

Suppose now that $d<\dmax$. Denote $\tdFopt=\Fopt\cap \td{D}$ and 
let $\td{W}=\weight(\tdFopt)$; by assumption, $\td{W}\leq (19/20)^d\cdot W$.
Call a vertex $p\in \tdFopt$ {\em{heavy}} if 
$\weight(p)\geq s^{-2}\td{W}$ and let $\tdFheavy\subseteq \tdFopt$ be the set of heavy vertices in $\tdFopt$. Observe that $|\tdFheavy|\leq s^2$,
so in the first loop of the algorithm there is an iteration in which we correctly fix the set $\tdFheavy$. We continue the reasoning under this assumption.
As in the description, let $\td{\Dd}'$ be $\td{\Dd}$ with all vertices from $\tdFheavy$ removed,
and let $\td{\Cc}'$ be $\td{\Cc}$ with all vertices $r$-covered by $\tdFheavy$ removed.

Now let $\tdFopt'=\tdFopt\setminus \tdFheavy$ and observe that $\tdFopt'$ $r$-covers $\Cc'$. 
Since $\weight(p)<s^{-2}\td{W}$ for all $p\in \tdFopt'$, we may apply Lemma~\ref{lem:breaking-ds} to $\tdFopt'$, $\Cc'$,
and $\td{W}$ as the upper bound on the weight of $\tdFopt'$.
Thus we conclude that the enumerated family $\Xfam$ contains at least one quadruple $(\td{\Dd}_1,\td{\Dd}_2,\td{\Cc}_1,\td{\Cc}_2)$ 
satisfying the following:
\begin{enumerate}[label=(\roman*),ref=(\roman*)]
\item\label{p:sum} $\weight(\tdFopt\cap \td{\Dd}_1\cap \td{\Dd}_2)\leq \epsloc\td{W}$;
\item\label{p:pro} $\weight(\tdFopt\cap \td{\Dd}_1)\leq \frac{19}{20}\td{W}$ and $\weight(\tdFopt\cap \td{\Dd}_2)\leq \frac{19}{20}\td{W}$; and
\item\label{p:cov} $\tdFopt\cap \td{\Dd}_1$ $r$-covers $\Cc_1$ and $\tdFopt\cap \td{\Dd}_2$ $r$-covers $\Cc_2$.
\end{enumerate}
In particular, in one iteration of the second loop the algorithm correctly fixes such a quadruple and proceeds with it. 
From now on we continue under this assumption. 

Having fixed $\tdFheavy$ and $(\td{\Dd}_1,\td{\Dd}_2,\td{\Cc}_1,\td{\Cc}_2)$, 
the algorithm recursively calls itself on instances $(G,\td{\Dd}_1,\td{\Cc}_1,r)$ and $(G,\td{\Dd}_2,\td{\Cc}_2,r)$,
yielding families $\td{\Ff}_1\subseteq \td{\Dd}_1$ and $\td{\Ff}_2\subseteq \td{\Dd}_2$.
By property~\ref{p:pro}, we have that 
$$\weight(\Fopt\cap\td{\Dd}_i)\leq (19/20)\cdot \weight(\Fopt\cap \td{\Dd})\leq (19/20)^{d+1}\cdot W,$$
for $i\in \{1,2\}$. As the recursive subcalls to instances $(G,\td{\Dd}_i,\td{\Cc}_i,r)$ are at level $d+1$ in the recursion tree, and due to property~\ref{p:cov},
by the induction assumption we infer that $\td{\Ff}_i\neq \bot$ and
\begin{equation}\label{eq:recursion-ds}
\weight(\td{\Ff}_i)\leq (1+2(d'-1)\epsloc)\cdot \weight(\Fopt\cap \td{\Dd}_i),
\end{equation}
for $i\in \{1,2\}$. 
By property~\ref{p:sum}, we have that
\begin{eqnarray}\label{eq:loss-ds}
&   & \weight(\tdFheavy)+\weight(\tdFopt\cap \td{\Dd}_1)+\weight(\tdFopt\cap \td{\Dd}_2)\nonumber \\
& = & \weight(\tdFheavy)+\weight(\tdFopt\cap \td{\Dd}')+\weight(\tdFopt\cap \td{\Dd}_1\cap \td{\Dd}_2)\nonumber \\
& = & \weight(\tdFopt)+\weight(\tdFopt\cap \td{\Dd}_1\cap \td{\Dd}_2)\nonumber\\
& \leq & \td{W}+\epsloc\td{W}= (1+\epsloc)\td{W}.
\end{eqnarray}
Recall that the algorithm returns a solution of weight not larger than the weight of $\td{\Ff}=\tdFheavy\cup \td{\Ff}_1\cup \td{\Ff}_2$.
By combining~\eqref{eq:recursion-ds} with~\eqref{eq:loss-ds} we have
\begin{eqnarray*}
\weight(\td{\Ff}) & \leq  & \weight(\tdFheavy)+\weight(\td{\Ff}_1)+\weight(\td{\Ff}_2) \\
                  & \leq  & \weight(\tdFheavy)+(1+2(d'-1)\epsloc)\cdot\left(\weight(\Fopt\cap \td{\Dd}_1)+\weight(\Fopt\cap \td{\Dd}_2)\right) \\
                  & \leq  & (1+2(d'-1)\epsloc)\cdot \left(\weight(\tdFheavy)+\weight(\tdFopt\cap \td{\Dd}_1)+\weight(\tdFopt\cap \td{\Dd}_2)\right) \\
                  & \leq  & (1+2(d'-1)\epsloc)(1+\epsloc)\td{W}\leq (1+2d'\epsloc)\td{W},
\end{eqnarray*}
where the last inequality follows from $2(d'-1)\epsloc^2\leq \epsloc$, which is true for $d'\leq \dmax\leq \frac{1}{40\epsloc}$.
This concludes the proof.
\cqed\end{proof}

This finishes the proof of Theorem~\ref{thm:main-mwds}.

\subparagraph*{Acknowledgements.} We thank D\'aniel Marx for insightful discussions on the approach to optimization problems in geometric and planar settings via Voronoi diagrams.

\bibliographystyle{plainurl}
\bibliography{citations}

\appendix

\section{Geometric problems}\label{app:geometric}
\newcommand{\crGraph}[3]{#1\langle #2,#3\rangle}

In the {\sc{Maximum Weight Independent Set of Polygons}} ({\sc{MWISP}}) problems we are given a family $\Pp$ of polygons in the plane, each with a prescribed weight, 
and the task is to find a maximum-weight subset of pairwise disjoint polygons.
In the {\sc{Weighted Geometric Set Cover}} problem we are given a family $\Pp$ of subsets of the plane,
each with a prescribed weight, and a set $\Uu$ of points in the plane. The goal is to find a minimum-weight subfamily of $\Pp$ whose union covers $\Uu$.
A QPTAS for {\sc{MWISP}} with running time $2^{\poly(1/\eps,\log N)}\cdot n^{\Oh(1)}$, where $N=|\Pp|$ and $n$ is the total number of vertices of the input polygons, was given by Har-Peled~\cite{Har-Peled2014}.
A QPTAS for {\sc{WGSR}} under the assumption that $\Pp$ is a family of {\em{pseudo-disks}} (i.e. compact, simply connected subsets of the plane such that
the boundaries of each two meet in at most two points) with running time $2^{\poly(1/\eps,n+N)}$, where $n=|\Uu|$ and $N=|\Pp|$, was given by Mustafa et al.~\cite{MustafaRR15}.

We now prove that the abovementioned results follow from Theorems~\ref{thm:main-mwis} and~\ref{thm:main-mwds} via simple reductions from the geometric to the planar setting;
the exception is that for {\sc{WGSR}} we can tackle only unit disks and unit squares instead of general pseudo-disks.
These reductions were already observed by Marx and Pilipczuk in~\cite{MarxP15}, who used them for the parameterized variants of the problems. We give them for completeness.

\begin{corollary}\label{cor:mwisp}
The {\sc{Maximum Weight Independent Set of Polygons}} problem admits a QPTAS with running time $2^{\poly(1/\eps,\log N)}\cdot n^{\Oh(1)}$, 
where $N$ is the number of polygons and $n$ is the total number of vertices of those polygons.
\end{corollary}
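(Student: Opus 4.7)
The plan is to reduce MWISP to MWISO on a planar graph and then invoke Theorem~\ref{thm:main-mwis}. Let $\Pp$ be the input family of $N$ polygons with $n$ total vertices in the plane.

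First I would build a plane graph $G$ as the arrangement of all boundaries $\partial P$, $P\in\Pp$: the vertices of $G$ are the polygon vertices together with all pairwise crossing points between edges of distinct polygons, and the edges of $G$ are the maximal subarcs of polygon boundaries between two consecutive such vertices. This gives a plane graph with $\Oh(n^2)$ vertices, embedded on the sphere after adding a point at infinity.

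Next, for each $P\in \Pp$ I would define the object $D_P$ to be the subgraph of $G$ induced on the set of vertices of $G$ lying in the closed region $\overline P$, assigning it the weight of $P$. I would then verify two properties of this construction. First, each $D_P$ is nonempty (it contains the vertices of $P$) and connected: $\overline P$ is a closed disk in the plane, the arrangement restricted to $\overline P$ coincides with $D_P$, and any two of its vertices can be joined by a walk along arrangement edges that stays in $\overline P$. Second, $V(D_P)\cap V(D_{P'})\neq \emptyset$ iff $P\cap P'\neq \emptyset$: if $\partial P$ and $\partial P'$ cross, then a crossing point is an arrangement vertex in both $D_P$ and $D_{P'}$; if they do not cross but $P\cap P'\neq \emptyset$ then one polygon is nested in the other and the vertices of the inner one lie in both $\overline{P}$ and $\overline{P'}$; conversely, any shared vertex witnesses $\overline{P}\cap \overline{P'}\neq \emptyset$.

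Consequently, independent subfamilies of $\Pp$ are in weight-preserving bijection with independent subfamilies of $\Dd=\{D_P\colon P\in \Pp\}$ in $\IntGraph(\Dd)$. Running the QPTAS of Theorem~\ref{thm:main-mwis} on $(G,\Dd)$ with accuracy parameter $\eps$ yields a $(1+\eps)$-approximation in time $2^{\poly(1/\eps,\log N)}\cdot |V(G)|^{\Oh(1)}=2^{\poly(1/\eps,\log N)}\cdot n^{\Oh(1)}$, as needed. The main thing requiring care will be the handling of degenerate configurations --- polygons touching tangentially, a vertex of one polygon lying on an edge of another, or collinear overlapping edges --- but all such coincidences can be uniformly absorbed by adding the corresponding points to the vertex set of the arrangement, so that every shared boundary point contributes an arrangement vertex witnessing the intersection at the graph level.
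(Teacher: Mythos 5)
Your reduction is in the same spirit as the paper's (build a planar graph from the geometry, turn each polygon into the induced subgraph on the vertices it contains, apply Theorem~\ref{thm:main-mwis}), but the graph you build is too sparse, and this creates a genuine gap: the objects $D_P$ you define need not be connected, whereas Theorem~\ref{thm:main-mwis} requires each object to be a \emph{connected} subgraph.

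Concretely, take $P$ a large square and $Q$ a small triangle lying strictly in the interior of $P$, with no other polygon crossing either of them. In the arrangement of polygon boundaries, the only vertices inside $\overline P$ are the four corners of $P$ and the three corners of $Q$, and the only arrangement edges among them are the four edges of $\partial P$ and the three edges of $\partial Q$. Hence $D_P$ consists of two disjoint cycles and is disconnected. Your justification (``any two of its vertices can be joined by a walk along arrangement edges that stays in $\overline P$'') is exactly the assertion that fails here: the arrangement restricted to $\overline P$ can have several components. The paper sidesteps this by using a denser graph: it takes $X$ to be the set of all polygon vertices and forms the crossing graph $\crGraph{G}{X}{d_\infty}$, in which a straight segment is drawn between \emph{every} pair of points of $X$ (not merely along polygon edges), with vertices at all pairwise crossings. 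In that graph, any vertex lying inside $P$ is connected within $\overline P$ to $\partial P$: follow the segment it lies on towards an $X$-endpoint and you reach either an $X$-vertex inside $P$ or a crossing with $\partial P$, both of which are vertices in $\overline P$; from an $X$-vertex $a$ inside $P$, the segment $a b$ to any corner $b$ of $P$ again gives a path hitting $\partial P$ within $\overline P$. Since $\partial P$ is a cycle of $G$, the whole object is connected. This costs $\Oh(n^4)$ vertices instead of your $\Oh(n^2)$, but that is harmless for the stated running time. If you want to keep your leaner arrangement, you would need to augment it (e.g.\ add, for each polygon, a segment from one of its vertices to some vertex of every polygon nested inside it) to restore connectivity — as written, the construction does not produce a valid \textsc{MWISO} instance.
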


\begin{corollary}\label{cor:geom-setcover}
The {\sc{Weighted Geometric Set Cover}} problems for unit disks and axis-parallel unit squares admits a QPTAS with running time $2^{\poly(1/\eps,\log N)}\cdot n^{\Oh(1)}$, 
where $N$ is the number of disks/squares on input and $n$ is the number of points to be covered.
\end{corollary}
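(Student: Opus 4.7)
\textbf{Proof proposal for Corollary~\ref{cor:geom-setcover}.}
The plan is to reduce the geometric \textsc{WGSR} instance (with unit disks or axis-parallel unit squares) to a \textsc{MWDSC} instance on a planar graph of polynomial size, and then invoke Theorem~\ref{thm:main-mwds}. This follows the same path that Marx and Pilipczuk~\cite{MarxP15} used for the parameterized variants of the problem, transcribed into the approximation context.

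Given an input $(\Pp,\Uu)$ with $|\Pp|=N$ objects and $|\Uu|=n$ points, I would construct an edge-weighted planar graph $G$ with distinguished vertex sets $\Dd,\Cc\subseteq V(G)$ (with the weights of $\Dd$ inherited from $\Pp$) and a radius $r>0$ such that (i) each $P\in\Pp$ is represented by a vertex $v_P\in\Dd$ of the same weight and each $u\in\Uu$ by a vertex $v_u\in\Cc$, and (ii) $\dist_G(v_P,v_u)\le r$ holds if and only if $u\in P$. With such a $G$ in hand, feasible (and optimum) solutions of the reduced instance $(G,\Dd,\Cc,r)$ correspond bijectively to feasible (and optimum) solutions of the original \textsc{WGSR} instance with the same weight, so a $(1+\eps)$-approximation returned by Theorem~\ref{thm:main-mwds} transfers back unchanged, yielding a QPTAS of running time $2^{\poly(1/\eps,\log N)}\cdot n^{\Oh(1)}$.

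The construction of $G$ goes through the planar arrangement $\mathcal{A}$ of the boundaries of all objects in $\Pp$ together with the points of $\Uu$ inserted as extra vertices; for unit disks $\mathcal{A}$ has $\Oh((N+n)^2)$ combinatorial complexity, and for axis-parallel unit squares it is rectilinear and even simpler. Each face $f$ of $\mathcal{A}$ carries a fixed label $S(f)\subseteq\Pp$ consisting of the objects that contain $f$, so the containment relation $u\in P$ reduces to "the face containing $u$ has $P\in S(f)$". To encode this distance-theoretically, I would augment $\mathcal{A}$ by per-face planar gadgets connecting each $v_P$ to the faces with $P\in S(f)$ through weight-$0$ edges, together with boundary arcs of $\mathcal{A}$ assigned weights strictly larger than $r$, so that any low-cost path from $v_P$ stays within one connected union of faces labelled by $P$ and every such $v_u$ is reached at cost $0\le r$, while reaching a $v_u$ with $u\notin P$ forces crossing a boundary arc and incurs cost exceeding $r$.

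The principal obstacle I foresee is routing the per-object gadgets planarly, because overlapping objects share faces of $\mathcal{A}$ and naive per-object stars would intersect. I would resolve this by placing, for each face $f$ and each $P\in S(f)$, a separate hub vertex $h_f^P$ in the interior of $f$, connecting $h_f^P$ to $h_{f'}^P$ by weight-$0$ edges whenever $f,f'$ are adjacent faces both contained in $P$ (which is possible planarly because the subgraph formed by these hubs and connections lives inside $P$'s interior, which is topologically a disk, respectively a square), and finally attaching $v_P$ by a single weight-$0$ edge to some hub $h_{f_P}^P$ inside $P$ and each $v_u$ by a single weight-$0$ edge to every $h_{f(u)}^P$ with $P\in S(f(u))$. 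The arc-connectedness of each $P$ then guarantees that the weight-$0$ subgraph associated with $P$ is connected, which verifies the required distance equivalence. The remaining bookkeeping---verifying planarity, bounding $|V(G)|$ by $\poly(N+n)$, and checking both directions of the distance inequality---is routine case analysis exploiting the unit-radius (or unit-square) assumption.
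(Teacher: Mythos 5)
Your high-level plan --- reduce the geometric instance to \textsc{MWDSC} on a polynomial-size planar graph and invoke Theorem~\ref{thm:main-mwds} --- is the same as the paper's, but the paper's reduction is much simpler and your gadget construction has a concrete flaw. The paper observes that a unit disk covers a point $u$ iff $u$ lies within a fixed $\ell_2$-distance of the disk's center (and likewise for axis-parallel unit squares with $\ell_\infty$), so all one needs is to realize the plane metric restricted to the finite set $X$ of centers and query points as shortest-path distances in a planar graph. This is done with a \emph{crossing graph}: overlay all $\binom{|X|}{2}$ straight segments between points of $X$, take the planar arrangement (at most $|X|^4$ vertices), and weight each arrangement edge by its metric length; by the triangle inequality the shortest $G$-path between any $a,b\in X$ has length exactly $d(a,b)$. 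Then $\Dd$ is the set of centers, $\Cc=\Uu$, $r$ is the disk/square radius, and covering is literally preserved. This metric-ball viewpoint is also exactly why the corollary is stated only for unit disks and unit squares rather than general pseudo-disks.

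Your construction instead tries to encode set containment combinatorially via weight-$0$ in-object gadgets and boundary arcs of weight $>r$. As written this breaks. You attach $v_u$ by a weight-$0$ edge to $h^P_{f(u)}$ for \emph{every} object $P$ containing $u$, so $v_u$ becomes a zero-cost bridge between the gadgets of all such $P$: if $u\in P\cap P'$ and $u'\in P'\setminus P$, then there is a cost-$0$ path $v_P \to (\text{$P$-gadget}) \to v_u \to (\text{$P'$-gadget}) \to v_{u'}$, so $\dist_G(v_P,v_{u'})=0\le r$ even though $u'\notin P$. Your claim that ``reaching a $v_u$ with $u\notin P$ forces crossing a boundary arc'' is therefore false. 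Separately, the Voronoi-diagram machinery underlying Theorem~\ref{thm:main-mwds} assumes \emph{positive} edge weights and pairwise distinct distances (see the setup at the start of Section~\ref{sec:voronoi}), so a graph riddled with weight-$0$ edges violates the hypotheses of the theorem outright. Neither issue is ``routine bookkeeping'': fixing them would require either the paper's metric-realization route or a genuinely new calibration of small positive weights together with a re-derivation of the distance equivalence, which your sketch does not supply.
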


Before we proceed to proving the statements above, let us first give a general-use definition of the {\em{crossing graph}} of a point set in the plane; see Figure~\ref{fig:crgraph}.
Consider a finite set of points $X$ in the plane and a metric $d(\cdot,\cdot)$ on the plane induced by some norm (i.e. $d(x,y)=\|x-y\|$ for some norm $\|\cdot\|$).
By $\crGraph{G}{X}{d}$ we denote a weighted graph defined as follows.
For every pair of distinct points $a,b\in X$, draw the segment with endpoints $a$ and $b$ in the plane. Let $Y\supseteq X$ be the set consisting of $X$ and all intersections of all the segments drawn;
then set $Y$ to be the vertex set of $\crGraph{G}{X}{d}$. Two points $x,y\in Y$ are connected by an edge in $\crGraph{G}{X}{d}$
if they are two consecutive points from $Y$ on any of the drawn segments; that is, the segment connecting
$x$ and $y$ is contained in one of the segments connecting pairs of vertices from $X$, and there is no other point from $Y$ inside this segment.
The weight of this edge is set to $d(x,y)$.
Note that $\crGraph{G}{X}{d}$ has at most $|X|^4$ vertices, is planar with a straight-line embedding described above, and for every two vertices $a,b\in X$ we have $\dist_{\crGraph{G}{X}{d}}(a,b)=d(a,b)$.

\begin{figure}[h]
                \centering
                \def\svgwidth{0.8\columnwidth}
                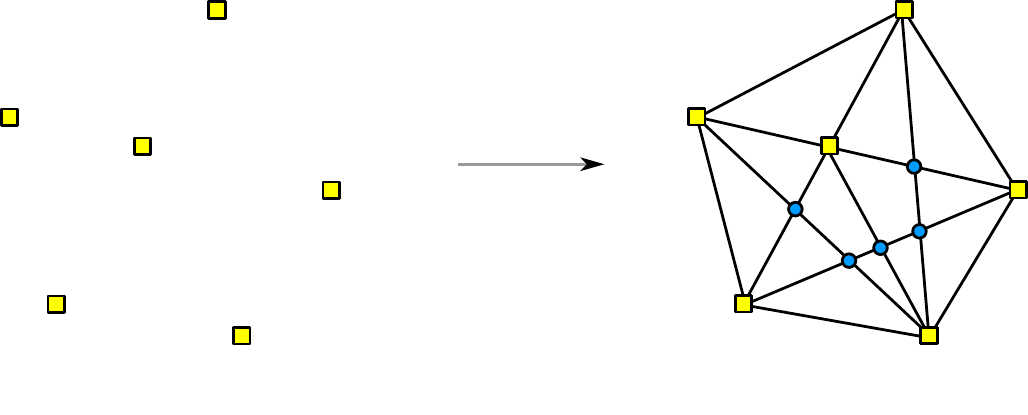
\caption{A point set $X$ and the corresponding crossing graph $\crGraph{G}{X}{d}$, for an unspecified metric~$d$. Note that two triples of points in $X$ are collinear.}\label{fig:crgraph}
\end{figure}

For further reference, let $d_2(\cdot,\cdot)$ and $d_{\infty}(\cdot,\cdot)$ denote the metrics on the plane induced by the $\ell_2$- and $\ell_\infty$-norm, respectively.
We now prove Corollaries~\ref{cor:mwisp} and~\ref{cor:geom-setcover}.

\begin{proof}[Proof of Corollary~\ref{cor:mwisp}]
Let $\cal P$ be the given weighted set of polygons and let $X$ be the set of all their vertices; then $|X|=n$.
Consider the graph $G=\crGraph{G}{X}{d_\infty}$; note that $G$ has at most $n^4$ vertices.
For every polygon $P\in \cal P$ construct a corresponding object $p$ in $G$ defined as follows: $p$ is the subgraph of $G$ induced by all the vertices of $G$ whose embeddings are contained in $P$.
It is easy to see that $p$ defined in this manner is connected. The weight of $p$ is equal to the weight of $P$. Let $\Dd$ be the family comprising all the constructed objects.
It readily follows that sets of pairwise disjoint polygons from $\cal P$ are in one-to-one correspondence with independent subfamilies of $\Dd$, hence it suffices to apply the algorithm of Theorem~\ref{thm:main-mwis}
to the instance $(G,\Dd)$.
\end{proof}

\begin{proof}[Proof of Corollary~\ref{cor:geom-setcover}]
Let us first concentrate on the case of unit disks. Let the input be $(\Pp,\Uu)$, where $\Pp$ is a weighted set of unit disks and $\Uu$ is a set of points to be covered.
Let $X$ be the set consisting of all the centers of disks from $\Pp$ and all the points from $\Uu$; then $|X|\leq N+n$.
Consider the graph $G=\crGraph{G}{X}{d_2}$; note that $G$ has at most $(N+n)^4$ vertices.
Construct a weighted set of points $\Dd$ as follows: for every disk $P\in \cal P$, add the center of $P$ to $\Dd$ and assign it weight equal to the weight of $P$.
It can be easily seen that a subset of disks ${\cal Q}\subseteq \Pp$ covers all the points from $\Uu$ if and only if the centers of disks from ${\cal Q}$ cover $\Uu$ in $G$, 
where we consider domination radius $\frac{1}{2}$. Hence it suffices to apply the algorithm of Theorem~\ref{thm:main-mwds} to the instance $(G,\Dd,\Uu,\frac{1}{2})$.

For the case of unit squares we may follow the same reasoning except we use metric $d_\infty(\cdot,\cdot)$ instead of $d_2(\cdot,\cdot)$.
\end{proof}

\end{document}